\def\E{ {\cal E} }
\def\S{ {\cal S} }
\def\T{ {\cal T} }
\def\>{\rangle}
\def\<{\langle}
\newcommand{\bra}[1]{\langle {#1} |}
\newcommand{\ket}[1]{| {#1} \rangle}
\newcommand{\ketbra}[2]{\ensuremath{\left|#1\right\rangle\!\!\left\langle#2\right|}}
\newcommand{\tr}[1]{\mathrm{Tr}\left( #1 \right)}
\newcommand{\iden}{\mathbbm{1}}
\renewcommand{\v}[1]{\ensuremath{\boldsymbol #1}}
\newcommand{\rd}[1]{{\color{black}#1}}
\theoremstyle{plain}
\newtheorem{thm}{Theorem}
\newtheorem{lem}[thm]{Lemma}
\newtheorem{cor}[thm]{Corollary}
\newtheorem{obs}[thm]{Observation}
\newtheorem{defn}{Definition}
\begin{document}

\title{Geometric structure of thermal cones}
\date{\today}
\author{A. de Oliveira Junior}
\affiliation{Faculty of Physics, Astronomy and Applied Computer Science, Jagiellonian University, 30-348 Kraków, Poland.}
\author{Jakub Czartowski}
\affiliation{Faculty of Physics, Astronomy and Applied Computer Science, Jagiellonian University, 30-348 Kraków, Poland.}
\author{Karol {\. Z}yczkowski}
\affiliation{Faculty of Physics, Astronomy and Applied Computer Science, Jagiellonian University, 30-348 Kraków, Poland.}
\affiliation{Center for Theoretical Physics of the Polish Academy of Sciences Al. Lotników 32/46 02-668 Warsaw, Poland}
\author{Kamil Korzekwa}
\affiliation{Faculty of Physics, Astronomy and Applied Computer Science, Jagiellonian University, 30-348 Kraków, Poland.}

\begin{abstract}
The second law of thermodynamics imposes a fundamental asymmetry in the flow of events. The so-called thermodynamic arrow of time introduces an ordering that divides the system's state space into past, future and incomparable regions. In this work, we analyse the structure of the resulting thermal cones, i.e., sets of states that a given state can thermodynamically evolve to (the future thermal cone) or evolve from (the past thermal cone). Specifically, for a $d$-dimensional classical state of a system interacting with a heat bath, we find explicit construction of the past thermal cone and the incomparable region. Moreover, we provide a detailed analysis of their behaviour based on thermodynamic monotones given by the volumes of thermal cones. Results obtained apply also to other majorisation-based resource theories (such as that of entanglement and coherence), since the partial ordering describing allowed state transformations is then the opposite of the thermodynamic order in the infinite temperature limit. \rd{Finally, we also generalise the construction of thermal cones to account for probabilistic transformations.} 
\end{abstract}

\maketitle


\section{Introduction}
\label{Sec:Introduction}

Thermodynamic evolution of physical systems obeys a fundamental asymmetry imposed by nature. Known as the thermodynamic arrow of time~\cite{eddington1928nature}, it is a direct manifestation of the second law of thermodynamics, which states that the entropy of an isolated system cannot decrease~\cite{fermi1956thermodynamics, callen1985thermodynamics}. In other words, the thermodynamic evolution inherently distinguishes the past from the future: systems spontaneously evolve to future equilibrium states, but do not spontaneously evolve away from them. Even though recognition of the thermodynamic arrow of time is an old discussion~\cite{boltzmann1895certain,zermelo1896satz}, it still raises deep questions relevant both to philosophy and the foundations of physics~\cite{prigogine2000arrow, price2004origins}. Despite many attempts, the full understanding of the time asymmetry in thermodynamics seems to be still beyond our reach. In recent years, we have seen renewed interest in exploring the thermodynamics of a few quanta, which motivated the development of a powerful theoretical toolkit~\cite{Goold2016, Deffner2019,Landi2021,vom2019d} that allows one to revisit old questions.

The toolkit in question, providing a robust approach to study thermodynamics of small systems, is given by the resource-theoretic framework~\cite{Janzing2000,horodecki2013quantumness, brandao2015second,Lostaglio2019}. It not only recovers the macroscopic results, but also presents a suitable platform to address relevant problems within the field of
thermodynamics, including a rigorous derivation of the third law~\cite{Wilming2017} and study of cooling mechanisms~\cite{Lostaglio2018elementarythermal,Lostaglio2019}. Furthermore, it provides a perfect setup for modelling thermalisation in many-body physics~\cite{Sparaciari2021}. Regarding the thermodynamic arrow of time, a resource-theoretic analysis was performed in Ref.~\cite{Korzekwa2017}, where the author investigated this problem from the point of view of order theory. However, Ref.~\cite{Korzekwa2017} focused on structural differences between classical and quantum theories in contrast to the geometric aspects of thermal cones that we investigate here.

In this work we aim at characterising the thermodynamic arrow of time by investigating allowed transformations between energy-incoherent states that arise from the most general energy-conserving interaction between the system and a thermal bath. These transformations encode the structure of the thermodynamic arrow of time by telling us which states can be reached from a given state, here called the \emph{present state}, in accordance with the laws of thermodynamics. Under these constraints, the state space can then be naturally decomposed into three parts: the set of states to which the present state can evolve is called the \emph{future thermal cone}; the set of states that can evolve to the present state is called the \emph{past thermal cone}; while states that are neither in the past nor the future thermal cone form the \emph{incomparable region}.

Although substantial insights have been obtained by studying the future thermal cone~\cite{Lostaglio2018elementarythermal,mazurek2018decomposability,mazurek2019thermal}, explicit characterisation of the incomparable region and the past thermal cone has not been performed. Our main results consist of two theorems addressing this gap. The first one yields a geometric characterisation of the past majorisation cone (i.e., the set of probability distributions majorising a given distribution) by means of explicit construction, which, together with the incomparable region and future thermal cone, fully specifies the time-like ordering in the probability simplex in the limit of infinite temperature. The second result, derived using a novel tool of an embedding lattice, generalises the first one to the case of finite temperatures. \rd{Similar results are also obtained for probabilistic scenarios~\cite{vidal1999entanglement}.} The approach presented here allows us to conduct a rigorous study of the causal structure of the thermal cones. 
It is known from earlier works that the future (thermal) cone is convex~\cite{bhatia1996matrix,Lostaglio2018elementarythermal}; our results extend this knowledge by showing rigorously that the past (thermal) cone can always be decomposed into $d!$ convex parts and, in the zero-temperature limit, only one of them retains a non-zero volume, rendering the entire past thermal cone convex. We also introduce new thermodynamic monotones given by the volumes of the past and future thermal cones.

Our results can also be seen as an extension of the famous Hardy-Littlewood-P{\'o}lya theorem~\cite{hardy1952inequalities}, as they specify the past cone and the incomparable region in addition to the previously studied future cone. Therefore, they can also be employed to study other majorisation-based resource theories, such as the theory of entanglement~\cite{nielsen1999conditions} or coherence~\cite{Plenio2014, Du2015}. Concerning local operations, an analogy between special relativity and the set of pure states of bi-partite systems was previously made in Ref.~\cite{zyczkowski2002}, where the authors correspondingly divided the state space into three parts. Here, we consider a more general partial-order structure, the thermomajorisation order, which generalises the previous and recovers it in the limit of infinite temperature.

The paper is organised as follows. We set the scene in Sec.~\ref{Sec_Setting-the-scene} by recalling the resource-theoretic approach to thermodynamics and introducing the necessary concepts. We also collect there known results concerning the conditions for state transformation under majorisation and thermomajorisation. In Sec.~\ref{Sec:Majorisation-cones}, we state the first of our main results concerning the construction of majorisation cones and discuss its interpretation within the thermodynamic setting and in other majorisation-based theories. \rd{This construction is also generalised for probabilistic transformations}. Sec.~\ref{Sec:Thermal-cones} is devoted to the analogous study of the thermal cones, generated by thermomajorisation relation, where we also introduce the  tool of embedding lattice, instrumental for the proof of the second main result. Sec.~\ref{Sec_volume_thermal_cones}, we introduce thermodynamic monotones given by the volumes of the past and future thermal cone, discuss their intuitive operational interpretation and describe their properties. \rd{We also comment on the different natures of future and past cones for entanglement transformations.} Finally, we conclude with an outlook in Sec.~\ref{Sec:Outlook}. The technical derivation of the main results can be found in Appendix~\ref{sec_appendix}. \rd{In Appendix~\ref{app:probal_deriv}, we derive the analogue of the majorisation cones for probabilistic transformations. Next, Appendix~\ref{app:entanglement_vols} presents methods used to obtain insights into the volumes of majorisation cones in the context of quantum entanglement. Appendix~\ref{app:coherent_thermal} shows how to obtain the coherent equivalents of the future and past cones together with the incomparable regions for coherent states under thermal operations~\cite{horodecki2013fundamental} and Gibbs-preserving operations~\cite{Faist_2015}.}


\section{Setting the scene} \label{Sec_Setting-the-scene}


\subsection{Thermodynamic evolution of energy-incoherent states}

In what follows, we study the thermodynamics of finite-dimensional quantum systems in the presence of a heat bath at temperature $T$. The system under investigation is described by a Hamiltonian \mbox{$H = \sum_i E_i \ketbra{E_i}{E_i}$}, and it is prepared in a state $\rho$; while the heat bath, with a Hamiltonian $H_E$, is in a thermal equilibrium state,
\begin{equation}
\label{eq_gibbs_state}
    \gamma_E=\frac{e^{-\beta H_E}}{\tr{e^{-\beta H_E}}},
\end{equation}
where $\beta=1/k_B T$ is the inverse temperature with $k_B$ denoting the Boltzmann constant. The evolution of the joint system is assumed to be closed, and therefore described by a unitary operator $U$, which is additionally enforced to conserve the total energy,
\begin{equation}
\label{eq_energy_conservation}
    [U, H\otimes \iden_E+ \iden\otimes H_E] = 0.
\end{equation} 

More formally, the set of allowed thermodynamic transformations is modelled by \emph{thermal operations}~\cite{Janzing2000, brandao2015second, horodecki2013fundamental}, which consists of completely positive trace-preserving (CPTP) maps that act on a system $\rho$ with Hamiltonian $H$ as
\begin{equation}
    \label{eq:thermal_ops}
    \E(\rho)=\textrm{Tr}_E\left(U\left(\rho\otimes\gamma_E\right)U^{\dagger}\right),
\end{equation}
with $U$ satisfying Eq.~\eqref{eq_energy_conservation} and the state $\gamma_E$ given by Eq.~\eqref{eq_gibbs_state} with an arbitrary Hamiltonian $H_E$. Note that the energy conservation condition, Eq.~\eqref{eq_energy_conservation}, can be interpreted as encoding the first law of thermodynamics. Moreover, the fact that the heat bath is in thermal equilibrium implies that every thermal operation $\E$ is a map which preserves the Gibbs state,
\begin{equation}
    \label{eq_Gibbs_fixedpoint}
\E(\gamma)=\gamma , 
\end{equation}
with $\gamma$ being the thermal Gibbs state of the system given by Eq.~\eqref{eq_gibbs_state} with $H_E$ replaced by $H$. Thus, Eq.~\eqref{eq_Gibbs_fixedpoint} combined with a contractive distance measure $\delta$ resulting in \mbox{$\delta(\rho,\gamma)\geq\delta(\E(\rho),\E(\gamma))=\delta(\E(\rho),\gamma)$}, incorporates the core physical principle of the second law of thermodynamics as it captures the idea of evolution towards thermal equilibrium.

In this work we focus on \emph{energy-incoherent states}, i.e., quantum states $\rho$ that commute with the Hamiltonian~$H$. \rd{This restriction can be justified in the quasi-classical regime, in which systems are quantised and possess only a small number of energy levels, but the decoherence is so strong that the interference effects between the different energy levels become negligible. Energy-incoherent} states can be equivalently represented by a $d$-dimensional probability vectors $\v p$ of their eigenvalues, which coincide with populations in the energy eigenbasis.  Note that $\gamma$ commutes with $H$ and thus can be represented by a probability vector

\begin{equation}
\label{eq_thermal-distribution}
    \v \gamma = \frac{1}{Z}\left(e^{-\beta E_1},..., e^{-\beta E_d}\right) \quad \text{where} \quad Z = \sum_i e^{-\beta E_i} .
\end{equation}
\rd{Here, one should note the difference between the state $\gamma$ and the probability vector $\v \gamma$.} The most general evolution between two probability vectors, $\v{p}$ and $\v{q}$, is described by a stochastic matrix~$\Lambda$, where $\Lambda$ satisfies $\Lambda_{ij} \geq 0$ and $\sum_i \Lambda_{ij}=1$. In the thermodynamic context, however, we may restrict the analysis to stochastic matrices preserving the Gibbs state,
i.e., to $\Lambda$ such that $\Lambda \v{\gamma}=\v{\gamma}$, in short called Gibbs-preserving (GP) matrices in the literature~ \cite{horodecki2013fundamental,Faist_2015}. 
This is because of the following theorem linking the existence of a thermal operation between incoherent states to the existence of a Gibbs-preserving stochastic matrix between probability distributions representing these states~\cite{Janzing2000,horodecki2013fundamental,Korzekwathesis}.
\begin{thm}[Theorem~5 of Ref.~\cite{Janzing2000}]
	\label{thm_Thermaloperations}
    Let $\rho$ and $\sigma$ be quantum states commuting with the system Hamiltonian $H$, and $\gamma$ its thermal Gibbs state with respect to the inverse temperature $\beta$. Denote their eigenvalues by $\v{p}$, $\v{q}$ and $\v{\gamma}$, respectively. Then, there exists a thermal operation $\E$, such that $\E(\rho)=\sigma$, if and only if there exists a stochastic map $\Lambda$ such that
		\begin{equation}
		    \Lambda\v{\gamma}=\v{\gamma}~~\mathrm{and}~~\Lambda \v p=\v q . 
		\end{equation}
\end{thm}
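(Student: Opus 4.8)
The plan is to prove the two implications of the biconditional separately. The forward implication (a thermal operation induces a Gibbs-preserving stochastic matrix) is routine and follows by reading a transition matrix off the channel; the converse (an abstract Gibbs-preserving stochastic matrix can be dilated to a genuine thermal operation) is the substantive part, and I expect it to be the main obstacle.

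For the forward direction, suppose a thermal operation $\E$ with $\E(\rho)=\sigma$ exists. Since $\rho$ and $\sigma$ commute with $H$, they are diagonal in the energy eigenbasis with entries $\v p$ and $\v q$. I would define a candidate directly from the channel by setting $\Lambda_{ij} = \tr{\ketbra{E_i}{E_i}\,\E(\ketbra{E_j}{E_j})}$, i.e.\ the energy-basis transition probabilities. Complete positivity of $\E$ forces each $\E(\ketbra{E_j}{E_j})$ to be a density operator, so $\Lambda_{ij}\ge 0$, while trace preservation together with $\sum_i \ketbra{E_i}{E_i}=\iden$ gives $\sum_i \Lambda_{ij}=1$; hence $\Lambda$ is stochastic. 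Expanding the diagonal entries of $\sigma=\E(\rho)$ in the energy basis yields $q_i = \bra{E_i}\E(\rho)\ket{E_i} = \sum_j \Lambda_{ij}p_j$, that is $\v q = \Lambda\v p$, and the identical computation applied to the fixed-point relation $\E(\gamma)=\gamma$ of Eq.~\eqref{eq_Gibbs_fixedpoint} gives $\Lambda\v\gamma=\v\gamma$. This direction uses nothing beyond the defining properties of $\E$ and the fact that the relevant states are diagonal.

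The converse requires building a thermal operation from an abstract Gibbs-stochastic matrix $\Lambda$, and here I would use the Gibbs-rescaling (embedding) construction. First I assume the thermal weights are rational, $\gamma_i = d_i/D$ with $d_i,D\in\mathbb{N}$, and split each energy level $i$ into $d_i$ degenerate sublevels of the same energy $E_i$, so that on the enlarged $D$-dimensional space the Gibbs distribution becomes uniform. Under this embedding the constraint $\Lambda\v\gamma=\v\gamma$ becomes the statement that the induced matrix $\tilde\Lambda$ preserves the uniform distribution, i.e.\ $\tilde\Lambda$ is doubly stochastic. By the Birkhoff--von Neumann theorem I can then write $\tilde\Lambda = \sum_k r_k P_k$ as a convex combination of permutation matrices. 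Each $P_k$ only permutes embedded states of equal total energy, so it is realised by an energy-conserving unitary obeying Eq.~\eqref{eq_energy_conservation}; the convex weights $r_k$ are realised by adjoining an auxiliary register of mutually degenerate levels, whose Gibbs state is therefore uniform, partitioned so as to reproduce the probabilities $r_k$, and conditioning a single global permutation of equal-energy states on this register. Since every state involved shares the same energy, the total unitary conserves energy and the whole map is a legitimate thermal operation. Tracing out the degeneracy labels and the auxiliary register returns a channel on the original system whose induced action on populations is exactly $\Lambda$, giving $\E(\rho)=\sigma$.

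The main obstacle is the passage from rational to arbitrary thermal weights. Here I would argue by continuity: the rational case provides a dense family of exact constructions, and since the set of states reachable from $\rho$ by thermal operations is closed, the target $\sigma$ should lie in it. Care is needed because perturbing $\v\gamma$ also perturbs the very constraint defining $\Lambda$, so the limiting argument must keep the fixed-point condition $\Lambda\v\gamma=\v\gamma$ exact while only the embedding dimension diverges. An alternative route, bypassing rationality, is to realise the transition $\v p\to\v q$ through a sequence of elementary two-level thermal operations, each implementable exactly with a two-level bath; this, however, presupposes translating the Gibbs-stochastic constraint into a thermomajorisation condition, whose equivalence with the existence of such an elementary decomposition is itself delicate. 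I therefore expect the real work to lie in guaranteeing that the constructed thermal operation reproduces $\Lambda$ \emph{exactly} rather than merely approximately.
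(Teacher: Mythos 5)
The paper offers no proof of this statement: it is imported verbatim as Theorem~5 of Ref.~\cite{Janzing2000} (see also Ref.~\cite{horodecki2013fundamental}), so your attempt can only be judged against the standard literature argument. Your forward direction is correct and is exactly the standard one: reading off $\Lambda_{ij}=\tr{\ketbra{E_i}{E_i}\,\E(\ketbra{E_j}{E_j})}$ and using trace preservation, positivity, diagonality of $\rho,\sigma$ and the fixed-point property $\E(\gamma)=\gamma$ gives a Gibbs-stochastic $\Lambda$ with $\Lambda\v p=\v q$. (A minor caveat you skip: with degenerate energies, ``commutes with $H$'' only forces block-diagonality, and $\rho$ and $\sigma$ may be diagonal in different energy eigenbases; this is repaired by noting that unitaries within degenerate eigenspaces are themselves thermal operations.)

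The converse contains a genuine gap at the step ``Each $P_k$ only permutes embedded states of equal total energy.'' This is false for the embedding you describe. If level $i$ is split into $d_i$ sublevels all carrying energy $E_i$, then the embedded Gibbs-rescaled distribution is uniform as a \emph{probability vector}, but the embedded \emph{Hamiltonian} is not trivial: sublevels belonging to different original levels have different energies. A generic permutation appearing in the Birkhoff decomposition of $\tilde\Lambda$ necessarily moves weight between blocks $i\neq j$ with $E_i\neq E_j$ (otherwise $\Lambda$ would be block-diagonal and could never change populations across energy levels), so it cannot be implemented by a unitary obeying Eq.~\eqref{eq_energy_conservation} on the embedded system alone, and your auxiliary register of mutually degenerate levels does not repair this. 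The standard fix is that the degeneracy must be supplied by the \emph{bath}: choose $H_E$ whose level $E$ has multiplicity $g(E)\propto e^{\beta E}$ (rational-weight approximation), so that within a fixed total-energy shell of system plus bath the system level $i$ appears with multiplicity proportional to $e^{-\beta E_i}\propto\gamma_i$, and $\rho\otimes\gamma_E$ restricted to that shell reproduces precisely your Gibbs-rescaled distribution. Permutations \emph{within such a shell} are genuinely energy-conserving, Birkhoff then applies inside the shell, and tracing out the bath yields the desired $\Lambda$. Your identification of the rational-to-irrational passage (and the closedness of the reachable set needed to make the limit exact) as the remaining delicate point is accurate and matches the literature, but without routing the degeneracies through the bath the construction as written does not produce a thermal operation.
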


As a result of the above theorem, while studying thermodynamic transformations between energy-incoherent states, we can replace CPTP maps and density matrices with stochastic matrices and probability vectors, respectively. Then, the rules describing what state transformations are allowed under thermal operations, can be expressed by a partial-order relation between probability vectors corresponding to initial and final states. In the infinite-temperature limit, these rules are encoded by the majorisation relation~\cite{marshall1979inequalities}, and in the finite-temperature case by thermomajorisation~\cite{Rusch,horodecki2013fundamental}. 


\subsection{Infinite temperature and the majorisation order}

\begin{figure}[t]
    \centering
    \includegraphics{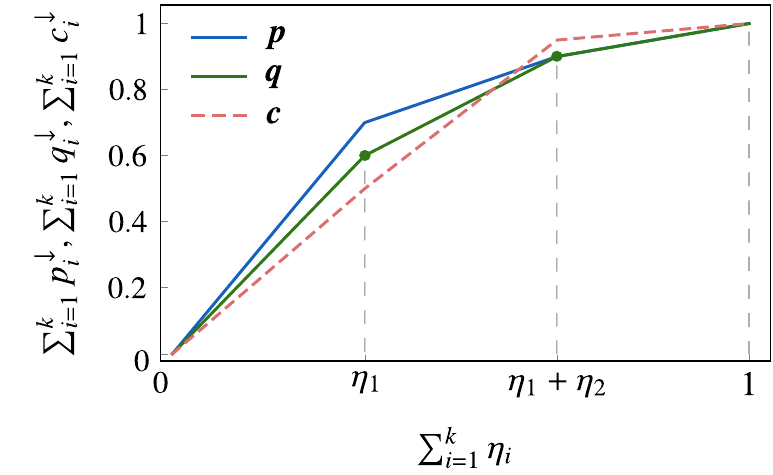}
    \caption{\label{fig-majorisationcurves} \emph{Majorisation curve.} For three different states $\v p$, $\v q$ and $\v c$, we plot their majorisation curves $f_{\v p}(x), f_{\v q}(x)$ and $f_{\v c}(x)$, respectively. While $\v p$ majorises $\v q$ (since $f_{\v p}(x)$ is never below $f_{\v q}(x)$), both states are incomparable with $\v c$, as their majorisation curves cross with $f_{\v c}(x)$.}  
\end{figure}

In the infinite temperature case, $\beta = 0$, the thermal Gibbs state $\v \gamma$ is described by a uniform distribution
\begin{equation}
    \label{eq_uniform-state}
    \v \eta = \frac{1}{d}(1, ..., 1).
\end{equation}
Theorem~\ref{thm_Thermaloperations} implies that a state $\v{p}$ can be mapped to $\v{q}$ if and only if there exists a bistochastic matrix ($\Lambda$, such that $\Lambda \v{\eta}=\v{\eta}$), which transforms $\v{p}$ into $\v{q}$. To formulate the solution we need to recall the concept of majorisation~\cite{marshall1979inequalities}:
\begin{defn}[Majorisation]\label{def_Majorisation} Given two $d$-dimensional probability distributions $\v p$ and $\v q$, we say that $\v{p}$ \emph{majorises} $\v{q}$, and denote it by $\v p \succ \v q$, if and only if
\begin{equation}
\label{eq_majorisation}
    \sum_{i=1}^k p_i^{\downarrow}\geq\sum_{i=1}^k q_i^{\downarrow} \quad \text{for all} \quad  k\in\{1\dots d\},
\end{equation}
where $\v{p}^{\downarrow}$ denotes the vector $\v{p}$ rearranged in a non-increasing order. 
\end{defn}

Equivalently, the majorisation relation can be expressed in a more geometric way by defining a \emph{majorisation curve}, i.e., a piecewise linear curve $f_{\v p}(x)$ in $\mathbb{R}^2$ obtained by joining the origin $(0,0)$ and the points $\left(\sum_{i=1}^{k} \eta_i, \sum_{i=1}^{k} p^{\downarrow}_i \right)$, for $k \in \{1, ..., d\}$. Then, a distribution $\v{p}$ majorises $\v{q}$ if and only if the majorisation curve $f_{\v p}(x)$ of $\v{p}$ is always above that of $\v{q}$,
\begin{equation}
\v p \succ \v q \iff \forall x\in \left[0,1\right]:~f_{\v p}(x) \geq  f_{\v q}(x) \, .
\end{equation}
Since majorisation does not introduce a total order, a given pair of states $\v p$ and $\v q$ may be incomparable with each other: neither $\v{p}$ majorises~$\v{q}$, nor $\v{q}$ majorises $\v{p}$. In terms of majorisation curves, this implies that both curves intersect each other~ (see Fig.~\ref{fig-majorisationcurves}). 

We are now ready to state the connection between majorisation relation and bistochastic state transformations captured by the celebrated Hardy-Littlewood-P{\'o}lya theorem~\cite{hardy1952inequalities}.
\begin{thm}[Theorem~II.1.10 of Ref.~\cite{bhatia1996matrix}]
\label{thm_HLP}
	There exists a bistochastic matrix $\Lambda$, $\Lambda \v \eta=\v \eta$, mapping $\v{p}$ to $\v{q}$ if and only if $\v{p} \succ \v{q}$.
\end{thm}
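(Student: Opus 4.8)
The plan is to prove the two implications separately, since only one direction is constructive. For the ``only if'' direction (bistochastic $\Rightarrow$ majorisation) I would argue directly from the partial-sum characterisation in Definition~\ref{def_Majorisation}. Writing $q_i = \sum_j \Lambda_{ij} p_j$ and fixing any index set $S$ with $|S|=k$, I would compute $\sum_{i\in S} q_i = \sum_j c_j p_j$ with coefficients $c_j = \sum_{i\in S}\Lambda_{ij}$. The double stochasticity of $\Lambda$ (columns sum to one by the stochastic condition, rows sum to one because $\Lambda\v\eta=\v\eta$) forces $0\le c_j \le 1$ and $\sum_j c_j = k$, so the linear functional $\sum_j c_j p_j$ is maximised by placing unit weight on the $k$ largest entries of $\v p$, giving $\sum_{i\in S} q_i \le \sum_{i=1}^k p_i^{\downarrow}$. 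Choosing $S$ to index the $k$ largest entries of $\v q$ yields $\sum_{i=1}^k q_i^{\downarrow} \le \sum_{i=1}^k p_i^{\downarrow}$ for every $k$, with equality at $k=d$ by trace preservation; hence $\v p \succ \v q$.

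For the ``if'' direction (majorisation $\Rightarrow$ bistochastic) I would give an explicit construction by a finite sequence of elementary transfers. Assuming without loss of generality that $\v p$ and $\v q$ are sorted non-increasingly (permutation matrices are bistochastic and can be absorbed into the final answer) and that $\v p \ne \v q$, I would locate a pair of indices $j < k$ with $p_j > q_j$ and $p_k < q_k$; such a pair must exist because the partial sums of $\v p$ dominate those of $\v q$ while the total sums agree. I would then apply a \emph{$T$-transform} $T = (1-\lambda)\iden + \lambda \Pi_{jk}$, where $\Pi_{jk}$ swaps coordinates $j$ and $k$ and $\lambda$ is chosen so that exactly $\delta = \min(p_j - q_j,\, q_k - p_k)$ of probability is moved from the ``rich'' coordinate $j$ to the ``poor'' coordinate $k$. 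Each such $T$ is manifestly bistochastic, and the choice of $\delta$ saturates at least one of the two coordinates, so the image $T\v p$ agrees with $\v q$ in at least one more coordinate than $\v p$ did.

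The hard part will be establishing the two claims that make the induction go through: that the intermediate vector $T\v p$ still majorises $\v q$, and that the running vector can be kept sorted so that the partial-sum bookkeeping remains valid. For the first, I would verify the partial-sum inequalities after the transfer by cases according to whether a cut point $m$ lies below $j$, inside the interval $[j,k)$, or at or above $k$: since the transfer only redistributes mass ``inward'' and does not overshoot (one checks $\delta \le \tfrac12(p_j - p_k)$ using $q_j \ge q_k$, so the two coordinates do not cross), no partial sum of $T\v p$ can fall below the corresponding partial sum of $\v q$. This is exactly the delicate step and the reason the index pair must be chosen with care rather than arbitrarily. With these two claims in hand, at most $d-1$ transfers suffice, and the sought matrix is $\Lambda = T_{m}\cdots T_1$ composed with the initial sorting permutations; as a product of bistochastic matrices it is bistochastic and maps $\v p$ to $\v q$ by construction.

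As an alternative I could invoke the Birkhoff--von Neumann theorem, which identifies the bistochastic matrices with $\mathrm{conv}\{\Pi : \Pi \text{ a permutation matrix}\}$, and reduce the claim to showing that $\v q$ lies in the convex hull of the $d!$ permutations of $\v p$ precisely when $\v p \succ \v q$, turning the problem into a separating-hyperplane argument. This repackaging merely relocates the combinatorial core, however, so the explicit $T$-transform construction is what I would ultimately present, since it also yields the concrete Gibbs-preserving-type matrices needed for the geometric cone analysis developed later in the paper.
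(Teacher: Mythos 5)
Your proposal is correct. Note that the paper itself offers no proof of this statement: it is imported verbatim as Theorem~II.1.10 of Bhatia (equivalently, the Hardy--Littlewood--P\'olya theorem), so there is nothing internal to compare against. Your argument is precisely the classical one found in that reference: the ``only if'' direction via the observation that the coefficients $c_j=\sum_{i\in S}\Lambda_{ij}$ lie in $[0,1]$ and sum to $k$, and the ``if'' direction via a terminating sequence of $T$-transforms. You have also correctly identified and resolved the one genuinely delicate point, namely that $j$ and $k$ must be chosen so that all intermediate coordinates already agree (largest $j$ with $p_j>q_j$, smallest $k>j$ with $p_k<q_k$), which is what guarantees both that $2\delta\le(p_j-q_j)+(q_k-p_k)=(p_j-p_k)-(q_j-q_k)\le p_j-p_k$ (so the vector stays sorted and intermediate partial sums drop by at most $\delta\le\sum_{i\le j}(p_i-q_i)$, preserving majorisation) and that each step strictly reduces the number of disagreeing coordinates, so at most $d-1$ transforms are needed.
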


Furthermore, it is important to mention that the majorisation order forms a lattice structure~\cite{cicalese2002}, within which there exists a natural time-like hierarchy of elements.

\begin{defn}[Lattice]\label{def_Lattice} A partially ordered set $(L, \leq)$ forms a lattice if for every pair of elements $\v p ,\v q \in L$, there exists a least upper bound, called \textbf{join} and denoted by $\v p \vee \v q$, such that $\v p \vee \v q \geq \v p$ and $\v p \vee \v q \geq \v q$; and a greatest lower bound, called \textbf{meet} and denoted by $\v p \wedge \v q$, such that $\v p \wedge\v q \leq \v p$ and $\v p \wedge \v q \leq \v q$.
\end{defn}

Viewing the lattice structure of majorisation order from the perspective of the laws governing state transformation under bistochastic matrices, it becomes clear that a join can be understood as the last point of the common ``past'' for a given two elements of the lattice, and similarly, a meet can be seen as the first point of the common future~\cite{Korzekwa2017}. The existence of both join and meet within the majorisation order lattice has been proven in Ref.~\cite{cicalese2002} and will be instrumental in constructing the past and incomparable regions.


\subsection{Finite temperature and thermomajorisation order}

Theorem~\ref{thm_HLP} can be generalised from bistochastic matrices to Gibbs-preserving matrices, and the aim of this section is to present such a result without focusing on its derivation (for details see Ref.~\cite{Korzekwathesis}). To achieve this, we first need to introduce the thermodynamic equivalent of the majorisation partial order, which can be achieved by first presenting the concept of $\beta$-ordering, and then introducing the notion of thermomajorisation curves~\cite{horodecki2013fundamental}. For a given initial state $\v p$ and a thermal Gibbs distribution \mbox{$\v \gamma$} with inverse temperature $\beta$, the $\beta$-ordering of $\v p$ is defined by a permutation $\v \pi_{\v p}$ that sorts $p_i/\gamma_i$ in a non-increasing order,
\begin{equation} \label{eq_beta-ordering}
\frac{p_{\v \pi^{-1}_{\v p}(i)}}{\gamma_{\v \pi^{-1}_{\v p}(i)}} \geq \frac{p_{\v \pi^{-1}_{\v p}(i+1)}}{\gamma_{\v \pi_{\v p}^{-1}(i+1)}} \:\:, \:\: \text{for} \:\: i \in \{1, ..., d-1\} .
\end{equation}
Thus, the $\beta$-ordered version of $\v p$ is given by
	\begin{equation}
	\v{p}^{\, \beta}=\left(p_{\v \pi_{\v{p}}^{-1}(1)},\dots ,p_{\v \pi_{\v{p}}^{-1}(d)}\right).
	\end{equation}
Note that each permutation belonging to the symmetric group, $\v \pi \in \mathcal S_d$, defines a different $\beta$-ordering on the energy levels of the Hamiltonian $H$.

A thermomajorisation curve \mbox{$f^{\, \beta}_{\v{p}}:\left[0,1\right]\rightarrow\left[0,1\right]$} is defined as a piecewise linear curve composed of linear segments connecting the point $(0,0)$ and the points defined by consecutive subsums of the $\beta$-ordered form of the probability $\v{p}^\beta$ and the Gibbs state $\v{\gamma}^\beta$,
\begin{equation}
    \left(\sum_{i=1}^k\gamma^{\, \beta}_i,~\sum_{i=1}^k p^{\, \beta}_i\right):=\left(\sum_{i=1}^k\gamma_{\v \pi^{-1}_{\v{p}}(i)},~\sum_{i=1}^k p_{\v \pi^{-1}_{\v{p}}(i)}\right),
\end{equation}
for $k\in\{1,\dots,d\}$. 

Finally, given two $d$-dimensional probability distributions $\v p$ and $\v q$, and a fixed inverse temperature $\beta$, we say that $\v p$ \emph{thermomajorises} $\v q$ and denote it as $\v p \succ_{\beta} \v q$, if the thermomajorisation curve $f^{\, \beta}_{\v{p}}$ is above $f^{\, \beta}_{\v{q}}$ everywhere, i.e.,
\begin{equation}
    \v p \succ_{\beta} \v q \iff \forall x\in[0,1]:~ f^{\, \beta}_{\v{p}}(x) \geq f^{\, \beta}_{\v{q}}(x) \, .
\end{equation}
As before, it may happen that given two vectors, $\v p$ and $\v q$, are incomparable, meaning that $f^{\, \beta}_{\v p}$ and $f^{\, \beta}_{\v q}$ cross at a some point (see Fig.~\ref{fig-thermomajorisationcurves}). Furthermore, the thermomajorisation order cannot be seen as a lattice in the sense that for $\v p$ and $\v q$, which do not share the same $\beta$-order, there is no unique join or meet~\cite{Korzekwa2017}. 

We can now state the generalisation of Theorem~\ref{thm_HLP} from bistochastic matrices to Gibbs-preserving matrices which, via Theorem~\ref{thm_Thermaloperations}, yields the necessary and sufficient conditions for the existence of a thermal operation between two energy-incoherent states~\cite{horodecki2013fundamental,Rusch1978}.
\begin{thm}[Theorem~1.3 of Ref.~\cite{Janzing2000}]
	\label{thm_HLPgeneralisation}
    There exists a Gibbs-preserving matrix $\Lambda$, $\Lambda \v \gamma=\v \gamma$, mapping $\v{p}$ to $\v{q}$ if and only if $\v{p} \succ_{\beta} \v{q}$.
\end{thm}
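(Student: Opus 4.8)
The plan is to prove the two implications of the equivalence $\bigl(\exists\,\Lambda:\Lambda\v\gamma=\v\gamma,\ \Lambda\v p=\v q\bigr)\iff\v p\succ_\beta\v q$ by separate tools, both resting on an analytic restatement of thermomajorisation. Since $\v p$ and $\v q$ are normalised and the curve $f^{\,\beta}_{\v p}$ is concave with breakpoints at the slopes $p^{\,\beta}_i/\gamma^{\,\beta}_i$, I would first record the $\beta$-tilted analogue of the classical characterisation of majorisation, namely
\begin{equation}
\v p\succ_\beta\v q \iff \sum_i (p_i-t\gamma_i)_+\ge\sum_i (q_i-t\gamma_i)_+ \quad\text{for all } t\ge 0,
\end{equation}
where $(x)_+:=\max(x,0)$. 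This is the Legendre-type dual of the statement that $f^{\,\beta}_{\v p}$ lies above $f^{\,\beta}_{\v q}$, and verifying it is a short exercise in convex geometry that I would isolate as a lemma.

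Granting this characterisation, the forward implication is immediate. Given a Gibbs-preserving stochastic matrix $\Lambda$ with $\Lambda\v p=\v q$, fix $t\ge0$ and set $x_j:=p_j-t\gamma_j$, so that $q_i-t\gamma_i=\sum_j\Lambda_{ij}x_j$ because $\Lambda\v p=\v q$ and $\Lambda\v\gamma=\v\gamma$. Using that $(\cdot)_+$ is subadditive and positively homogeneous and that $\Lambda$ is column-stochastic, I would write
\begin{align}
\sum_i (q_i-t\gamma_i)_+ &= \sum_i\Big(\sum_j\Lambda_{ij}x_j\Big)_+ \nonumber\\
&\le \sum_{i,j}\Lambda_{ij}(x_j)_+ = \sum_j (x_j)_+,
\end{align}
where the last equality uses $\sum_i\Lambda_{ij}=1$. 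Since $\sum_j(x_j)_+=\sum_j(p_j-t\gamma_j)_+$, this is exactly the inequality above for every $t\ge0$, hence $\v p\succ_\beta\v q$.

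For the converse I would reduce thermomajorisation to ordinary majorisation by an embedding and then invoke the Hardy–Littlewood–Pólya theorem (Theorem~\ref{thm_HLP}). First reduce to rational Gibbs states: approximating $\v\gamma$ by rationals and using that thermomajorisation is a closed condition while the set of stochastic matrices is compact, it suffices to treat $\gamma_i=d_i/D$ with $d_i\in\mathbb{Z}_{>0}$ and $D=\sum_i d_i$. Define the embedding $E:\mathbb{R}^d\to\mathbb{R}^D$ that replaces each coordinate $p_i$ by $d_i$ copies of $p_i/d_i$, so that $E(\v\gamma)$ is the uniform distribution $\v{\eta}_D$ on $D$ outcomes. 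The $\beta$-ordering of $\v p$ coincides with the non-increasing ordering of the coordinates of $E(\v p)$, and the breakpoints $\sum_{i\le k}\gamma^{\,\beta}_i$ fall exactly on the uniform grid, so $f^{\,\beta}_{\v p}$ equals the ordinary Lorenz curve of $E(\v p)$; consequently $\v p\succ_\beta\v q\iff E(\v p)\succ E(\v q)$. Theorem~\ref{thm_HLP} then supplies a bistochastic matrix $B$ with $B\,E(\v p)=E(\v q)$.

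I expect the descent of $B$ to a Gibbs-preserving $\Lambda$ on $\mathbb{R}^d$ to be the main obstacle, since a generic $B$ ignores the block structure of $E$. The remedy is to symmetrise $B$ over the subgroup $S_{d_1}\times\cdots\times S_{d_d}\subset S_D$ permuting the copies within each block: because $E(\v p)$ and $E(\v q)$ are invariant under this subgroup, the averaged matrix $\bar B$ still satisfies $\bar B\,E(\v p)=E(\v q)$ and remains bistochastic, but now commutes with the block action and hence factors through $E$ to define a stochastic $\Lambda$ with $\Lambda\v p=\v q$; bistochasticity combined with $\bar B\,\v{\eta}_D=\v{\eta}_D$ yields $\Lambda\v\gamma=\v\gamma$. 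The two delicate points are (i) checking that $\bar B$ genuinely induces a well-defined matrix on the $d$-dimensional quotient, and (ii) making the rational-approximation limit rigorous, so that the constructed $\Lambda$ converges to a Gibbs-preserving matrix for the original $\v\gamma$ while the non-strict thermomajorisation inequalities are preserved in the limit.
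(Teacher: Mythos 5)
The paper does not actually prove this theorem: it is imported from the literature (the references in the theorem header, with the derivation deferred to Ref.~\cite{Korzekwathesis}), so there is no in-paper argument to compare against and your proposal must stand on its own. Its first half does. The dual characterisation $\v p\succ_\beta\v q\iff\sum_i(p_i-t\gamma_i)_+\ge\sum_i(q_i-t\gamma_i)_+$ for all $t\ge0$ is a correct restatement of thermomajorisation (it follows from $f^{\,\beta}_{\v p}(x)=\min_{t\ge0}\,[\,tx+\sum_i(p_i-t\gamma_i)_+]$ together with the Legendre-type identity $\sum_i(p_i-t\gamma_i)_+=\max_x[f^{\,\beta}_{\v p}(x)-tx]$), and the subadditivity estimate combined with $\sum_i\Lambda_{ij}=1$ closes the forward implication cleanly. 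In the converse, the block embedding, the appeal to Theorem~\ref{thm_HLP}, and the symmetrisation are also sound; your worry (i) is not a real obstacle, because the doubly averaged $\bar B=|G|^{-2}\sum_{g,h}P_g B P_h$ has $G$-invariant range and $G$-invariant dependence on $G$-invariant inputs, hence descends to a well-defined stochastic $\Lambda$ fixing $\v\gamma$.

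The genuine gap is the reduction to rational $\v\gamma$, which you flag as point (ii) but whose justification as stated is backwards. Closedness of the thermomajorisation relation and compactness of the set of stochastic matrices let you pass to the limit once you have a sequence $\v\gamma^{(n)}\to\v\gamma$ of rational Gibbs states with $\v p\succ_{\beta^{(n)}}\v q$; they do not supply such a sequence. That is the direction you need, and it can fail for a naive approximation: when $f^{\,\beta}_{\v p}$ and $f^{\,\beta}_{\v q}$ touch at an interior elbow, perturbing $\v\gamma$ moves the elbows of the two curves differently (and can change the $\beta$-orders), generically pushing $f^{\,\beta}_{\v q}$ above $f^{\,\beta}_{\v p}$ near the contact point, so the thermomajorisation hypothesis is not available for the approximants. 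A standard repair is to first prove the claim for the softened target $\v q_\epsilon=(1-\epsilon)\v q+\epsilon\v\gamma$, for which the dual inequalities become strict for all $t$ strictly between $\min_i q_i/\gamma_i$ and $\max_i q_i/\gamma_i$ and hence survive a sufficiently fine rational perturbation of $\v\gamma$, and then send $\epsilon\to0$ using your compactness argument; alternatively, one avoids approximation entirely by constructing $\Lambda$ directly as a finite product of two-level Gibbs-preserving partial swaps (the weighted analogue of $T$-transforms), which is how the cited sources proceed. Without one of these patches the converse is incomplete. Note also that the paper itself remarks in Sec.~\ref{sec_embeddingLattice} that the standard embedding route requires an infinite-dimensional limit to capture arbitrary $\beta$, which is precisely the issue your sketch leaves open.
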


\begin{figure}[t]
    \centering
    \includegraphics{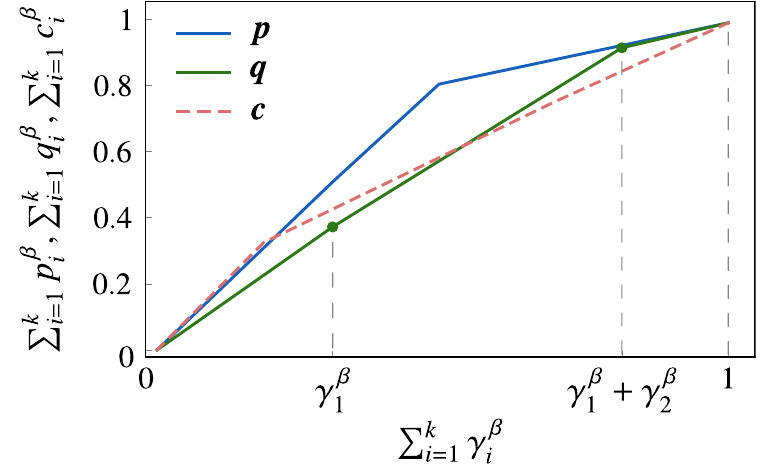}
    \caption{\label{fig-thermomajorisationcurves} \emph{Thermomajorisation curve}. For three different states $\v p$, $\v q$ and $\v c$, with a thermal Gibbs state $\v \gamma = (1, e^{-\beta}, e^{-2\beta})$, and $\beta >0$, we plot their thermomajorisation curves $f^{\, \beta}_{\v p}(x), f^{\, \beta}_{\v q}(x)$ and $f^{\, \beta}_{\v c}(x)$, respectively. While $\v p$ thermomajorises $\v q$ (since $f^{\, \beta}_{\v p}(x)$ is never below $f^{\, \beta}_{\v q}(x)$), both states are incomparable with $\v c$, as their thermomajorisation curves cross with $f^{\, \beta}_{\v c}(x)$.} 
\end{figure}


\section{Majorisation cones} 
\label{Sec:Majorisation-cones}

The reachability of states under bistochastic matrices can be studied by introducing the notion of \emph{majorisation cones}, defined as follows:

\begin{defn}[Majorisation cones]\label{def_MajorisationCones}
The set of states that a probability vector $\v p$ can be mapped to by bistochastic matrices is called the \textbf{future cone} $\T_+(\v{p})$. The set of states that can be mapped to $\v p$ by bistochastic matrices is called the \textbf{past cone} $\T_-(\v{p})$. The set of states that are neither in the past nor in the future cone of $\v p$ is called the \textbf{incomparable region} $\T_{\emptyset}(\v{p})$.
\end{defn}

Definition~\ref{def_MajorisationCones} provides us with a stage for studying thermodynamics of energy-incoherent states in the infinite-temperature limit. What is more, it also allows us to explore a more general class of state transformations whose rules are based on a majorisation relation, e.g., the resource theories of entanglement~\cite{nielsen1999conditions} or coherence~\cite{Plenio2014,Du2015, Streltsov2016,Streltsov2017}. These more general settings are discussed in more detail in Sec.~\ref{SubSec_LinkstootherQuantumTheorioes}, whereas now we focus on thermodynamic transformations. 


\subsection{Geometry of majorisation cones}

The future cone can be easily characterised by employing the Birkhoff’s theorem~\cite{bhatia1996matrix} stating that every bistochastic matrix can be written as a convex combination of permutation matrices. Thus, the set of $d \times d$ bistochastic matrices is a convex polytope with $d!$ vertices, one for each permutation in $\mathcal{S}_d$. Combining this observation with Theorem~\ref{thm_HLP}, we obtain the future cone of~$\v p$:
\begin{cor}[Future cone]
\label{thm_futureinfinite}
For a $d$-dimensional probability vector $\v{p}$, its future cone is given by
\begin{equation}
    \label{eq-futurecone}
    \T_{+}(\v p) = \operatorname{conv}\left[\left\{\Pi \v p \, , \mathcal{S}_d \ni \v \pi \mapsto \Pi  \right\}\right] ,
\end{equation}
where $\Pi$ denotes a permutation matrix corresponding to the permutation $\v \pi$ with $d$ elements, and $\operatorname{conv[S]}$ the convex hull of the set $S$.
\end{cor}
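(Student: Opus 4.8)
The plan is to combine two facts that are already at our disposal: the Hardy--Littlewood--P\'olya theorem (Theorem~\ref{thm_HLP}), which identifies the future cone $\T_+(\v p)$ with the set $\{\v q : \v p \succ \v q\}$ of distributions majorised by $\v p$, and Birkhoff's theorem, which says that the set of $d \times d$ bistochastic matrices is the convex hull of the $d!$ permutation matrices. First I would invoke Theorem~\ref{thm_HLP} to rewrite the future cone purely in terms of bistochastic transformations:
\begin{equation}
\T_+(\v p) = \{\Lambda \v p : \Lambda \text{ bistochastic}\}.
\end{equation}

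Next I would use Birkhoff's theorem to write an arbitrary bistochastic $\Lambda = \sum_{\v\pi \in \mathcal S_d} \lambda_{\v\pi}\, \Pi$ as a convex combination of permutation matrices, with $\lambda_{\v\pi} \geq 0$ and $\sum_{\v\pi} \lambda_{\v\pi} = 1$. Applying this to $\v p$ gives $\Lambda \v p = \sum_{\v\pi} \lambda_{\v\pi}\, \Pi \v p$, which manifestly lies in $\operatorname{conv}[\{\Pi \v p : \v\pi \in \mathcal S_d\}]$. This establishes the inclusion $\T_+(\v p) \subseteq \operatorname{conv}[\{\Pi \v p\}]$. For the reverse inclusion, I would observe that each permutation matrix is itself bistochastic, so every vertex $\Pi \v p$ belongs to $\T_+(\v p)$; since $\T_+(\v p)$ is the image of the convex set of bistochastic matrices under the linear map $\Lambda \mapsto \Lambda \v p$, it is convex, and therefore contains the convex hull of its elements. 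Combining the two inclusions yields the claimed equality.

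The argument is essentially a direct translation of Birkhoff plus HLP, so there is no deep obstacle; the only point requiring a little care is the convexity of $\T_+(\v p)$, which is needed for the reverse inclusion. I would justify it by noting that the set of bistochastic matrices is convex and the map $\Lambda \mapsto \Lambda \v p$ is linear, so its image is convex — alternatively one can simply remark that a convex combination of two bistochastic matrices is bistochastic. Everything else is bookkeeping: checking that each $\Pi \v p$ is a valid probability vector (permutations preserve the simplex) and that the convex-hull description matches the set of reachable states.
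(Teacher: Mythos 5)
Your proposal is correct and follows essentially the same route as the paper, which obtains the corollary directly by combining Birkhoff's theorem with Theorem~\ref{thm_HLP}; you merely spell out the two inclusions that the paper leaves implicit. The only cosmetic remark is that the identity $\T_+(\v p)=\{\Lambda\v p:\Lambda\text{ bistochastic}\}$ is really the definition of the future cone (Definition~\ref{def_MajorisationCones}) rather than a consequence of Theorem~\ref{thm_HLP}, but this does not affect the argument.
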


The above corollary implies that the future cone of $\v p$ is a convex set with all distributions lying in $\T_{+}(\v p)$ being majorised by $\v p$. Since the $d$-dimensional sharp distribution, $(0,..., 1,.., 0)$, majorises all probability distributions, its future cone is the entire probability simplex, which we will denote by $\v \Delta_d$. For later convenience, we will introduce the concept of $d!$ Weyl chambers, each composed of probability vectors which can be ordered non-decreasingly by a common permutation $\v{\pi}$. In particular, we will refer to the chamber corresponding to the identity permutation as the \emph{canonical Weyl chamber}.

If there is no transformation mapping $\v p$ into $\v q$ nor $\v q$ into~$\v p$, we say that these two states are incomparable. The incomparable region can be characterised by incorporating into the analysis the concept of quasi-probability distributions, which are defined by relaxing the non-negativity condition on the entries of a normalised probability distribution. The following result, the proof of which is employing the lattice structure of majorisation order and can be found in Appendix~\ref{sec_appendix}, specifies the incomparable region of $\v p$.
\begin{lem}[Incomparable region]
\label{lem_incomparablecone}
For a $d$-dimensional probability distribution \mbox{$\v{p} = (p_1,..., p_d)$}, consider the quasi-probability distributions $\v{t}^{(n)}$ constructed for each \mbox{$n \in \{1, ..., d\}$},
\begin{equation}
\label{eq_tangentvectors}
    \v{t}^{(n)}=\left(t^{(n)}_1, p_n^{\downarrow}, ..., p_n^{\downarrow}, t^{(n)}_d\right) , 
\end{equation}
with 
\begin{align}
\label{eq:extremalpointspastan}
    t_1^{(n)} & = \sum_{i=1}^{n-1} p_i^{\downarrow} - (n-2) p_n^{\downarrow}, &
    t_d^{(n)} & = 1 - t^{(n)}_1 - (d-2)p_n^{\downarrow},
\end{align}
and define the following set
\begin{equation}
\label{eq:setP}
    \mathbb{T}:=\bigcup\limits_{j=1}^{d-1} \operatorname{conv}\left[\T_{+}\left(\v{t}^{(j)}\right) \cup \T_{+}\left(\v{t}^{(j+1)}\right)\right] .    
\end{equation}
Then, the incomparable region of $\v{p}$ is given by 
\begin{align}
\label{eq:incomparableconeIT}
    \T_{\emptyset}(\v{p}) = \left[\operatorname{int}(\mathbb{T})\backslash\T_+\left(\v p\right)\right]\cap \v{\Delta}_d  ,
\end{align}
where the backslash $\backslash$ denotes the set difference and $\operatorname{int}( \mathbb{T})$ represents the interior of  $\mathbb{T}$.
\end{lem}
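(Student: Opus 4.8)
The plan is to characterize the incomparable region as the complement of the future and past cones within the simplex, and to realize it geometrically through a clever family of extremal quasi-probability vectors. Since by definition $\T_\emptyset(\v p) = \v\Delta_d \setminus \left(\T_+(\v p) \cup \T_-(\v p)\right)$, and the future cone is already given explicitly by Corollary~\ref{thm_futureinfinite}, the crux is to understand the boundary between the past cone and the incomparable region. First I would recall that the boundary of the past cone consists of distributions whose majorisation curve is tangent to that of $\v p$: a distribution $\v q$ lies on the edge of being able to reach $\v p$ precisely when $f_{\v q}(x)$ touches $f_{\v p}(x)$ at one of the kink points $x = \sum_{i=1}^n \eta_i$ while staying above elsewhere. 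The vectors $\v t^{(n)}$ in Eq.~\eqref{eq_tangentvectors} should be read as exactly these tangent extremal points: each $\v t^{(n)}$ is the \emph{least majorising} quasi-probability whose curve meets $f_{\v p}$ at the $n$-th vertex, having maximal ``flatness'' (a long plateau at height $p_n^\downarrow$) consistent with passing through that vertex. I would verify that the formulas~\eqref{eq:extremalpointspastan} for $t_1^{(n)}$ and $t_d^{(n)}$ are forced by two constraints: normalization $\sum_i t_i^{(n)} = 1$, and the requirement that the curve of $\v t^{(n)}$ passes through the point $\left(\sum_{i=1}^{n-1}\eta_i,\ \sum_{i=1}^{n-1}p_i^\downarrow\right)$.

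Next I would establish that taking the union over consecutive convex hulls, as in Eq.~\eqref{eq:setP}, sweeps out precisely the region bounded by these tangency constraints. The key geometric claim is that $\operatorname{conv}\left[\T_+(\v t^{(j)}) \cup \T_+(\v t^{(j+1)})\right]$ interpolates between the tangency conditions at the $j$-th and $(j{+}1)$-th vertices, and that the union $\mathbb{T}$ over $j \in \{1,\dots,d-1\}$ captures all distributions that fail to majorise $\v p$ from the relevant side. I would argue this by showing a distribution $\v q \in \v\Delta_d$ lies in $\operatorname{int}(\mathbb{T})$ if and only if its majorisation curve crosses $f_{\v p}$, i.e., neither lies entirely above nor entirely below. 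The ``lies entirely below'' case is exactly $\T_+(\v p)$, which is why it must be subtracted off in Eq.~\eqref{eq:incomparableconeIT}; the complementary ``lies entirely above'' case is the past cone $\T_-(\v p)$, which must be shown to be disjoint from $\operatorname{int}(\mathbb{T})$. Here I would lean on the lattice structure of the majorisation order guaranteed by Ref.~\cite{cicalese2002}, invoking the join and meet to control how curves can relate to $f_{\v p}$.

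The main obstacle, I expect, will be proving the two-sided inclusion relating $\operatorname{int}(\mathbb{T})\setminus\T_+(\v p)$ to the true incomparable region. The forward inclusion (points of $\mathbb{T}$ outside the future cone are genuinely incomparable) requires showing such points are not in the past cone, i.e., do not majorise $\v p$; this is where I anticipate the geometry of the convex hulls of future cones interacting delicately with the tangency vertices. The reverse inclusion (every incomparable point lies in $\mathbb{T}$) is the harder direction: given $\v q$ incomparable to $\v p$, I must locate the index $j$ such that $\v q$'s curve crosses $f_{\v p}$ between the $j$-th and $(j{+}1)$-th kinks and then exhibit $\v q$ as a convex combination of points in $\T_+(\v t^{(j)}) \cup \T_+(\v t^{(j+1)})$. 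I would handle this by decomposing along the crossing structure of the majorisation curves and using that $\T_+(\v t^{(n)})$ is the full set of distributions majorised by $\v t^{(n)}$ (Corollary~\ref{thm_futureinfinite}), so membership in the convex hull reduces to a statement about where $f_{\v q}$ sits relative to the two bounding tangent curves. Care with the restriction to $\v\Delta_d$ and with the boundary (closed versus open) behaviour encoded by $\operatorname{int}(\cdot)$ will be needed to match the statement exactly.
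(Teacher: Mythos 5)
Your plan follows essentially the same route as the paper: tangent quasi-probability vectors $\v{t}^{(n)}$ touching the Lorenz curve of $\v p$ at its elbows (with $t_1^{(n)}$, $t_d^{(n)}$ fixed by tangency and normalisation exactly as you say), their convex combinations $\lambda\v{t}^{(j)}+(1-\lambda)\v{t}^{(j+1)}$ exhausting all tangents at the $j$-th elbow, the identification of $\operatorname{conv}\left[\T_+(\v{t}^{(j)})\cup\T_+(\v{t}^{(j+1)})\right]$ with the union of the futures of these tangents, and the final subtraction of $\T_+(\v p)$. The easy direction is also handled identically: if $\v q$ lies in the interior of the future of a tangent at elbow $n$, then $Q_n<T_n=P_n$, so $\v q\nsucc\v p$, which together with $\v p\nsucc\v q$ gives incomparability.

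The one step where your sketch as written would fail is the reverse inclusion. You propose to select the index $j$ by locating where $f_{\v q}$ crosses $f_{\v p}$ and then to exhibit $\v q$ in the future of a tangent at elbow $j$ or $j+1$. But a local crossing does not guarantee that any tangent line at the adjacent elbows dominates $f_{\v q}$ \emph{globally}: a tangent to $f_{\v p}$ at elbow $j$ can dip below $f_{\v q}$ far from the crossing point, and then $\v q\nprec\v t$. The paper's mechanism --- which you gesture at by invoking the join, but do not connect to the argument --- is to pass to $\v r=\v p\vee\v q$ and use the explicit join construction to show that the cumulative vector $\v R$ agrees with $\v P$ at some elbow $i$ (the index set $I_p$ stays nonempty through every flattening step). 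The tangent to $\v r$ at that elbow is then simultaneously a tangent to $\v p$ (it is a convex combination of $\v{t}^{(i)}$ and $\v{t}^{(i+1)}$), and it majorises $\v r$ and hence $\v q$ by construction. Note that the correct elbow $i$ and the correct mixing parameter $\lambda$ are dictated by the join, not by where $f_{\v q}$ crosses $f_{\v p}$. With that substitution your argument closes; the remaining bookkeeping (convexity of each $\T_+$ turning the union over $\lambda$ into a convex hull, the interior versus boundary distinction, and the intersection with $\v{\Delta}_d$) matches the paper.
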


\begin{figure}[t]
    \centering
    \includegraphics{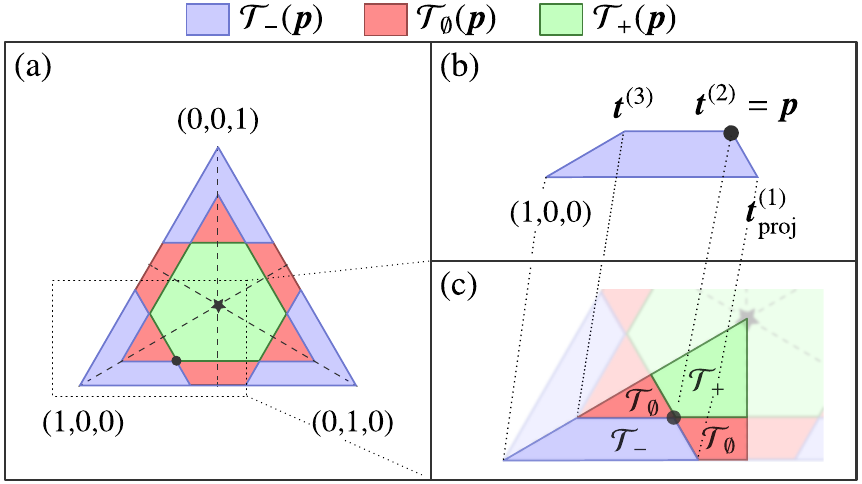}
    \caption{\label{fig-past-chamber} \emph{Majorisation cones and Weyl chambers}. (a) Probability simplex $\Delta_3$ and a state $\v p = (0.6,0.3,0.1)$ \rd{represented by a black dot~$\bullet$} together with its majorisation cones. The division of $\Delta_3$ into different Weyl chambers is indicated by dashed lines \rd{with the central state \mbox{$\v \eta =(1/3,1/3,1/3)$} denoted by a black star~$\bigstar$}. (b) The past cone of a state~$\v{p}$ restricted to a given Weyl chamber is convex with the extreme points given by $\v{t}^{(n)}$ from Eq.~\eqref{eq_tangentvectors} and the sharp state. (c) The causal structure induced by bistochastic matrices (i.e., thermal operations in the infinite temperature limit) in a given Weyl chamber.}  
\end{figure}

We will refer to the quasi-probability distributions $\v{t}^{(n)}$ as \emph{tangent vectors}. The intuition behind this name and the importance of $\v{t}^{(n)}$ can be explained by noticing that any convex function $g(x)$ lies fully under its tangent at any point $y$, denoted as $t_y(x) \geq g(x)$, with equality guaranteed only for \mbox{$t_y(y) = g(y)$}. It follows from the definition of $\v t^{(n)}$ that its majorisation curve $f_{\v t^{(n)}}(x)$ is parallel to the $n$-th linear piece of $f_{\v p}(x)$  for $x\in\left[(n-1)/d,\,n/d\right]$, and the first and last elements of $\v t^{(n)}$ guarantee tangency and normalisation. Finally, since the adjacent linear fragments of $f_{\v p}(x)$ share the elbows of the function, the consequent tangent vectors $\v t^{(n)},\,\v t^{(n+1)}$ are both tangent at a selected elbow, $f_{\v p}(n/d) = f_{\v t^{(n)}}(n/d) = f_{\v t^{(n+1)}}(n/d)$. Therefore, any convex combination of the form $a \v t^{(n)} + (1-a)\v t^{(n+1)}$ will be ''tangent'' at the $n$-th elbow of the $\v p$ majorisation curve. The fact that $\v{t}^{(n)}$ may be a quasi-probability distribution does not pose a problem, since this vector can be projected back onto the probability simplex. The projected vector $\v{t}^{(n)}$ will be denoted by $\v{t}^{(n)}_{\text{proj}}$, and can be obtained by successively applying the map 
\begin{equation}
\left\{t^{(n)}_{m-1},\,t^{(n)}_m\right\} \longmapsto \left\{\min\left(t^{(n)}_{m-1}+t^{(n)}_m,\,t^{(n)}_{m-1}\right),\,\max\left(t^{(n)}_m,\,0\right)\right\}    
\end{equation}
to pairs of entries of $\v{t}^{(n)}$ going from $m = d$ to $m = 2$. In each step, the map either zeros the second component by shifting its value to the first one or, if the second component is non-negative, it leaves them both unperturbed. Geometrically, the state is shifted along the edges of the future cone of $\v{t}^{(n)}$ and every time it hits a plane defining one of the faces of the probability simplex $\Delta_d$, a new direction is selected, until the state is composed exclusively of non-negative entries.

Using Lemma~\ref{lem_incomparablecone}, we can now prove the following theorem that specifies the past cone.
\begin{thm}[Past cone]
\label{thm_pastcone}
The past cone of $\v{p}$ is given by
\begin{equation}
        \T_{-}(\v{p}) = \v{\Delta}_d \backslash \operatorname{int} (\mathbb{T}) \, .
\end{equation}
\end{thm}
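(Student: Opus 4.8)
The plan is to obtain $\T_-(\v p)$ from Lemma~\ref{lem_incomparablecone} by complementation, the only substantive ingredient beyond set algebra being a single geometric fact about how $\T_+(\v p)$ sits inside $\mathbb{T}$. By Definition~\ref{def_MajorisationCones} the incomparable region is the complement of the past and future cones, so the three regions tile the simplex, $\Delta_d=\T_+(\v p)\cup\T_-(\v p)\cup\T_\emptyset(\v p)$, with $\T_\emptyset(\v p)$ disjoint from the other two; equivalently $\T_+(\v p)\cup\T_-(\v p)=\Delta_d\setminus\T_\emptyset(\v p)$. First I would substitute $\T_\emptyset(\v p)=[\operatorname{int}(\mathbb{T})\setminus\T_+(\v p)]\cap\Delta_d$ and simplify; since every set involved lies in $\Delta_d$, the complement collapses to
\begin{equation}
\T_+(\v p)\cup\T_-(\v p)=\big(\Delta_d\setminus\operatorname{int}(\mathbb{T})\big)\cup\T_+(\v p).
\end{equation}
Writing $R:=\Delta_d\setminus\operatorname{int}(\mathbb{T})$ for the candidate set, this already forces $\T_-(\v p)\setminus\T_+(\v p)=R\setminus\T_+(\v p)$, so $\T_-(\v p)$ and $R$ can only disagree inside the future cone.

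It therefore suffices to show $\T_-(\v p)\cap\T_+(\v p)=R\cap\T_+(\v p)$. The left-hand side is the permutation orbit $\{\Pi\v p:\v\pi\in\mathcal S_d\}$, since $\v q$ lies in both cones precisely when $\v q^\downarrow=\v p^\downarrow$. For the right-hand side I would first record that $\T_+(\v p)\subseteq\mathbb{T}$: each tangent vector obeys $\v t^{(j)}\succ\v p$, its majorisation curve being a tangent lying above the concave curve $f_{\v p}$, so $\T_+(\v p)\subseteq\T_+(\v t^{(j)})\subseteq\mathbb{T}$ by transitivity of majorisation. As $\mathbb{T}$ is closed (a finite union of convex hulls of finite point sets), this gives $R\cap\T_+(\v p)=\T_+(\v p)\cap\partial\mathbb{T}$, and the whole theorem reduces to the single claim
\begin{equation}
\T_+(\v p)\cap\partial\mathbb{T}=\{\Pi\v p:\v\pi\in\mathcal S_d\}.
\end{equation}

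To prove this claim I would argue both inclusions. For ``$\supseteq$'', each $\Pi\v p$ is a vertex at which the future cones of the adjacent tangent vectors are glued onto $\T_+(\v p)$; I would produce a supporting hyperplane at $\Pi\v p$, or equivalently show that every neighbourhood of $\Pi\v p$ meets $\Delta_d\setminus\mathbb{T}$, so that $\Pi\v p\notin\operatorname{int}(\mathbb{T})$. For ``$\subseteq$'', any $\v q\in\T_+(\v p)$ that is not a permutation of $\v p$ is \emph{strictly} majorised by $\v p$, hence $f_{\v q}$ lies strictly below $f_{\v p}$ at some interior elbow; because $\partial\mathbb{T}$ is traced by the tangent construction pinned to the elbows of $f_{\v p}$, this strict gap places $\v q$ in $\operatorname{int}(\mathbb{T})$. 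Together with the previous inclusion this yields $R\cap\T_+(\v p)=\T_-(\v p)\cap\T_+(\v p)$, and hence $\T_-(\v p)=R=\Delta_d\setminus\operatorname{int}(\mathbb{T})$.

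The main obstacle is this geometric claim, and specifically the fine structure of $\partial\mathbb{T}$: one must verify that $\mathbb{T}$ equals the closure of its interior, so that no lower-dimensional spurious pieces of the boundary interfere, and that the projection $\v t^{(n)}\mapsto\v t^{(n)}_{\mathrm{proj}}$ onto $\Delta_d$ preserves the pinning of $\partial\mathbb{T}$ to the elbows of $f_{\v p}$. I expect the cleanest route is to prove the claim one Weyl chamber at a time, where both $\T_+(\v p)$ and $\mathbb{T}$ admit explicit polytopal descriptions and the separation statement reduces to finitely many majorisation inequalities evaluated at the elbows.
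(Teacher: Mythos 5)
Your proposal follows the same route as the paper: both obtain $\T_-(\v p)$ by complementing $\T_\emptyset(\v p)\cup\T_+(\v p)$ in $\v{\Delta}_d$ and then substituting Lemma~\ref{lem_incomparablecone}. The difference is that the paper stops there, whereas you correctly observe that the bare set algebra only yields $\T_-(\v p)\cup\T_+(\v p)=\bigl(\v{\Delta}_d\setminus\operatorname{int}(\mathbb{T})\bigr)\cup\T_+(\v p)$, so the stated identity additionally requires that $\T_+(\v p)\setminus\operatorname{int}(\mathbb{T})$ coincide with the permutation orbit of $\v p$ --- the only points of the future cone that genuinely belong to the past cone. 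That residual claim, $\T_+(\v p)\cap\partial\mathbb{T}=\{\Pi\v p\}$, is precisely what the paper's one-line proof implicitly assumes, so isolating it is a genuine gain in rigour rather than wasted effort. Your reduction to it is sound: $\T_+(\v p)\subseteq\mathbb{T}$ follows from $\v t^{(j)}\succ\v p$ and transitivity, and $\T_+(\v p)\cap\T_-(\v p)=\{\Pi\v p\}$ is immediate from antisymmetry of majorisation up to permutation. The only part you leave unfinished is the two inclusions of the boundary claim itself (supporting-hyperplane argument at each $\Pi\v p$, and strict majorisation forcing membership in $\operatorname{int}(\mathbb{T})$); to the extent a fully rigorous proof is wanted, that is where the remaining work lies, but on this point your proposal is already more complete than the paper's own argument.
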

\begin{proof}
    One only needs to use the fact that
    \begin{equation}
        \label{eq:pastconeIT}
        \T_{-}(\v{p}) = \v{\Delta}_d \texttt{\textbackslash} \left(\T_{\emptyset}(\v{p})\cup \T_{+}(\v{p})\right)  \, ,
\end{equation}
and employ Lemma~\ref{lem_incomparablecone} to replace $\T_{\emptyset}(\v{p})$ in the above with Eq.~\eqref{eq:incomparableconeIT}.
\end{proof}

Let us make a few comments on the above results. First, note that the incomparable region arises only for $d \geq 3$. This can be easily deduced from Lemma~\ref{lem_incomparablecone}, as for $d=2$ the two extreme points, $\v t^{(1)}$ and $\v t^{(2)}$, are precisely the initial state~$\v{p}$. Second, the future thermal cone is symmetric with respect to the maximally mixed distribution $\v \eta$, and consequently, the incomparable and past cones also exhibit a particular symmetry around this point. As we shall see, this symmetry is lost when we go beyond the limit of infinite temperature. Third, although the past cone is not convex as a whole, it is convex when restricted to any single Weyl chamber. 
Therefore, we may note that the tangent vectors $\v t^{(n)}$ provide the extreme points of the past not only from the viewpoint of a single-chamber but also to the entire probability simplex. This can be understood by noting that $\v t^{(n)}$ are located at the boundary between the incomparable and the past cone, and by symmetry, it applies to all their permuted versions. As a consequence, the past is constructed from $d!$ copies of the past in the canonical Weyl chamber, each copy transformed according to the corresponding permutation $\v{\pi}$~(see Fig.~\hyperref[fig-past-chamber]{\ref{fig-past-chamber}a, b}). 
Finally, one can make an analogy to special relativity with bistochastic matrices imposing a causal structure in the probability simplex $\Delta_d$. There exists a ``light cone'' for each point in $\Delta_d$, which divides the space into past, incomparable, and future regions~(see Fig.~\hyperref[fig-past-chamber]{\ref{fig-past-chamber}c}).
\begin{figure}[t]
    \centering
    \includegraphics{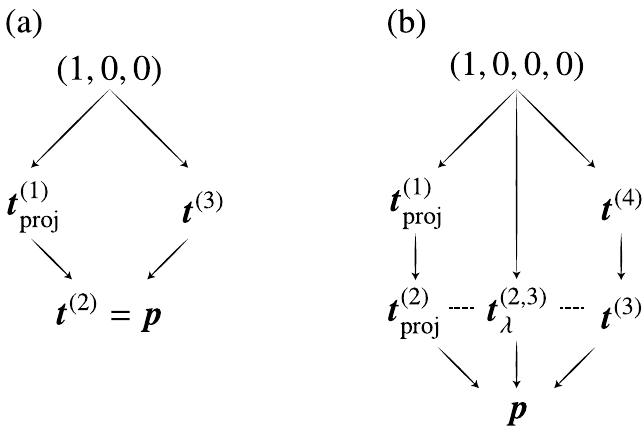}
    \caption{\label{fig-lattice-diagram} \emph{Partial-order diagrams for majorisation}. Graphical representation of the partially ordered set formed by the extreme points of the past cone in the probability simplex $\Delta_d$. Each arrow indicates that one of the elements precedes the other in the majorisation ordering. (a) Diagram for $d=3$. (b) Diagram for $d=4$. Note that any convex combination between $\v{t}^{(2)}_{\text{proj}}$ and $\v t^{(3)}$, here denoted by $\v{t}^{(2,3)}_{\lambda}$, results in an incomparable vector with respect to $\v{t}^{(1)}_{\text{proj}}$ and $\v{t}^{(4)}$.
   }
\end{figure}

The central idea behind Lemma~\ref{lem_incomparablecone} and Theorem~\ref{thm_pastcone} can be better understood through a visualisation using partial-order diagrams. To illustrate the principles of such diagrams, we will first focus on the special case of a three-level system. Then, the past cone has three non-trivial extreme points: $\v t^{(1)}$, $\v t^{(2)}$ and $\v t^{(3)}$. Furthermore, as shown in Fig.~\hyperref[fig-lattice-diagram]{\ref{fig-lattice-diagram}a}, these extreme points satisfy the following partial-order relation: the sharp state $\v s_1$ majorises both $\v t^{(1)}$ and $\v t^{(3)}$, and these two vectors majorise the initial state $\v p=\v{t}^{(2)}$. As it was proved in Lemma~\ref{lem_incomparablecone}, the union of the future cones of these extreme points provides us, after subtracting the future of the vector $\v{p}$, with the incomparable region of $\v p$. In the particular case of $d=3$ since $\v{t}^{(1)},\,\v{t}^{(3)}\succ\v{t}^{(2)}$, we find that $\operatorname{conv}[\mathcal{T}_+(\v{t}^{(1)}),\mathcal{T}_+(\v{t}^{(2)})] = \mathcal{T}_+(\v{t}^{(1)})$ and similarly for $\v{t}^{(3)}$. However, it is important to note the fact that $\v t^{(1)}$ and $\v t^{(3)}$ are incomparable, and in turn, their respective future cones after subtracting future of $\v{p}$ characterise disjoint parts of the incomparable region. Finally, the tangent vector $\v t^{(2)}$ reduces to the original probability vector, $\v t^{(2)} = \v p$, only for $d = 3$, and this fact is fully understood from the construction of the $\v t^{(n)}$-vectors (see Appendix~\ref{sec_appendix}). 

It is evident from Lemma~\ref{lem_incomparablecone} that each pair of tangent vectors $(\v t^{(n)},\,\v t^{(n+1)})$ characterises a given part of the incomparable region. However, notice that the futures of the extreme points considered one by one do not give the full description of the incomparable region -- one needs to consider their convex hulls to fill in the gaps. This particular feature of the construction can be demonstrated by considering the case of $d = 4$. Observe that in this case, we have a set of four tangent vectors $\v t^{(n)}$ with $n\in\{1,2,3,4\}$. Straightforward calculation shows that $\v t^{(1)}$ majorises $\v t^{(2)}$ and $\v t^{(4)}$ majorises $\v t^{(3)}$, therefore we find certain simplification, namely $\operatorname{conv}[\mathcal{T}_+(\v{t}^{(1)}),\mathcal{T}_+(\v{t}^{(2)})] = \mathcal{T}_+(\v{t}^{(1)})$ and similarly for $\v{t}^{(3)}$ and $\v{t}^{(4)}$. Nevertheless, $\v t^{(1)}$ is incomparable to $\v t^{(4)}$; similarly $\v t^{(2)}$ belongs to the incomparable region of $\v t^{(3)}$ [see Fig.~\hyperref[fig-lattice-diagram]{\ref{fig-lattice-diagram}b}].
From this we find the non-inclusions $\mathcal{T}_{+}(\v t^{(1)}) \not\subset \mathcal{T}_{+}(\v t^{(4)})$ and $\mathcal{T}_{+}(\v t^{(4)}) \not\subset \mathcal{T}_{+}(\v t^{(1)})$, similarly $\mathcal{T}_{+}(\v t^{(2)}) \not\subset \mathcal{T}_{+}(\v t^{(3)})$ and $\mathcal{T}_{+}(\v t^{(3)}) \not\subset \mathcal{T}_{+}(\v t^{(2)})$. Naively, one may be led to a conclusion that the incomparable region can be characterised by the future cones of $\v t^{(1)}$ and $\v t^{(4)}$ alone. However, any convex combination $\lambda \v{t}^{(2)} + (1-\lambda)\v {t}^{(3)} \equiv \v t^{(2,3)}_\lambda$ results in an incomparable vector $\v{t}^{(2,3)}_\lambda \in \mathcal{T}_\emptyset(\v{t}^{(i)})$ for $i = 1,2,3,4$ and $0<\lambda<1$
[see Fig.~\hyperref[fig-lattice-diagram]{\ref{fig-lattice-diagram}c}], and hence, in a ``new'' fragment of the incomparable region. In order to account for the entire incomparable region one must take the union of all future cones of $\v t^{(2,3)}_\lambda$ for $\lambda \in [0,1]$. This corresponds to the convex hull of the future cones\footnote{Geometrically, the mixture of $\v t^{(2)}$ and $\v t^{(3)}$ corresponds to the edge that connects these two points.} $\mathcal{T}_{+}(\v t^{(2)})$ and $\mathcal{T}_{+}(\v t^{(3)})$. Furthermore, the construction is limited only to convex combinations of futures for consecutive tangent vectors since mixtures of two non-successive ones, for instance $\v t^{(1)}$ and $\v t^{(4)}$, do not give any additions to the incomparable region. Such combinations belong to the past cone of $\v p$ as every point of $\lambda \v{t}^{(1)} + (1-\lambda)\v{t}^{(4)}$ would majorise $\v{p}$.


\subsection{Links to other resource theories}
\label{SubSec_LinkstootherQuantumTheorioes}

\begin{figure}[t]
    \centering
    \includegraphics{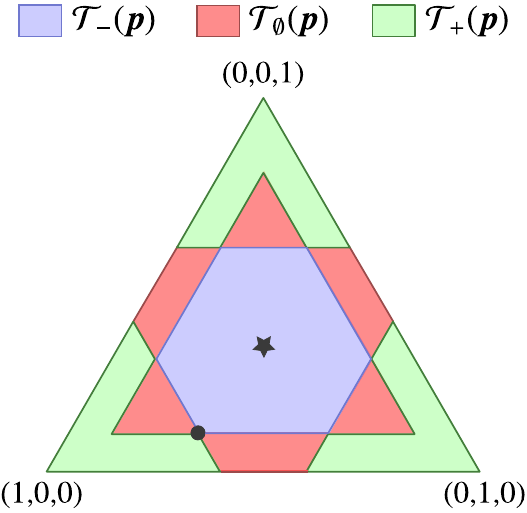}
    \caption{\label{fig-entanglementcone} \emph{Entanglement cone in the simplex of the Schmidt coefficients of a $3\times 3$ system}. Conversely to thermodynamics, the past of entanglement transformations is the thermodynamic future and vice-versa. The black dot $\bullet$ indicates the Schmidt vector of the initial state \mbox{$\v p = (0.7,0.2,0.1)$}, whereas the black star $\bigstar$ represents the maximally entangled state \mbox{$\v \eta = (1/3,1/3,1/3)$}.}
\end{figure}

The two well-known examples of majorisation-based resource theories, where our results are also applicable, include the resource theories of entanglement and coherence. These are defined via the appropriate sets of free operations and free states: local operations and classical communication (LOCC) and separable states in entanglement theory~\cite{Horodecki2009}, and incoherent operations (IO) and incoherent states in coherence theory~\cite{Plenio2014}. Within each of these theories, there exists a representation of quantum states via probability distributions that is relevant for formulating state interconversion conditions under free operations. In entanglement theory, a pure bipartite state \mbox{$\rho = \ketbra{\Psi}{\Psi}$} can be written in terms of the Schmidt decomposition given by $\ket{\Psi} = \sum_i a_i \ket{\psi_i, \psi'_i}$, and represented by a probability vector $\v{p}$ with $p_i = |a_i|^2$. Then, Nielsen’s theorem~\cite{nielsen1999conditions} states that an initial state $\v p$ can be transformed under LOCC into a target state $\v q$ if and only if $\v p \prec \v q$. Similarly, in the resource theory of coherence with respect to a fixed basis $\{ \ket{i}\}$, one can represent a pure state $\rho = \ketbra{\psi}{\psi}$ by a probability vector $\v p$ with $p_i = |\langle i | \psi \rangle|^2$. Then, a given initial state $\v p$ can be transformed into $\v q$ via incoherent operations if and only if $ \v p \prec \v q$~\cite{Du2015}.

Therefore, we observe that the partial order emerging in the two cases is precisely the opposite to the thermodynamic order in the infinite temperature limit (For more details see Ref.~\cite{Kollas_master}). Consequently, the thermodynamic past and future become the future and past for entanglement and coherence, while the incomparable region remains unchanged~(see Fig.~\ref{fig-entanglementcone}). \rd{Note that, for entanglement and coherence, sharp states $\v s$ are in the future cone of any given state, while for thermodynamics (at $\beta = 0$), they are in the past. The flat distribution $\v \eta$ is in the past of any state in entanglement and coherence theories, whereas in thermodynamics it is in the future.}

\rd{One can make a general remark concerning resource monotones, applying to the entanglement, coherence and thermodynamic scenarios alike. Consider an entangled state \mbox{$\ket\psi \in \mathcal{H}_N\otimes \mathcal{H}_N$} with the associated Schmidt coefficients $\v{p}$ and concurrence $\mathcal{C}(\ket{\psi})$ as an example of a resource monotone \cite{HW97}. If another state $\ket{\phi}$ with Schmidt coefficients $\v{q}$ is in the future cone of $\ket{\psi}$, $\v{q}\in\mathcal{T}_+(\v{p})$, then $\mathcal{C}(\ket{\phi}) \leq \mathcal{C}(\ket{\psi})$. Otherwise, if it lies in its past cone, $\v{q}\in\mathcal{T}_-(\v{p})$ we know that $\mathcal{C}(\ket{\phi}) \geq \mathcal{C}(\ket{\psi})$. However, if the two states are incomparable,  $\v{q}\in\mathcal{T}_\emptyset(\v{p})$, nothing can be said about the relation between both concurrences.}

\begin{figure*}
    \centering
    \includegraphics{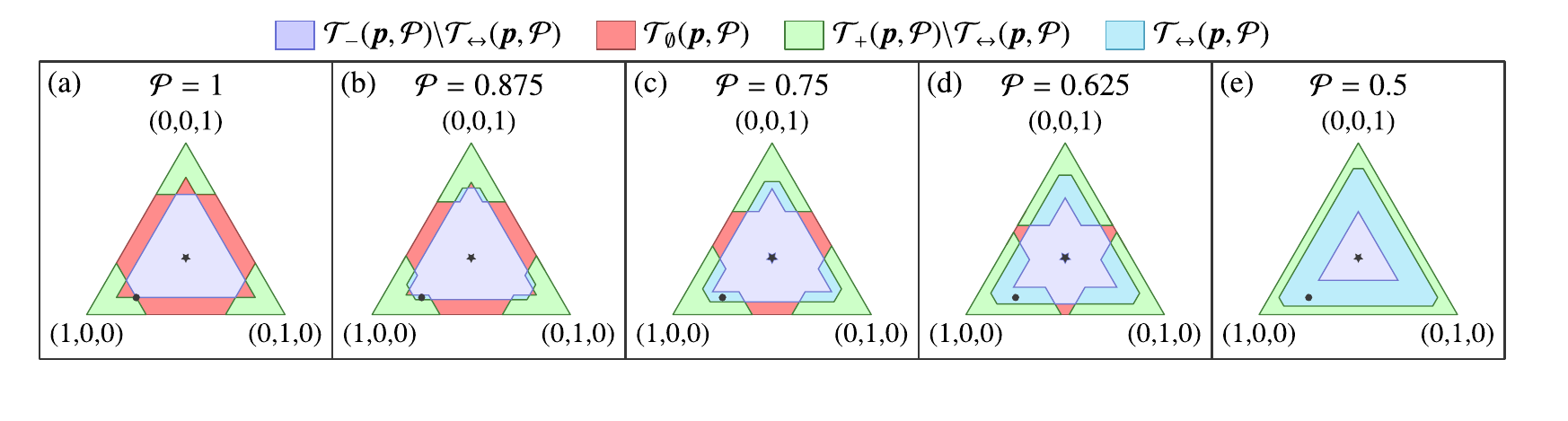}
    \caption{\emph{Probabilistic majorisation cones for $d=3$}. \rd{For a three-level system with a state given by $\v{p} = (0.7,0.2,0.1)$ represented by a black dot~$\bullet$ and a maximally entangled state $\v \eta = (1/3,1/3,1/3)$ represented by a black star $\bigstar$, we plot its probabilistic majorisation cone for probabilities of transformation $\mathcal{P}$ decreasing from  $1$ to $0.5$ with $0.125$ steps (a-e), respectively. Observe that, for $\mathcal{P} = 1$, we recover the structure of the standard majorisation cones, while as $\mathcal{P}$ decreases the interconvertible region $\mathcal{T}_{\leftrightarrow}(\v{p},\mathcal{P})$ expands and the incomparable region $\mathcal{T}_{\emptyset}(\v{p},\mathcal{P})$ shrinks, disappearing altogether between panel (d)~and~(e)}.}
    \label{fig-probabilistic-cones-examples} 
\end{figure*}

\subsection{Probabilistic majorisation cones} \label{sec:prob_maj_con}

\rd{Finally, it should be observed that the notion of majorisation cones, as presented until now, deals with deterministic transformations. However, this approach can be extended to probabilistic transformations using Vidal's criterion for entanglement~\cite{vidal1999entanglement,Vidal2000} and coherence transformations~\cite{ZhuEtAl2017coherence} under LOCC and IO, respectively. In the case of probabilistic transformations of bipartite entangled states under LOCC, this is captured by the following theorem.} 
\rd{

\begin{restatable}[Theorem~1 of Ref.~\cite{vidal1999entanglement}]{thm}{vidalentanglement}\label{thm_vidal}
    Consider two bipartite pure states $\ket{\psi}$ and $\ket{\phi}$, whose Schmidt decompositions are described by probability vectors $\v p$ and $\v q$, respectively. The maximal transformation probability from $\ket{\psi}$ to $\ket{\phi}$ under LOCC is given by
    \begin{equation} \label{eq:probal_crit_IO}
        \mathcal{P}(\v{p}, \v{q}) = \min_{1\leq k \leq d} \frac{\sum_{j = k}^d p^{\downarrow}_j}{\sum_{j = k}^d q
        ^{\downarrow}_j}.
    \end{equation}
\end{restatable}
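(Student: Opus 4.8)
The plan is to establish the two inequalities $\mathcal{P}(\v p, \v q) \leq \mathcal{P}^\ast$ and $\mathcal{P}(\v p, \v q) \geq \mathcal{P}^\ast$ separately, writing $\mathcal{P}^\ast := \min_{1\leq k\leq d}\left(\sum_{j=k}^d p^{\downarrow}_j\right)/\left(\sum_{j=k}^d q^{\downarrow}_j\right)$ for the claimed optimal probability. The upper bound is a monotone argument, while the lower bound is achieved by an explicit LOCC protocol, so the two halves rely on complementary structural facts about local operations.

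For the converse, I would introduce the family of tail sums $E_k(\v p) := \sum_{j=k}^d p^{\downarrow}_j$ for $k\in\{1,\dots,d\}$, which are Schur-concave functions of the Schmidt vector and hence entanglement monotones: they do not increase on average under LOCC. Any protocol converting $\ket{\psi}$ into $\ket{\phi}$ with probability $\mathcal{P}$ produces an ensemble $\{(\mathcal{P},\ket{\phi}),(r_\ell,\ket{\phi_\ell})\}_\ell$ with $\sum_\ell r_\ell = 1-\mathcal{P}$, and monotonicity yields
\begin{equation}
    E_k(\v p) \geq \mathcal{P}\,E_k(\v q) + \sum_\ell r_\ell\, E_k(\v q_\ell).
\end{equation}
Since every $E_k \geq 0$, dropping the final sum gives $\mathcal{P}\leq E_k(\v p)/E_k(\v q)$ for each $k$, and minimising over $k$ delivers $\mathcal{P}\leq \mathcal{P}^\ast$.

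For achievability I would exhibit an ensemble saturating $\mathcal{P}^\ast$ by taking the ``failure'' branch to be a product state with Schmidt vector $\v e_1 = (1,0,\dots,0)$, so that the target ensemble $\{(\mathcal{P}^\ast,\ket{\phi}),(1-\mathcal{P}^\ast,\ket{\mathrm{prod}})\}$ has averaged Schmidt vector $\v w := \mathcal{P}^\ast\,\v q^{\downarrow} + (1-\mathcal{P}^\ast)\,\v e_1$. Because $\v q^{\downarrow}$ is non-increasing and the extra weight is placed on the top entry, $\v w$ is already sorted. A direct computation of partial sums reduces the majorisation condition $\v p \prec \v w$ to the inequalities $\mathcal{P}^\ast \sum_{j=k+1}^d q^{\downarrow}_j \leq \sum_{j=k+1}^d p^{\downarrow}_j$ for all $k$, which hold precisely because $\mathcal{P}^\ast$ is the minimum of the relevant ratios. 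The generalised (ensemble) version of Nielsen's theorem then guarantees that the transformation producing $\ket{\phi}$ with probability $\mathcal{P}^\ast$ is realisable by LOCC, so $\mathcal{P}(\v p,\v q)\geq \mathcal{P}^\ast$, and combined with the converse this gives equality.

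The main obstacle is supplying the two LOCC ingredients rather than the combinatorics: the monotonicity of the tail sums $E_k$ under general multi-round, two-way LOCC, and the ensemble generalisation of Nielsen's criterion used for the construction. Both can be reduced to the one-way, single-measurement setting via the Lo--Popescu theorem, after which $E_k$-monotonicity follows from Schur-concavity under the averaging induced by a local POVM, and the ensemble criterion furnishes the achievability branch. Once these structural facts are fixed, the remaining partial-sum verifications are routine.
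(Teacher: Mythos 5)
The paper never proves this statement: it is imported verbatim as Theorem~1 of Ref.~\cite{vidal1999entanglement} and used as a black box in Sec.~\ref{sec:prob_maj_con} and Appendix~\ref{app:probal_deriv}, so there is no internal proof to compare against. Your proposal is, in effect, a reconstruction of Vidal's original argument, and it is structurally correct: the converse via ensemble monotonicity of the tail sums $E_k$, and achievability via the ensemble $\{(\mathcal{P}^\ast,\ket{\phi}),(1-\mathcal{P}^\ast,\ket{\mathrm{prod}})\}$ certified by the Jonathan--Plenio ensemble generalisation of Nielsen's theorem, are exactly the two halves of the standard proof, and your partial-sum check that $\v p\prec\v w$ is equivalent to $\mathcal{P}^\ast E_{k+1}(\v q)\leq E_{k+1}(\v p)$ for all $k$ is right (including the observation that $\v w$ is already sorted). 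The one imprecision is the claim that Schur-concavity of $E_k$ on the simplex by itself makes it an entanglement monotone in the ensemble-averaged sense; what is actually needed is that $E_k$ is a concave, unitarily invariant function of the reduced density operator, or equivalently the lemma that an LOCC transformation $\ket{\psi}\to\{(p_\ell,\ket{\phi_\ell})\}$ forces $\v{\lambda}_{\psi}\prec\sum_\ell p_\ell\,\v{\lambda}_{\phi_\ell}^{\downarrow}$. You correctly flag this, together with the Lo--Popescu reduction to one-way protocols, as the outstanding structural input, and since both are standard results the argument closes once they are supplied or cited.
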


In Appendix~\ref{app:probal_deriv} we discuss the extension of majorisation cones to probabilistic ones, denoted as $\mathcal{T}_i(\v p; \mathcal{P})$, with \mbox{$i \in \{-, \emptyset, + \}$} and $\mathcal{P}$ being the minimal probability of transformation. Here we will limit ourselves to a brief qualitative discussion
about the behaviour of the probabilistic majorisation cones as the transformation probability changes from $\mathcal{P}(\cdot,\cdot) = 1$ to $\mathcal{P}(\cdot,\cdot) < 1$ (see Fig.~\ref{fig-probabilistic-cones-examples}). Note that the only common points of the future and past for $\mathcal{P} = 1$ are the current state of the system $\v p$ and its permutations. Conversely, for $\mathcal{P} < 1$ this is not the only case; consequently, we may define the \textit{interconvertible region} of $\v p$ at the probability level $\mathcal{P}$ as the intersection between the probabilistic past and probabilistic future, $\mathcal{T}_{\leftrightarrow}(\v p, \mathcal{P}) \equiv \mathcal{T}_+(\v p, \mathcal{P}) \cap \mathcal{T}_-(\v p, \mathcal{P})$. This region is non-empty for every $\mathcal{P} < 1$. It is easily shown that the future and the past cones grow as the probability of transformation decreases, $\mathcal{T}_+(\v p, \mathcal{P}') \subset \mathcal{T}_+(\v p, \mathcal{P}) $ and $\mathcal{T}_-(\v p, \mathcal{P}') \subset \mathcal{T}_-(\v p, \mathcal{P})$ for $\mathcal{P}' > \mathcal{P}$. Therefore, the only region that decreases together with $\mathcal{P}$ is the incomparable region, $\mathcal{T}_\emptyset(\v p, \mathcal{P}') \supset \mathcal{T}_\emptyset(\v p, \mathcal{P})$. Interestingly, for every state~$\v p$, we observe that there is a critical value $\mathcal{P}^*$, at which no two states are incomparable, i.e., $\mathcal{T}_\emptyset(\v p, \mathcal{P}) = \emptyset$~[see Figs.~\hyperref[fig-probabilistic-cones-examples]{\ref{fig-probabilistic-cones-examples}a,~e}]. Analogous results hold in the context of coherence, as Theorem~\ref{thm_vidal} has its counterpart when considering pure state transformations under IO operations~\cite{ZhuEtAl2017coherence,Cunden2021}.
}

\rd{Finally, it is worth mentioning, that a criterion similar to the Vidal's criterion was established for probabilistic transformation in the context of thermal operations~\cite{AOP16}. In this case, the construction of the probabilistic cones for majorisation generalises directly to the thermomajorisation by using the construction of thermomajorisation cones which we will introduce in the next section.}


\section{Thermal cones} 
\label{Sec:Thermal-cones}

Let us turn our attention to a more general scenario, assuming that the temperature is finite, $\beta> 0$. In this case, the reachability of energy-incoherent states under Gibbs-preserving matrices can be studied by introducing the notion of \emph{thermal cones}, defined as follows:

\begin{defn}[Thermal cones]\label{def_ThermalCones}
The set of states that an energy-incoherent state $\v p$ can be mapped into by Gibbs-preserving matrices is called the \textbf{future thermal cone} $\T^{\, \beta}_+(\v{p})$. Similarly, the set of states that can be mapped to $\v p$ by Gibbs-preserving matrices is called the \textbf{past thermal cone} $\T^{\, \beta}_-(\v{p})$. Finally, the set of states that are neither in the past nor in the future of $\v p$ is called the \textbf{incomparable thermal region} $\T^{\, \beta}_{\emptyset}(\v{p})$.
\end{defn}

Despite apparent similarities, the case of $\beta > 0$ turns out to be significantly harder than $\beta=0$. Difficulties stem mostly from a simple fact demonstrated in Ref.~\cite{Korzekwa2017} -- even though thermomajorisation forms a lattice in each $\beta$-order, it does not provide a common lattice for the entire probability simplex. Thus, before extending Lemma~\ref{lem_incomparablecone} and Theorem~\ref{thm_pastcone} to the thermal setting (proofs of which rely heavily on the existence of a join), we will introduce an \emph{embedding lattice} -- a structure which encompasses thermomajorisation order as its subset -- and we will demonstrate operations shifting to and from the newly introduced picture.


\subsection{Embedding lattice} 
\label{sec_embeddingLattice}

To define a $\beta$-dependent embedding $\mathfrak{M}$ of the simplex $\Delta_d$ into a subspace $\Delta^{\mathfrak{M}}_d\subset\Delta_{2^d - 1}$, illustrated by a graphical example in Fig.~\ref{fig-embeddingscheme}, we first introduce the vector $\v \Gamma^{\mathfrak{M}}$ with entries given by all possible partial sums of the Gibbs distribution,
\begin{equation}
    \v \Gamma^{\mathfrak{M}} = \left\{\sum_{i\in I}\gamma_i:\,I\in2^{\{1,\hdots,d\}}\right\},
\end{equation}
with $2^{\{1,\hdots,d\}}$ denoting the power set of $d$ indices. Moreover, we enforce that it is ordered non-decreasingly, i.e., for \mbox{$i > j$} we have \mbox{$\Gamma^{\mathfrak{M}}_i \geq \Gamma^{\mathfrak{M}}_j$}. Then, the embedded probability vector, \mbox{$\v{p}^{\mathfrak{M}}:=\mathfrak{M}(\v p)$}, is defined by 
\begin{align}
    p^{\mathfrak{M}}_i & = f^{\, \beta}_{\v p}\left(\Gamma^{\mathfrak{M}}_i\right) -f^{\, \beta}_{\v p}\left(\Gamma^{\mathfrak{M}}_{i-1}\right) ,
\end{align}
where $f^{\, \beta}_{\v p}$ is the thermomajorisation curve of $\v{p}$.

Within this embedding, the thermomajorisation indeed proves to be almost-standard majorisation relation between the embedded distributions,  
\begin{equation} \label{eq_embedSpace_majorisationDefinition}
    \v p \succ_\beta \v q ~\Longleftrightarrow~ \forall j:~\sum_{i=1}^j p^{\mathfrak{M}}_i \geq \sum_{i=1}^j q^{\mathfrak{M}}_i ~\overset{\text{def}}{\Longleftrightarrow}~\v {p}^{\mathfrak{M}}\succ_{\mathfrak{M}}\v{q}^{\mathfrak{M}},
\end{equation}
where the last symbol $\succ_{\mathfrak{M}}$ denotes the majorisation variant related to the embedding lattice. Finally, we note that the only deviation from the standard majorisation lies in the convexity condition in the embedding space, which is imposed not on the probabilities $p^{\mathfrak{M}}_i$ themselves, but on their rescaled versions, 

\begin{equation} \label{eq_embedSpace_orderingRule}
    i \leq j \Rightarrow \frac{p^{\mathfrak{M}}_i}{\gamma^{\mathfrak{M}}_i} \geq \frac{p^{\mathfrak{M}}_j}{\gamma^{\mathfrak{M}}_j} ,
\end{equation}
with scaling factors directly related to the embedded majorisation curve of the Gibbs state $\v{\gamma}^{\mathfrak{M}}$. This ordering should be compared (but not confused) with the $\beta$-ordering introduced in Eq.~\eqref{eq_beta-ordering}, pointing to a relation with thermomajorisation which we will use to show the lattice structure of the introduced space.

The projection $\mathfrak{P}_{\v \pi}$ of an arbitrary probability vector \mbox{$\v{q}\in\Delta_{2^d-1}$} satisfying Eq.~\eqref{eq_embedSpace_orderingRule} onto a selected $\beta$-order $\v \pi$ in the original space can be defined descriptively as taking only those elbows of the embedded majorisation curve that match the values of cumulative Gibbs distribution for the selected permutation. Formally, the projected vector, \mbox{$\v q^{\mathfrak{P}}_{\v{\pi}}:=\left(\mathfrak{P}_{\v \pi}\left(\v q\right)\right)^{\, \beta}$}, is entry-wise defined by 
\begin{equation} \label{eq_embedding_porj_constr}
   \left(q^\mathfrak{P}_{\v{\pi}}\right)_i = \sum^{k(i)}_{j = k(i-1)} q_j ,
\end{equation}
with the indices $k(i)$ defined by the requirement that \mbox{$\Gamma^\mathfrak{M}_{k(i)} = \sum_{j=1}^i \gamma_{\v \pi^{-1}(j)}$}. 

In particular, it is worth noting two properties of the embedding $\mathfrak{M}$ and projections $\mathfrak{P}_{\v \pi}$. First, given a vector $\v p \in\Delta_d$ with a $\beta$-order $\v \pi_{\v p}$, we find that by construction $\mathfrak{P}_{\v \pi_{\v p}}\left(\mathfrak{M}\left(\v p\right)\right) = \v p$, which follows directly from Eq.~\eqref{eq_embedding_porj_constr}. On the other hand, for $\v \pi \neq \v \pi_{\v p}$  we find that $\v p \succ_\beta \mathfrak{P}_{\v \pi}\left(\mathfrak{M}\left(\v p\right)\right)$. The statement is easily shown by observing that the Lorenz curve of $\mathfrak{P}_{\v \pi}\left(\mathfrak{M}\left(\v p\right)\right)$ connects by line segments $d+1$ points of the Lorenz curve corresponding $f^\beta_{\v p}\left(\sum_i \left(\mathfrak{M}(p)^\mathfrak{P}_{\v{\pi}}\right)_i\right)$ and therefore majorisation is resolved by linear approximation of a convex function, $f[(1-t)x + t y] \geq (1-t)f(x) + tf(y)$ for any $x, y$. The second property is concerned with $\v q \in\Delta_{2^d-1}$ satisfying Eq.~\eqref{eq_embedSpace_orderingRule} and can be summarised as the fact that projecting and re-embedding the vector will always give the object majorised by the original vector: $\v q\succ_\mathfrak{M} \mathfrak{M}(\mathfrak{P}_{\v \pi}(\v q))$. It follows similarly to the prior majorisation by the argument of linear approximation of a convex function.

It is necessary to stress that the introduced embedding structure is distinct from the one used in the usual method of reducing thermomajorisation to majorisation~\cite{horodecki2013fundamental}. Most importantly, the standard approach requires going to the limit of infinite embedding dimension in order to recover thermomajorisation for arbitrary $\beta$ as a special case of standard majorisation. In our proposition, a thermomajorisation curve in dimension $d$ is embedded within a $2^d-1$-dimensional space. The main difference lies in the non-constant widths of the segments of the Lorenz curve and the fact that once the embedded vector $\v{p}^{\mathfrak{M}}$ is constructed, its entries should not be subject to reordering. Finally, we present the argument proving that the embedding together with embedded majorisation indeed provide a lattice structure.

\begin{figure*}
    \centering
    \includegraphics{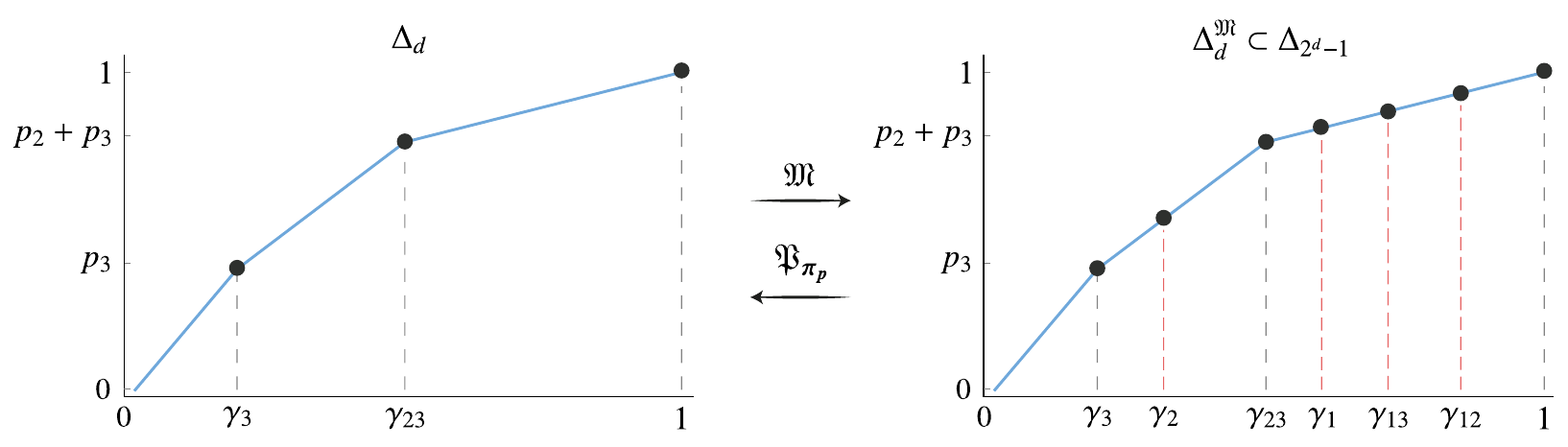}
    \caption{\label{fig-embeddingscheme} \emph{Embedded majorisation curve}. Embedding $\mathfrak{M}:\Delta_d\rightarrow\Delta_{2^d-1}$ of $d$-dimensional probability distribution $\v{p}$ (here $d=3$) into a $2^d-1 = 7$-dimensional space is most easily understood by noting that each thermomajorisation curve (left) has elbows corresponding to a subset of $\v{\Gamma}^{\mathfrak{M}}$ on the horizontal axis. The embedding includes all entries of $\v{\Gamma}^{\mathfrak{M}}$ by subdividing the Lorenz curve into $2^d - 1$ fragments. Conversely, the projection $\mathfrak{P}_{\v \pi_{\v p}}$ corresponds to selecting only a subset of of elbows that correspond to a selected order, in this case the original order $\v{\pi}_{\v p}$. In the $x$-axes we used the shorthand notation \mbox{$\gamma_{ij} = \gamma_i + \gamma_j$}. }
\end{figure*}

\begin{cor}
    The subset of the probability simplex $\Delta_{2^d - 1}$ satisfying Eq.~\eqref{eq_embedSpace_orderingRule} and subject to the embedded majorisation $\succ_\mathfrak{M}$ defined in Eq.~\eqref{eq_embedSpace_majorisationDefinition} forms a lattice.
\end{cor}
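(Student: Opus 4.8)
The plan is to translate the combinatorial statement into one about piecewise-linear curves, mirroring the classical Lorenz-curve proof that majorisation is a lattice~\cite{cicalese2002}, and then to carry out the meet and join constructions while taking care that the results remain inside the constrained subset. First I would associate to every admissible vector $\v q$ a curve $C_{\v q}$ on $[0,1]$: the piecewise-linear function with breakpoints at the fixed abscissae $x_i=\sum_{k\le i}\gamma^{\mathfrak M}_k$ and heights $C_{\v q}(x_i)=\sum_{k\le i} q_k$, joined to the endpoints $(0,0)$ and $(1,1)$. The ordering rule~\eqref{eq_embedSpace_orderingRule} says precisely that the segment slopes $q_i/\gamma^{\mathfrak M}_i$ are non-increasing, i.e.\ that $C_{\v q}$ is \emph{concave}; and because all admissible curves share the same breakpoint set, the embedded majorisation~\eqref{eq_embedSpace_majorisationDefinition} is equivalent to pointwise domination, $\v p\succ_{\mathfrak M}\v q \iff C_{\v p}\ge C_{\v q}$ on $[0,1]$. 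Two preliminary facts I would record are: (i) for two piecewise-linear functions with a common breakpoint set, pointwise comparison is equivalent to comparison at the breakpoints; and (ii) sampling a concave function at the nodes and linearly reinterpolating produces a concave function lying below the original.

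For the meet I would set $M=\min(C_{\v p},C_{\v q})$, which is concave because the hypograph of a pointwise minimum is the intersection of the two hypographs. Sampling $M$ at the nodes and reinterpolating yields a curve $C_{\v p\wedge\v q}$ that is concave, agrees with $M$ at every node, and lies below it, hence defines an admissible vector dominated by both $\v p$ and $\v q$. It is the \emph{greatest} such: any admissible lower bound $\v r$ satisfies $C_{\v r}\le M$, so $C_{\v r}\le C_{\v p\wedge\v q}$ at the nodes and therefore everywhere by fact (i), giving $\v r\preceq_{\mathfrak M}\v p\wedge\v q$.

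For the join I would work directly at the nodes, exploiting that domination of $C_{\v p}$ and $C_{\v q}$ is decided there. Setting $v_i=\max(C_{\v p}(x_i),C_{\v q}(x_i))$, the join is the least concave majorant (upper concave hull) of the finite point set $\{(x_i,v_i)\}$: this is piecewise-linear with breakpoints among the nodes, concave by construction, and the pointwise-smallest concave curve dominating the $v_i$. The resulting vector $\v p\vee\v q$ is admissible, dominates both $\v p$ and $\v q$, and is minimal, since any admissible upper bound has node-values $\ge v_i$ and, being concave, must dominate the least concave majorant. Endpoint normalisation is automatic because $\min$, $\max$ and the concave hull all fix the values $0$ at $x=0$ and $1$ at $x=1$. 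With meet and join in hand, the poset is a lattice.

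The main obstacle I anticipate is not the existence of the bounds but confinement to the admissible set: pointwise $\min$ and $\max$ of the curves may develop kinks strictly between the fixed nodes, so that the naive extremal curve is not the image of any embedded vector. The resampling step for the meet and the concave-hull step for the join are exactly what repair this, and the crux is proving that after the repair one still obtains the genuine infimum and supremum within the restricted poset — which is where facts (i) and (ii) above, together with concavity, do the work. The remaining bookkeeping (that inactive breakpoints, where consecutive slopes coincide, cause no trouble, and that the construction reduces to the standard uniform-width case of~\cite{cicalese2002} when $\v\gamma=\v\eta$) is routine.
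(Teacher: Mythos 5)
Your proof is correct, but it takes a different route from the paper. The paper disposes of this corollary in one line: it observes that the constrained subset of $\Delta_{2^d-1}$ with the order $\succ_{\mathfrak M}$ is isomorphic to thermomajorisation for the specific Gibbs state $\v\gamma^{\mathfrak M}$ restricted to a single Weyl chamber, and then cites the known fact (from Ref.~\cite{Korzekwa2017}) that thermomajorisation within one $\beta$-order forms a lattice. You instead build the meet and join explicitly: identifying admissible vectors with concave piecewise-linear curves on a fixed non-uniform breakpoint grid, taking the pointwise minimum (automatically concave, resampled at the nodes) for the meet, and the least concave majorant of the node-wise maxima for the join. All the steps check out — in particular, the secant slopes of a concave function are non-increasing, so resampling preserves admissibility; the node-wise maxima are non-decreasing and bounded by $1$, so the concave hull yields a genuine probability vector; and minimality/maximality follow because domination is decided at the shared nodes. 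Your construction is essentially the non-uniform-width generalisation of the Cicalese--Vaccaro argument, and it is in fact what the paper itself carries out later (Appendix~A.2, the adjusted join algorithm with the rescaled slope condition) when it actually needs the join for the proof of Lemma~\ref{lemma_incomparablefiniteT}; your least-concave-majorant description is a cleaner, non-iterative formulation of that same object. What your approach buys is self-containedness and an explicit formula for both lattice operations; what the paper's buys is brevity, at the cost of deferring the constructive content to the appendix.
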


\begin{proof}
The embedded majorisation $\succ_\mathfrak{M}$ may be reinterpretted as thermomajorisation defined for a specific Gibbs state $\v{\gamma}^\mathfrak{M}$ and restricted to a particular Weyl chamber of the probability space $\Delta_{2^d - 1}$ by comparing Eq.~\eqref{eq_embedSpace_orderingRule} with an analogous sorting rule from Eq.~\eqref{eq_beta-ordering}. This direct isomorphism between thermomajorisation order $\succ_\beta$ restricted to a single Weyl chamber, known to provide a lattice structure, and the embedded majorisation $\succ_\mathfrak{M}$ proves the statement.
\end{proof}


\subsection{Geometry of thermal cones}

As already mentioned, for finite temperatures the rules underlying state transformations are no longer captured by a majorisation relation, but rather by its thermodynamic equivalent known as thermomajorisation~\cite{Rusch,horodecki2013fundamental}. As a result, Birkhoff’s theorem cannot be employed anymore, and the characterisation of the future thermal cone is no longer given by Theorem~\ref{thm_futureinfinite}. However, the set of Gibbs-preserving matrices still forms a convex set~\cite{mazurek2018decomposability,mazurek2019thermal}, and the extreme points of the future thermal cone can be constructed by employing the following lemma:
\begin{lem}[Lemma 12 of Ref.~\cite{Lostaglio2018elementarythermal}]
    \label{lem_extreme}
	Given $\v{p}$, consider the following distributions $\v{p}^{\, \v \pi}\in \T^{\, \beta}_{+}(\v{p})$ constructed for each permutation $\v \pi\in \S_d$. For $i\in\left\{1,\dots,d\right\}$:
	\begin{enumerate}
		\item Let $x_i^{\v \pi}=\sum_{j=0}^{i} e^{-\beta E_{\v \pi^{-1}\left(j\right)}}$ and $y_i^{\v \pi}=f^{\, \beta}_{\v{p}}\left(x_i^{\v \pi}\right)$. 
		\item Define $p^{\v \pi}_i:=y^{\v \pi}_{\v \pi(i)} - y^{\v \pi}_{\v \pi(i)-1}$, with $y_{0}:=0$.
	\end{enumerate}
	Then, all extreme points of $\T^{\, \beta}_{+}(\v{p})$ have the form $\v{p}^{\v \pi}$ for some ${\v \pi}$. In particular, this implies that $\T^{\, \beta}_{+}(\v{p})$ has at most $d!$ extremal points.
\end{lem}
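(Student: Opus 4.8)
The plan is to realise $\T^{\,\beta}_+(\v p)=\{\v q:\v p\succ_\beta\v q\}$ (via Theorem~\ref{thm_HLPgeneralisation}) as a convex polytope and then enumerate its vertices. Since the set is already known to be convex, it suffices to show that every extreme point coincides with one of the $\v p^{\v\pi}$. First I would partition the simplex $\Delta_d$ into the $d!$ closed Weyl chambers $W_{\v\pi}$, each consisting of the vectors whose $\beta$-order is a given $\v\pi$. The point of this partition is that on a single chamber the map sending $\v q$ to its vector of partial sums $y^{\v q}_k=\sum_{j=1}^k q^{\,\beta}_j$ is linear, so the geometry becomes tractable chamber by chamber.

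Second, I would linearise the thermomajorisation constraint inside a fixed chamber. For $\v q\in W_{\v\pi}$ the curve $f^{\,\beta}_{\v q}$ is piecewise linear with elbows exactly at the abscissae $x^{\v\pi}_k$, while $f^{\,\beta}_{\v p}$ is concave. Hence $f^{\,\beta}_{\v q}\le f^{\,\beta}_{\v p}$ on all of $[0,1]$ is equivalent to the finitely many elbow inequalities $y^{\v q}_k\le f^{\,\beta}_{\v p}(x^{\v\pi}_k)=y^{\v\pi}_k$ for $k\in\{1,\dots,d-1\}$: between consecutive elbows $f^{\,\beta}_{\v q}$ is a chord, which lies below the corresponding chord of the concave $f^{\,\beta}_{\v p}$, which in turn lies below $f^{\,\beta}_{\v p}$ itself. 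Together with normalisation, non-negativity, and the chamber-defining $\beta$-ordering inequalities, this exhibits $Q_{\v\pi}:=\T^{\,\beta}_+(\v p)\cap W_{\v\pi}$ as a bounded polytope.

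Third, I would identify $\v p^{\v\pi}$ as the distinguished vertex of $Q_{\v\pi}$ at which all $d-1$ elbow inequalities are simultaneously tight, $y^{\v q}_k=y^{\v\pi}_k$. Solving these equalities reproduces exactly the increments in the construction, $p^{\v\pi}_i=y^{\v\pi}_{\v\pi(i)}-y^{\v\pi}_{\v\pi(i)-1}$. I would then check that $\v p^{\v\pi}$ is a legitimate member of $W_{\v\pi}$: it is normalised and, because $f^{\,\beta}_{\v p}$ is concave, its successive segment slopes are non-increasing, which is precisely the requirement that the ratios $p^{\v\pi}_i/\gamma_i$ be $\beta$-ordered by $\v\pi$; non-negativity follows likewise (after the projection onto the simplex where necessary). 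By construction its curve is the highest concave curve lying under $f^{\,\beta}_{\v p}$ with elbows at the $x^{\v\pi}_k$, so $\v p^{\v\pi}\succ_\beta\v q$ for every $\v q\in Q_{\v\pi}$; that is, $\v p^{\v\pi}$ is the unique thermomajorisation-maximal element of the future cone inside the chamber.

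Finally, I would argue that every extreme point of $\T^{\,\beta}_+(\v p)$ is one of the $\v p^{\v\pi}$. Any extreme point lies in some $Q_{\v\pi}$ and must be extreme there, so it is enough to rule out the vertices of $Q_{\v\pi}$ at which some elbow inequality is slack. Such a vertex $\v v$ is strictly thermomajorised by $\v p^{\v\pi}$ within the chamber, and using chamber-maximality together with the same-order majorisation structure one writes $\v v$ as a proper convex combination of the saturated vertices $\{\v p^{\v\pi'}\}$ (equivalently, $\T^{\,\beta}_+(\v p)=\operatorname{conv}\{\v p^{\v\pi}:\v\pi\in\S_d\}$), so $\v v$ is not extreme in the global cone; counting then gives at most $d!$ extreme points. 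I expect this last step to be the main obstacle: the slack vertices of $Q_{\v\pi}$ are created by active non-negativity or chamber-boundary facets, and showing that these never yield a genuinely new global extreme point requires a careful gluing argument across chamber boundaries — verifying that a boundary vertex shared by two chambers is either a saturated vertex of a neighbour or a convex combination of such, which is exactly where the concavity of $f^{\,\beta}_{\v p}$ and the lattice structure within a fixed $\beta$-order do the real work.
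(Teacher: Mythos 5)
The paper does not actually prove this statement: it is imported verbatim as Lemma~12 of Ref.~\cite{Lostaglio2018elementarythermal} and used as a black box to build the future thermal cone (Corollary~\ref{thm_futurefinite}), so there is no in-paper argument to compare yours against. Judged on its own terms, your proposal has the right skeleton, and your first three steps are sound: the chamber decomposition, the reduction of $f^{\,\beta}_{\v q}\leq f^{\,\beta}_{\v p}$ to finitely many elbow inequalities via concavity of $f^{\,\beta}_{\v p}$, and the verification that $\v p^{\v \pi}$ is normalised, non-negative, correctly $\beta$-ordered (non-increasing chord slopes of a concave function) and thermomajorises every element of $Q_{\v \pi}$ are all correct. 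One small slip: no ``projection onto the simplex'' is ever needed for $\v p^{\v \pi}$ --- non-negativity is automatic because $f^{\,\beta}_{\v p}$ is non-decreasing; you appear to be importing language from the tangent vectors $\v t^{(n)}$ of the past cone, which is a different construction and genuinely can leave the simplex.

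The genuine gap is your final step, and you flag it yourself. Knowing that $\v p^{\v \pi}$ dominates $Q_{\v \pi}$ does not exclude the other vertices of $Q_{\v \pi}$ from being extreme points of the global cone: a vertex $\v v$ with a slack elbow inequality is cut out by active chamber-wall facets ($v_{\v \pi^{-1}(k)}/\gamma_{\v \pi^{-1}(k)}=v_{\v \pi^{-1}(k+1)}/\gamma_{\v \pi^{-1}(k+1)}$) or the positivity facet, and the observation that $\v p^{\v \pi}\succ_\beta \v v$ only places $\v v$ inside $\T^{\,\beta}_+(\v p^{\v \pi})$, which restarts the problem one level down rather than closing it. What is actually needed is the inclusion $\T^{\,\beta}_+(\v p)\subseteq\operatorname{conv}\left[\{\v p^{\v \pi'}:\v \pi'\in\S_d\}\right]$, established either by a perturbation argument (for every vertex $\v v$ of $Q_{\v \pi}$ with a slack elbow constraint, exhibit $\v w\neq 0$ with $\v v\pm\epsilon\v w\in\T^{\,\beta}_+(\v p)$, relaxing the active wall constraints into the neighbouring chamber while the slack elbow constraint leaves room to move) or by an explicit convex decomposition of an arbitrary $\v q\in Q_{\v \pi}$ into the $\v p^{\v \pi'}$. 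Until one of these is carried out, you have shown only that the $d!$ points $\v p^{\v \pi}$ are chamber-wise maximal elements of the future thermal cone, not that they exhaust its extreme points.
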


The above lemma allows one to characterise the future thermal cone of $\v p$ by constructing states $\v p^{\, \v \pi}$ for each $\v \pi \in \mathcal S_d$, and taking their convex hull. It is worth mentioning that $\T^{\, \beta}_{+}(\v p)$ can also be constructed by finding the whole set of extremal Gibbs-preserving matrices~\cite{Lostaglio2018elementarythermal, mazurek2018decomposability}. This follows the same spirit as in Sec.~\ref{Sec:Majorisation-cones}, where the majorisation cone was characterised by employing Theorem~\ref{thm_HLP}. However, this is a harder problem to solve, and so the extremal Gibbs-preserving matrices were characterised only for $d \leq 3$~\cite{mazurek2018decomposability,mazurek2019thermal}. Here, we provide the construction of $\T^{\, \beta}_{+}(\v p)$ as a simple corollary of Lemma~\ref{lem_extreme}.
\begin{figure*}
    \centering
    \includegraphics{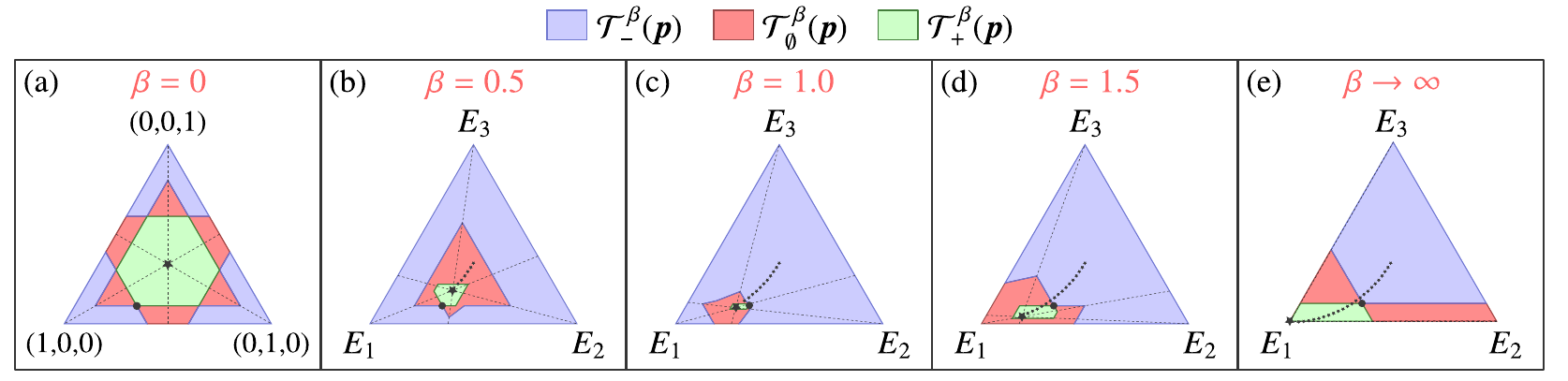}
    \caption{\emph{Thermal cones for $d=3$}. For a three-level system with population given by $\v p = (0.4, 0.36, 0.24)$, represented by a black dot $\bullet$, and energy spectrum $E_1 = 0, E_2 = 1$ and $E_3 = 2$, we plot its thermal cone for (a) $\beta = 0$, (b) $\beta = 0.5$, (c) $\beta = 1.0$ and (d) $\beta \to \infty$. By increasing $\beta$, the thermal state (black star $\bigstar$) tends toward the ground state $E_1$, and the past thermal cone becomes convex.} 
    \label{fig-thermal-cones-examples} 
\end{figure*}

\begin{cor}[Future thermal cone]
\label{thm_futurefinite}
The future thermal cone of a $d$-dimensional energy-incoherent state $\v p$ is given by
\begin{equation}
\T^{\,\beta}_+(\v p) = \operatorname{conv}[\{\v p^{\v \pi}, \v \pi\in\S_d\}] \, .
\end{equation}
\end{cor}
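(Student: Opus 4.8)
The plan is to derive Corollary~\ref{thm_futurefinite} directly from Lemma~\ref{lem_extreme} together with the convexity of the set of Gibbs-preserving matrices. First I would recall that, by Theorem~\ref{thm_HLPgeneralisation}, the future thermal cone is precisely the set of states $\v q$ with $\v p \succ_\beta \v q$, i.e., the set of images $\Lambda \v p$ over all Gibbs-preserving matrices $\Lambda$. Since the Gibbs-preserving matrices form a convex set and $\v q \mapsto \Lambda \v p$ is linear in $\Lambda$, the image $\T^{\,\beta}_+(\v p)$ is itself convex. Being the continuous linear image of a compact convex set (the Gibbs-preserving matrices form a closed bounded polytope), $\T^{\,\beta}_+(\v p)$ is a compact convex polytope and therefore equals the convex hull of its own extreme points.

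The second step is to identify those extreme points. Lemma~\ref{lem_extreme} asserts that every extreme point of $\T^{\,\beta}_+(\v p)$ is of the form $\v p^{\v \pi}$ for some permutation $\v \pi \in \S_d$, and that the candidate states $\v p^{\v \pi}$ all lie in $\T^{\,\beta}_+(\v p)$. Combining this with the previous step yields the two inclusions needed for equality. On one hand, since each $\v p^{\v \pi} \in \T^{\,\beta}_+(\v p)$ and the latter is convex, we have $\operatorname{conv}[\{\v p^{\v \pi}\}] \subseteq \T^{\,\beta}_+(\v p)$. On the other hand, because $\T^{\,\beta}_+(\v p)$ is the convex hull of its extreme points and every extreme point is among the $\v p^{\v \pi}$, we get $\T^{\,\beta}_+(\v p) \subseteq \operatorname{conv}[\{\v p^{\v \pi}\}]$. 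The two inclusions give exactly the claimed identity.

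The main obstacle, such as it is, lies entirely in justifying that $\T^{\,\beta}_+(\v p)$ is a compact convex polytope so that the Krein--Milman-type statement ``a compact convex set equals the convex hull of its extreme points'' applies and so that its extreme points are well-defined and finite in number. Convexity follows from the convexity of the Gibbs-preserving matrices noted in the main text; compactness follows because the set of $d \times d$ stochastic matrices is compact, the Gibbs-preserving constraint $\Lambda \v\gamma = \v\gamma$ is closed, and the map $\Lambda \mapsto \Lambda \v p$ is continuous. Once these topological points are in place, the corollary is immediate. I would therefore keep the proof short, emphasising that all the real work has already been done in Lemma~\ref{lem_extreme}, and that the corollary is simply the assembly of that characterisation of extreme points with the elementary fact that a convex polytope is the convex hull of its vertices.
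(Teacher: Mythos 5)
Your proposal is correct and follows essentially the same route as the paper, which presents this statement as an immediate consequence of Lemma~\ref{lem_extreme} combined with the convexity of the set of Gibbs-preserving matrices. The extra care you take over compactness and the Krein--Milman-type step is sound but not a departure from the paper's (implicit) argument.
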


Furthermore, one can use the embedding lattice to provide an alternative formulation for the future thermal cone:
\begin{obs}
    Since $\v{p}^{\v{\pi}} = \mathfrak{P}_{\v{\pi}}(\mathfrak{M}(\v{p}))$, the future thermal cone of an energy-incoherent state $\v p$ can be expressed in terms of all possible projections from the related embedding,
    \begin{equation}
        \mathcal{T}^\beta_+(\v{p}) = \operatorname{conv}\left[\left\{\mathfrak{P}_{\v{\pi}}(\mathfrak{M}(\v{p}))\,, \v \pi\in\mathcal{S}_d\right\}\right].
    \end{equation}
\end{obs}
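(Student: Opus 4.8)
The plan is to reduce the Observation to the single identity $\v p^{\v\pi} = \mathfrak{P}_{\v\pi}(\mathfrak{M}(\v p))$ quoted in the statement: once this holds for every permutation, the result follows at once by substituting it into Corollary~\ref{thm_futurefinite}, which already asserts $\mathcal{T}^\beta_+(\v p) = \operatorname{conv}[\{\v p^{\v\pi},\, \v\pi\in\S_d\}]$. Hence the entire content is to verify that, for each fixed $\v\pi$, the extreme point produced by Lemma~\ref{lem_extreme} coincides with the vector obtained by first embedding $\v p$ via $\mathfrak{M}$ and then projecting back onto the $\beta$-order $\v\pi$ via $\mathfrak{P}_{\v\pi}$.

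First I would unfold the projection. Writing $\v q = \mathfrak{M}(\v p)$ and using the definition of the embedding, each entry $p^{\mathfrak{M}}_j = f^{\,\beta}_{\v p}(\Gamma^{\mathfrak{M}}_j) - f^{\,\beta}_{\v p}(\Gamma^{\mathfrak{M}}_{j-1})$ is a consecutive increment of the thermomajorisation curve along the refined grid $\v\Gamma^{\mathfrak{M}}$. Summing these increments over the block $k(i-1) < j \leq k(i)$ prescribed by Eq.~\eqref{eq_embedding_porj_constr} makes the sum telescope, giving
\[
\left(\mathfrak{P}_{\v\pi}(\mathfrak{M}(\v p))\right)_i = f^{\,\beta}_{\v p}\!\left(\Gamma^{\mathfrak{M}}_{k(i)}\right) - f^{\,\beta}_{\v p}\!\left(\Gamma^{\mathfrak{M}}_{k(i-1)}\right).
\]
By the defining property $\Gamma^{\mathfrak{M}}_{k(i)} = \sum_{l=1}^i \gamma_{\v\pi^{-1}(l)}$, the arguments of $f^{\,\beta}_{\v p}$ are precisely the cumulative Gibbs weights of the $\beta$-order $\v\pi$, i.e. the abscissae $x^{\v\pi}_i$ appearing in Lemma~\ref{lem_extreme} (up to the normalisation by $Z$). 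Therefore $(\mathfrak{P}_{\v\pi}(\mathfrak{M}(\v p)))_i = y^{\v\pi}_i - y^{\v\pi}_{i-1}$ with $y^{\v\pi}_i := f^{\,\beta}_{\v p}(x^{\v\pi}_i)$, which is exactly the $\beta$-ordered form of the extreme point $\v p^{\v\pi}$ of Lemma~\ref{lem_extreme}, since its unordered prescription $p^{\v\pi}_i = y^{\v\pi}_{\v\pi(i)} - y^{\v\pi}_{\v\pi(i)-1}$ places the increment $y^{\v\pi}_m - y^{\v\pi}_{m-1}$ at position $m$ in the $\beta$-order.

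To finish I would argue that the two objects agree as states, not merely as $\beta$-ordered lists. The projection is by construction already expressed in the $\beta$-order $\v\pi$, and the same holds for $\v p^{\v\pi}$: because $f^{\,\beta}_{\v p}$ is concave and piecewise linear, the secant slopes $(y^{\v\pi}_i - y^{\v\pi}_{i-1})/\gamma_{\v\pi^{-1}(i)}$ are non-increasing in $i$, which is exactly the condition that level $\v\pi^{-1}(i)$ occupies position $i$ in the $\beta$-ordering. Since a state is uniquely determined by its $\beta$-order together with its $\beta$-ordered form, matching both quantities yields $\v p^{\v\pi} = \mathfrak{P}_{\v\pi}(\mathfrak{M}(\v p))$, and substitution into Corollary~\ref{thm_futurefinite} gives the claimed expression.

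The main obstacle I anticipate is purely bookkeeping: reconciling the two index conventions — Lemma~\ref{lem_extreme} records the increment at position $\v\pi(i)$ of the \emph{unordered} vector, whereas $\mathfrak{P}_{\v\pi}$ outputs the $\beta$-ordered form directly — and confirming that the block boundaries $k(i)$ single out exactly the elbows of the embedded curve lying at the cumulative Gibbs values of $\v\pi$. Handling the factor $Z$ consistently (Lemma~\ref{lem_extreme} uses the unnormalised $e^{-\beta E}$ while $\v\Gamma^{\mathfrak{M}}$ uses the normalised $\gamma$) is a further minor point that does not affect the argument, as both specify the same horizontal grid for the thermomajorisation curve.
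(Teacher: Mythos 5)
Your proposal is correct and follows the same route as the paper: the Observation is justified there precisely by combining Corollary~\ref{thm_futurefinite} with the identity $\v p^{\v\pi} = \mathfrak{P}_{\v\pi}(\mathfrak{M}(\v p))$, which the paper asserts ``by construction'' and you verify explicitly via the telescoping of the embedded increments over the blocks $k(i-1) < j \leq k(i)$. Your added care with the index conventions, the concavity argument ensuring $\v p^{\v\pi}$ indeed has $\beta$-order $\v\pi$, and the normalisation by $Z$ only fills in bookkeeping the paper leaves implicit.
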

Our main technical contribution is captured by the following lemma that generalises Lemma~\ref{lem_incomparablecone} and provides the construction of the incomparable thermal region for finite temperatures. Its proof is based on the concept of embedding lattice that we introduced in Sec.~\ref{sec_embeddingLattice} and can be found in Appendix~\ref{app:finite_temp_derivs}.

\begin{lem}[Incomparable thermal region] 
\label{lemma_incomparablefiniteT}
Given an energy-incoherent state $\v p$ and a thermal state $\v \gamma$, consider distributions $\v t^{(n,\v \pi)}$ in their $\beta$-ordered form, constructed for each permutation $\v \pi \in \S_{d}$,
\begin{equation}
\label{eq_thermaltangentvectors}
\left(\v{t}^{(n,\v \pi)}\right)^{\, \beta} = \left(
    t^{(n,\v \pi)}_{\v \pi(1)}, 
    p^{\, \beta}_n\frac{\gamma_{\v \pi (2)}}{\gamma^{\, \beta}_n}, ..., 
    p^{\, \beta}_n\frac{\gamma_{\v \pi (d-1)}}{\gamma^{\, \beta}_n},
    t^{(n,\v \pi)}_{\v \pi(d)}\right) ,  
\end{equation}
with
\begin{subequations}
\begin{align}
    t^{(n,\v \pi)}_{\v \pi(1)} & = \sum_{i=1}^n p^{\, \beta}_i-\frac{p^{\, \beta}_n}{\gamma^{\,\beta}_n}\left(\sum_{i=1}^n \gamma^{\,\beta}_i-\gamma_{\v \pi(1)} \right), \\ 
    t^{(n,\v \pi)}_{\v \pi(d)} & = 1- t^{(n,\v \pi)}_{\v \pi(1)}-\frac{p^{\, \beta}_n}{\gamma^{\,\beta}_n}\sum_{i=2}^{d-1} \gamma_{\v \pi(i)} .
\end{align}
\end{subequations}
Defining the set
\begin{equation}
\label{eq_incomparableconebeta}
    \mathbb{T}^{\,\beta} = \bigcup_{\v \pi\in\mathcal{S}_d}\bigcup_{i=1}^{n-1}\operatorname{conv}\left[\mathcal{T}^{\,\beta}_+\left(\v{t}^{(i,\v \pi)}\right)\cup\mathcal{T}^{\,\beta}_+\left(\v{t}^{(i+1,\v \pi)}\right)\right], 
\end{equation}
the incomparable region of $\v{p}$ is given by 
\begin{align}
\label{eq_incomparableconeFT}
\T^{\, \beta}_{\emptyset}(\v{p}) = \left[\operatorname{int}\left(\mathbb{T}^{\, \beta}\right)\backslash\T^{\, \beta}_+(\v p)\right]\cap \v{\Delta}_d .
\end{align}
\end{lem}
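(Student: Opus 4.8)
The plan is to reduce the finite-temperature problem to the infinite-temperature one by working inside the embedding lattice $\mathfrak{M}$, where thermomajorisation becomes the genuine lattice order $\succ_{\mathfrak{M}}$ of Eq.~\eqref{eq_embedSpace_majorisationDefinition}. The proof of Lemma~\ref{lem_incomparablecone} rests only on the existence of a join in the majorisation lattice; since the preceding corollary guarantees that $\succ_{\mathfrak{M}}$ also forms a lattice, the same argument can be run, \emph{mutatis mutandis}, on the embedded vector $\v{p}^{\mathfrak{M}} = \mathfrak{M}(\v p) \in \Delta_{2^d-1}$. First I would construct, at each elbow of the embedded majorisation curve of $\v{p}^{\mathfrak{M}}$, the analogue of the tangent vectors of Eq.~\eqref{eq_tangentvectors}: quasi-distributions whose embedded majorisation curve coincides with a single prolonged linear segment of $f^{\beta}_{\v p}$, so that the slope $p^{\beta}_n/\gamma^{\beta}_n$ is enforced throughout the middle entries while the two boundary entries fix tangency and normalisation. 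Using the join argument, these embedded tangent vectors carve out exactly the complement of the incomparable region, yielding an embedded analogue of Eq.~\eqref{eq:incomparableconeIT}.

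The second step is to project back to $\Delta_d$ via the maps $\mathfrak{P}_{\v\pi}$. The crucial point is that the embedded curve carries all $2^d-1$ partial Gibbs sums as potential elbows, whereas a single $\beta$-order $\v\pi$ selects only $d$ of them through $\Gamma^{\mathfrak{M}}_{k(i)} = \sum_{j=1}^i \gamma_{\v\pi^{-1}(j)}$. Consequently, projecting the embedded tangent construction onto the order $\v\pi$ yields a vector depending on both the elbow index $n$ and the permutation $\v\pi$, and a direct computation of $\mathfrak{P}_{\v\pi}$ applied to the $n$-th embedded tangent vector should reproduce the explicit formula~\eqref{eq_thermaltangentvectors} for $\v{t}^{(n,\v\pi)}$, with the boundary entries matching the stated expressions for $t^{(n,\v\pi)}_{\v\pi(1)}$ and $t^{(n,\v\pi)}_{\v\pi(d)}$, and the middle entries $p^{\beta}_n \gamma_{\v\pi(i)}/\gamma^{\beta}_n$ appearing automatically as the projected segment of constant thermomajorisation slope. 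Combining the future-cone characterisation of Corollary~\ref{thm_futurefinite} with the projection identity relating $\mathfrak{P}_{\v\pi}(\mathfrak{M}(\v p))$ to the extreme points of $\T^{\beta}_+(\v p)$, the image under all $\mathfrak{P}_{\v\pi}$ of the embedded incomparable region is the set $\mathbb{T}^{\beta}$ of Eq.~\eqref{eq_incomparableconebeta}; subtracting $\T^{\beta}_+(\v p)$ and intersecting with $\Delta_d$ then gives Eq.~\eqref{eq_incomparableconeFT}.

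As in the infinite-temperature case, convex hulls of the future cones of consecutive tangent vectors are needed to fill the gaps left by the individual cones, and here this must be carried out within each $\beta$-order separately, which is precisely why the union in Eq.~\eqref{eq_incomparableconebeta} runs over all $\v\pi\in\S_d$ as well as over consecutive pairs $(i,i+1)$. Mixtures of tangent vectors from different orders, or of non-consecutive tangent vectors within one order, contribute nothing new: by the same reasoning as for the $d=4$ discussion at $\beta=0$ they fall into the past thermal cone, since they thermomajorise $\v p$. The three stated properties of $\mathfrak{M}$ and $\mathfrak{P}_{\v\pi}$ — namely $\mathfrak{P}_{\v\pi_{\v p}}(\mathfrak{M}(\v p)) = \v p$, the relation $\v p \succ_\beta \mathfrak{P}_{\v\pi}(\mathfrak{M}(\v p))$ for $\v\pi\neq\v\pi_{\v p}$, and $\v q \succ_{\mathfrak{M}} \mathfrak{M}(\mathfrak{P}_{\v\pi}(\v q))$ — are exactly the tools needed to transfer the relevant containments faithfully between the two pictures.

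The main obstacle I anticipate is controlling the interaction between projection and the convex-hull/future-cone operations: one must show that projecting the clean lattice-based characterisation obtained upstairs commutes set-theoretically with taking future cones and convex hulls, so that it descends to the explicit $\mathbb{T}^{\beta}$ downstairs without creating spurious points or leaving holes at the Weyl-chamber boundaries where distinct $\beta$-orders meet. Verifying that the $d!$ families $\{\v{t}^{(n,\v\pi)}\}$ glue consistently across those boundaries — so that the boundary of the reconstructed incomparable region is continuous and abuts the boundary of $\T^{\beta}_+(\v p)$ — is the delicate part, and it is precisely where the order-dependence of $\mathfrak{P}_{\v\pi}$ and the properties above must be invoked with care.
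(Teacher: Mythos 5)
Your proposal is correct and follows essentially the same route as the paper: it works upstairs in the embedding lattice where the join exists, constructs embedded tangent vectors along each linear fragment of the embedded thermomajorisation curve, reruns the join-based argument of the infinite-temperature lemma there, and then descends via the projections $\mathfrak{P}_{\v\pi}$ using exactly the embedding/projection compatibility properties the paper invokes (that embedded majorisation implies thermomajorisation of projections, and that embedding preserves incomparability). The only difference is one of detail rather than substance — the paper additionally writes out the rescaled join algorithm in the embedded space explicitly, whereas you appeal to the lattice corollary for its existence.
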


Analogously to the infinite temperature case, Lemma~\ref{lemma_incomparablefiniteT} allows us to obtain the past thermal cone of $\v p$.
\begin{thm}[Past thermal cone]
\label{thm_pasthermalconefinite}
The past thermal cone of $\v p$ is given by \begin{equation}
\T^{\, \beta}_{-}(\v p)=\v{\Delta}_d \backslash \operatorname{int} \left(\mathbb{T}^{\, \beta}\right) \, .
\end{equation}
\end{thm}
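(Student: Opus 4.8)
The plan is to mirror exactly the strategy that proved the infinite-temperature statement in Theorem~\ref{thm_pastcone}, replacing every object by its finite-temperature, $\beta$-ordered counterpart. The starting point is that Definition~\ref{def_ThermalCones} partitions the simplex: every $\v q\in\v{\Delta}_d$ either is thermomajorised by $\v p$ (future), thermomajorises $\v p$ (past), or neither (incomparable). Writing this as a set identity, and bearing in mind that the only overlap between past and future is the measure-zero set of states interconvertible with $\v p$, one has
\begin{equation}
\T^{\,\beta}_{-}(\v p) = \v{\Delta}_d \backslash \left(\T^{\,\beta}_{\emptyset}(\v p)\cup\T^{\,\beta}_{+}(\v p)\right).
\end{equation}
I would then substitute the explicit form of $\T^{\,\beta}_{\emptyset}(\v p)$ furnished by Lemma~\ref{lemma_incomparablefiniteT}, Eq.~\eqref{eq_incomparableconeFT}, reducing the entire proof to a set-theoretic simplification.

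The core step is the evaluation of the union. Setting $A:=\operatorname{int}(\mathbb{T}^{\,\beta})\cap\v{\Delta}_d$ and $B:=\T^{\,\beta}_{+}(\v p)$, Lemma~\ref{lemma_incomparablefiniteT} reads $\T^{\,\beta}_{\emptyset}(\v p)=A\backslash B$, so $\T^{\,\beta}_{\emptyset}(\v p)\cup\T^{\,\beta}_{+}(\v p)=(A\backslash B)\cup B=A\cup B$. The claim then follows as soon as one shows $B\subseteq A$, for then $A\cup B=A$ and
\begin{equation}
\T^{\,\beta}_{-}(\v p) = \v{\Delta}_d\backslash\left(\operatorname{int}(\mathbb{T}^{\,\beta})\cap\v{\Delta}_d\right) = \v{\Delta}_d\backslash\operatorname{int}(\mathbb{T}^{\,\beta}),
\end{equation}
which is the assertion. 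Since $B=\T^{\,\beta}_{+}(\v p)\subseteq\v{\Delta}_d$ holds automatically, the whole substance of the argument is the containment $\T^{\,\beta}_{+}(\v p)\subseteq\operatorname{int}(\mathbb{T}^{\,\beta})$.

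To establish that containment I would exploit the tangency property built into the thermal tangent vectors of Eq.~\eqref{eq_thermaltangentvectors}: by construction the thermomajorisation curve $f^{\,\beta}_{\v{t}^{(n,\v{\pi})}}$ extends the $n$-th linear segment of $f^{\,\beta}_{\v p}$, and since the thermomajorisation curve is concave its tangent line lies on or above it everywhere, so $\v{t}^{(n,\v{\pi})}\succ_\beta\v p$ for every $n$ and $\v{\pi}\in\S_d$. By transitivity of $\succ_\beta$ (equivalently, by composing Gibbs-preserving maps via Theorem~\ref{thm_HLPgeneralisation}), this yields $\T^{\,\beta}_{+}(\v p)\subseteq\T^{\,\beta}_{+}(\v{t}^{(n,\v{\pi})})$, so $\T^{\,\beta}_{+}(\v p)$ sits inside every convex hull appearing in Eq.~\eqref{eq_incomparableconebeta} and hence inside $\mathbb{T}^{\,\beta}$. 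Upgrading this to the interior requires that the consecutive tangent vectors $\v{t}^{(n,\v{\pi})},\v{t}^{(n+1,\v{\pi})}$ genuinely surround the future cone on all sides, which is precisely the geometric fact that their convex hulls fill a neighbourhood of $\T^{\,\beta}_{+}(\v p)$ — the content already secured in the proof of Lemma~\ref{lemma_incomparablefiniteT} through the embedding lattice.

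I expect the main obstacle to be boundary bookkeeping rather than anything deep. The states interconvertible with $\v p$ (together with the permuted apexes $\v p^{\v{\pi}}$) lie simultaneously in $\T^{\,\beta}_{+}(\v p)$ and on $\partial\mathbb{T}^{\,\beta}$, so both the partition identity and the inclusion $B\subseteq A$ hold only up to this measure-zero set. I would dispose of this by phrasing the decomposition in terms of interiors and closures — as the very appearance of $\operatorname{int}(\mathbb{T}^{\,\beta})$ anticipates — and by checking that these interconvertible states correctly belong to the closed set $\v{\Delta}_d\backslash\operatorname{int}(\mathbb{T}^{\,\beta})$, which is consistent with the identity being proved. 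Once Lemma~\ref{lemma_incomparablefiniteT} is granted, no further input specific to finite temperature is needed, and the proof collapses to the same one line as in Theorem~\ref{thm_pastcone}.
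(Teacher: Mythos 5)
Your proposal is correct and follows essentially the same route as the paper: the authors' one-line proof invokes the simplex partition $\T^{\,\beta}_{-}(\v p)=\v{\Delta}_d\backslash(\T^{\,\beta}_{\emptyset}(\v p)\cup\T^{\,\beta}_{+}(\v p))$ together with the identity $\operatorname{int}(\mathbb{T}^{\,\beta})=\T^{\,\beta}_{+}(\v p)\cup\T^{\,\beta}_{\emptyset}(\v p)$, which is precisely the containment $\T^{\,\beta}_{+}(\v p)\subseteq\operatorname{int}(\mathbb{T}^{\,\beta})$ you establish via the tangency of the $\v{t}^{(n,\v{\pi})}$ and Lemma~\ref{lemma_incomparablefiniteT}. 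You simply make explicit the set-theoretic simplification and boundary bookkeeping that the paper leaves implicit.
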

\begin{proof}
Following the same reasoning as in the proof of Theorem~\ref{thm_pastcone}, we only need to use the fact that \mbox{$\operatorname{int}(\mathbb{T}^\beta) = \mathcal{T}^\beta_+(p) \cup \mathcal{T}^\beta_\emptyset(p)$}.   
\end{proof}

In Figs.~\ref{fig-thermal-cones-examples} and \ref{fig-thermal-cones-3dexamples}, we illustrate Lemma~\ref{lemma_incomparablefiniteT} and Theorem~\ref{thm_pasthermalconefinite} for a three-level system and a four-level system in different temperature regimes. We also provide a \texttt{Mathematica} code~\cite{adeoliveirajunior2022} that constructs the set of extreme points of the future and past thermal cones for arbitrary dimensions.

\begin{figure}[t]
    \centering
    \includegraphics{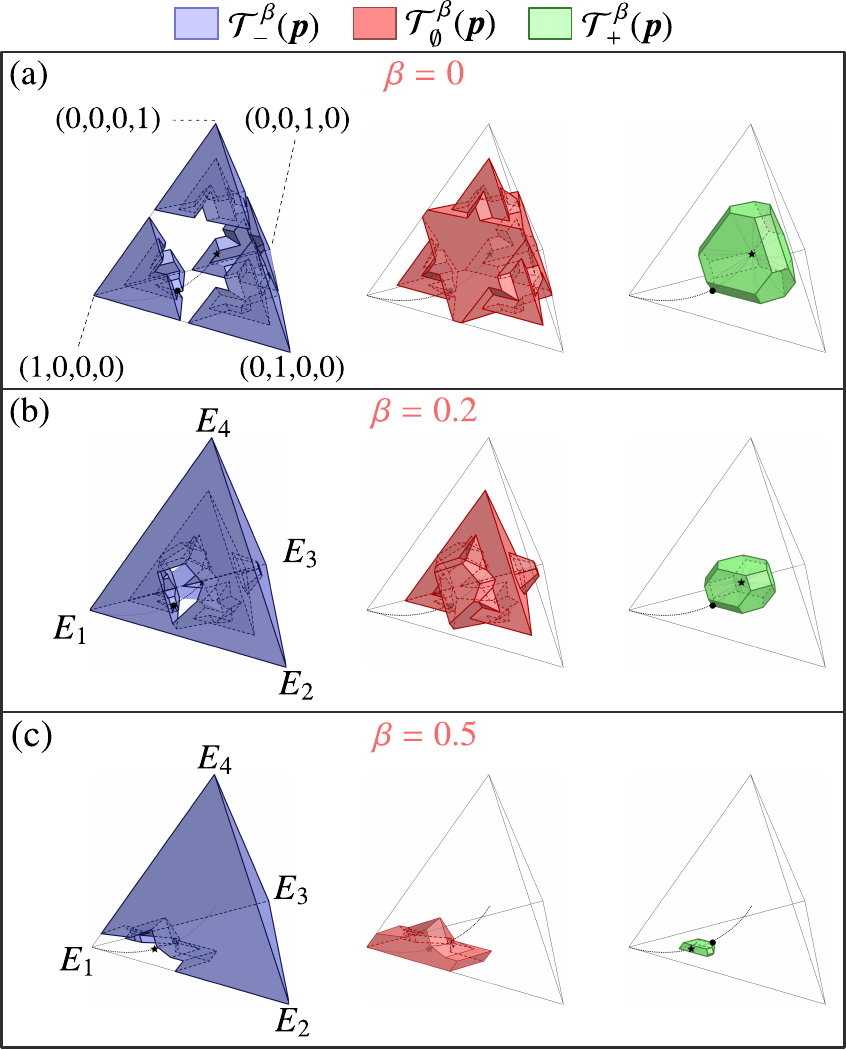}
    \caption{\emph{Thermal cones for $d=4$}. For a four-level system with population given by $\v p = (25,13,7,3)/48$, represented by a black dot $\bullet$, and energy spectrum $E_1 = 0, E_2 = 1, E_3 = 2$ and $E_4 =3$, we plot its thermal cone for (a) $\beta = 0$, (b) $\beta = 0.2$ and (c) $\beta = 1.0$. The thermal state depicted by a black star $\bigstar$, and its trajectory (dashed line) is also shown. By increasing the inverse temperature $\beta$, the geometry of each region drastically changes.}
    \label{fig-thermal-cones-3dexamples} 
\end{figure}

As before, the past thermal cone forms a convex polytope only when restricted to a single Weyl chamber, now defined as a set of probability vectors with common $\beta$-order. The extreme points of the past thermal cone correspond to tangent vectors $\v t^{(n,\v \pi)}$ or by their projection onto the boundary of the probability simplex. The exceptional points in comparison with the infinite-temperature case may appear when considering extreme points of \mbox{$\bigcup_{i=1}^{d-1} \operatorname{conv}[(\mathcal{T}^\beta_+(\v{t}^{(i,\v \pi)})\cup\mathcal{T}^\beta_+(\v{t}^{(i+1,\v \pi)})]$} for a given chamber $\v \pi$. Vertices arising in this way may not correspond to any tangent vector (see Fig.~\ref{fig-past_chamberextreme}). Moreover, the $\v t^{(n,\v \pi)}$ vectors are also responsible for the convexity of the past thermal cone, and the following observation illustrates this:
\begin{cor}
Approaching the limit of $\beta \to \infty$, the past thermal cone becomes convex.
\end{cor}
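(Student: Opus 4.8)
The plan is to show that as $\beta \to \infty$ the tangent vectors $\v t^{(n,\v \pi)}$ collapse onto a single convex family, so that the decomposition of the past thermal cone into $d!$ chamber-wise convex pieces degenerates into a single convex body. The starting point is Theorem~\ref{thm_pasthermalconefinite}, which gives $\T^{\, \beta}_{-}(\v p)=\v{\Delta}_d \backslash \operatorname{int}(\mathbb{T}^{\, \beta})$, together with the structure of $\mathbb{T}^{\,\beta}$ from Eq.~\eqref{eq_incomparableconebeta}. Since $\v{\Delta}_d \backslash \operatorname{int}(\mathbb{T}^{\, \beta})$ is convex precisely when the complement of $\operatorname{int}(\mathbb{T}^{\,\beta})$ in the simplex is convex, the real task is to understand what $\mathbb{T}^{\,\beta}$ becomes in the limit, and to show that its complement degenerates to a single convex polytope.

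First I would take the limit $\beta \to \infty$ in the Gibbs vector $\v \gamma$ of Eq.~\eqref{eq_thermal-distribution}: for a non-degenerate spectrum $E_1 < \dots < E_d$, the distribution $\v \gamma$ tends to the sharp state concentrated on the ground state, i.e.\ $\v \gamma \to (1,0,\dots,0)$. This has two consequences that I would make precise. On the one hand, the $\beta$-ordering of Eq.~\eqref{eq_beta-ordering} becomes trivial in the sense that the ratios $p_i/\gamma_i$ blow up for all excited levels, so all states outside a measure-zero set share a single effective $\beta$-order. On the other hand, the thermomajorisation curve $f^{\,\beta}_{\v p}$ degenerates: the first segment, of horizontal width $\gamma^{\,\beta}_1 \to 1$, carries essentially the entire probability mass, while all remaining segments are compressed to zero width. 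The plan is to substitute these limiting values into the explicit formulas~\eqref{eq_thermaltangentvectors} for $\v t^{(n,\v \pi)}$ and track which tangent vectors survive and which collapse.

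Next I would argue that in the limit only one chamber contributes a nonzero-volume piece to the past thermal cone. Because $\gamma^{\,\beta}_n/\gamma^{\,\beta}_1 \to 0$ for $n \geq 2$, the middle entries $p^{\,\beta}_n \gamma_{\v \pi(k)}/\gamma^{\,\beta}_n$ in Eq.~\eqref{eq_thermaltangentvectors} and the correction terms in $t^{(n,\v \pi)}_{\v \pi(1)}$ and $t^{(n,\v \pi)}_{\v \pi(d)}$ simplify dramatically; I expect the tangent vectors for different $\v\pi$ to converge onto the same set of extreme points, or to degenerate into vectors lying on the simplex boundary with vanishing volume contribution. Concretely, I would show that $\mathbb{T}^{\,\beta}$ converges to the future thermal cone $\T^{\,\beta}_+(\v p)$ itself (up to boundary), so that the incomparable region~\eqref{eq_incomparableconeFT} shrinks to measure zero and $\T^{\,\beta}_-(\v p)$ becomes the complement of a single convex set $\T^\infty_+(\v p)$ in $\v\Delta_d$. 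The cleanest way to close the argument is to invoke the convexity of the future thermal cone (Corollary~\ref{thm_futurefinite}): once the incomparable region has vanishing volume, the past is the closure of $\v\Delta_d \setminus \T^\infty_+(\v p)$, and I would verify directly that this difference is convex by checking that its only bounding hyperplane inside the simplex is the single facet of $\T^\infty_+(\v p)$ separating the ground-state-dominated region from the rest.

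The main obstacle I anticipate is making the degeneration of the multi-chamber union rigorous rather than merely pictorial. In the finite-$\beta$ case the past thermal cone is genuinely a union of $d!$ convex chambers glued along shared faces, and the non-convexity lives precisely at these interfaces; I must show that in the limit all but one of these chambers acquire zero volume \emph{and} that the surviving chamber's boundary with the incomparable region flattens into a single hyperplane. This requires careful bookkeeping of the order in which $\gamma^{\,\beta}_i \to 0$, since naively substituting $\gamma^{\,\beta}_n = 0$ into Eq.~\eqref{eq_thermaltangentvectors} produces the indeterminate ratio $p^{\,\beta}_n/\gamma^{\,\beta}_n$. The resolution is to keep these ratios as finite limits governed by the relative decay rates $e^{-\beta(E_j - E_1)}$, and I would handle this by rescaling before taking the limit, so that the tangent-vector construction remains well defined throughout.
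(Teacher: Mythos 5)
Your proposal contains a genuine error at its central step. You claim that as $\beta\to\infty$ the set $\mathbb{T}^{\,\beta}$ converges to $\T^{\,\beta}_+(\v p)$, so that the incomparable region shrinks to measure zero and the past cone becomes the complement of the future cone, cut off by a single hyperplane. This is false. In the limit $\v\gamma\to(1,0,\dots,0)$ the $\beta$-order of any generic state becomes ``highest energy first, ground state last'', and a short computation with the limiting thermomajorisation curves (whose elbows all cluster at $x=0$ except the last) shows that, e.g.\ for $d=3$ with $E_1<E_2<E_3$, one has $\v p\succ_\beta\v q$ if and only if $q_3\le p_3$ and $q_1\ge p_1$. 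The future and past cones are therefore each the intersection of the simplex with \emph{two} half-spaces, and the incomparable region is the union of the two remaining ``quadrants'', which has strictly positive volume; this persistence of $\T^{\,\beta}_{\emptyset}$ at $\beta\to\infty$ is visible in Fig.~\hyperref[fig-thermal-cones-examples]{\ref{fig-thermal-cones-examples}d}. Consequently the past cone is \emph{not} the closure of $\v\Delta_d\setminus\T^{\infty}_+(\v p)$, and your proposed verification that the limiting future cone has a single bounding facet inside the simplex would fail.

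The correct mechanism is the one you mention in passing and then abandon: the past thermal cone is always a union of $d!$ convex pieces, one per Weyl chamber (i.e., per set of states sharing a common $\beta$-order), and the non-convexity lives only at the chamber interfaces. As $\beta\to\infty$ the chamber boundaries $p_i\gamma_j=p_j\gamma_i$ collapse onto the boundary of the simplex, so a single full-measure chamber survives and the past cone reduces to a single convex piece. This is precisely the paper's (two-line) argument; no analysis of the tangent vectors or of $\mathbb{T}^{\,\beta}$ is needed, and in particular no claim about the incomparable region vanishing should be made.
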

\begin{proof}
By dividing the probability simplex into equal chambers with the thermal state in the barycenter, the past thermal cone is the union of $d!$ convex pieces. As $\beta \to \infty$, the thermal state collapses to the ground state. There is only one chamber in this limit, and therefore, the past thermal cone is a single convex piece [see Fig.\hyperref[fig-thermal-cones-examples]{\ref{fig-thermal-cones-examples}~d} for an example in the particular case of $d=3$].
\end{proof}

\begin{figure*}
    \centering
    \includegraphics{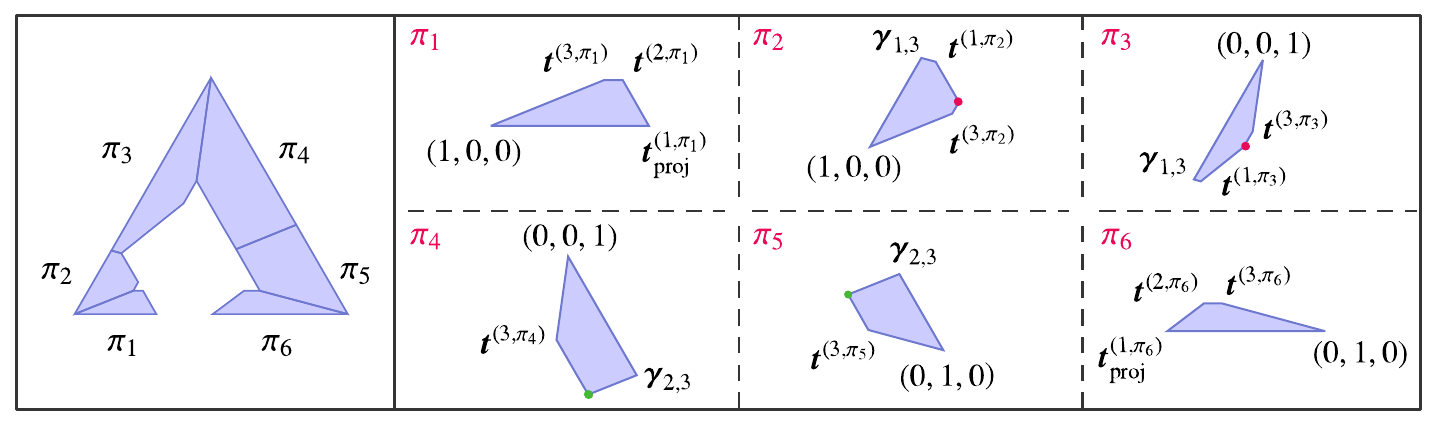}
    \caption{\label{fig-past_chamberextreme} \emph{Extreme points of $\mathcal{T}^{\,\beta}_-(\v p)$}. The past thermal cone for a three-level system with population give by $\v p = (0.7,0.2,0.1)$, energy spectrum $E_1 = 0, E_2 = 2$ and $E_3 = 3$, and the inverse temperature $\beta = 0.5$. At each chamber $\v \pi$, the non-trivial extreme points of the past are given by $\v t^{(n, \v \pi)}$ (for $n \in \{1, 2, 3\}$ and $\v \pi \in \mathcal{S}_d$) and by extreme points of \mbox{$\bigcup_{i=1}^{d-1} \operatorname{conv}[(\mathcal{T}^{\, \beta}_+(\v{t}^{(i,\v \pi)})\cup\mathcal{T}^{\, \beta}_+(\v{t}^{(i+1,\v \pi)})]$} (red dots). The remaining ones are sharp states, points in the boundary between chambers (green dots) and those which are on the edge of the probability simplex, i.e., $(\v \gamma_{i,j})_{k} = (\gamma_i \delta_{ik}+\gamma_j \delta_{ik})/(\gamma_i + \gamma_j)$, for $i\neq j$, and $k \in \{1, 2,3\}$.}
    \end{figure*}

The intuition behind the above observation can be understood by studying the behaviour of a three-level system and the tangent vector $\v t^{(3,(132))}$. By a decreasing temperature, this extremal point tends towards the edge of the simplex and reaches the edge at $\beta \to \infty$ [see Figs.~\hyperref[fig-thermal-cones-examples]{\ref{fig-thermal-cones-examples}c,~d}].

To wrap up the considerations of the geometry of thermal cones, let us go back, once again, to the analogy between the thermal cones and special relativity. Consider that given a specific division of space-time into future, past and space-like regions, one is able to recover the specific event generating it, which we may refer to as Present. Concisely -- there is a one-to-one relation between events and the divisions of space-time they generate. The situation is exactly reflected for thermal cones with $\beta > 0$ -- given a specific arrangement of incomparable region and future and past thermal cones, one can exactly recover the current state of the system [see the black dot~$\bullet$ in Fig.~\hyperref[fig-thermal-cones-examples]{\ref{fig-thermal-cones-examples}b-d}]. It is in stark contrast with the majorisation cones for $\beta = 0$, where every division into past, future and incomparable possesses $d!$-fold symmetry [see Fig.~\hyperref[fig-thermal-cones-examples]{\ref{fig-thermal-cones-examples}a}] and hence, the present state of the system cannot be recovered solely on its basis unless provided with additional information like the permutation which sorts the probabilities in non-decreasing order.



\section{Volume of thermal cones} \label{Sec_volume_thermal_cones}

In the previous sections, we characterised and discussed the behaviour of thermal cones by introducing explicit constructions of the past, the future and the incomparable region. It is then natural to ask what is the role played by their volumes in quantifying the resourcefulness of different states~\cite{Cunden_2020,Cunden2021}. Thus, for a $d$-dimensional probability vector $\v p$
we define the relative volumes of its thermal cones as
\begin{equation}
    \mathcal{V}^{\, \beta}_{i}(\v p) := \frac{V\left[\mathcal{T}^{\, \beta}_i (\v p)\right]}{V(\Delta_d)} \quad  \text{with} \quad i \in \{\emptyset, -, +\},
\end{equation}
where $V$ denotes the volume measured using the Euclidean metric. 

We start our analysis of volumes of thermal cones by presenting an operational interpretation in terms of guessing probabilities for the future and past of a given state subject to a thermal evolution; such interpretation provides a solid basis for presenting the aforementioned volumes as resource-theoretic monotones. Subsequently, we proceed to an in-depth analysis of the volumes with a particular focus on their behaviour as a function of the inverse temperature $\beta$. \rd{Finally, we explain how to modify the analysis to obtain meaningful volumes of entanglement cones.}


\subsection{Interpretation}

Consider a task of predicting the future, which roughly translates to guessing a state $\v{q}$ by having knowledge that it has originated from a given prior state $\v p$. In this case, the probability of correctly guessing a state that is $\epsilon$-distant from ${\v q}$ is given by
\begin{equation}
    \text{Pr}\left( {\v q}, \epsilon\,|\,\v q\in\mathcal{T}_+(\v p)\right) = \frac{V(\mathcal{B}_\epsilon(\v q))}{\mathcal{V}^{\, \beta}_+(\v{p})},
\end{equation}
where $\mathcal{B}_\epsilon(\v q)$ is an $\epsilon$-ball centred at $\v q$. We may get rid of the dependence on $\epsilon$ and $\v q$ by taking the ratio for two different states $\v p_1$ and $\v p_2$,
\begin{equation}
    \frac
    {\text{Pr}\left[\v q, \epsilon\,|\,\v q\in\mathcal{T}^{\, \beta}_+(\v p_2)\right]}
    {\text{Pr}\left[\v q, \epsilon\,|\,\v q\in\mathcal{T}^{\, \beta}_+(\v p_1)\right]} = \frac{\mathcal{V}^{\, \beta}_+(\v{p}_1)}{\mathcal{V}^{\, \beta}_+(\v{p}_2)} .
\end{equation}
Thus, the ratio of volumes yields a relative probability of guessing the future of two different states. In particular, if $\v p_2 \in \mathcal{T}^{\, \beta}_+(\v p_1)$, then also $\mathcal{T}^{\,\beta}_+(\v p_2)\subset\mathcal{T}^{\, \beta}_+(\v p_1)$ and in consequence
\begin{equation}
    \frac{\text{Pr}\left[\v q, \epsilon\,|\,\v q\in\mathcal{T}^{\, \beta}_+(\v p_2)\right]}
    {\text{Pr}\left[\v q, \epsilon\,|\,\v q\in\mathcal{T}^{\, \beta}_+(\v p_1)\right]} > 1. 
\end{equation}
This can be understood as follows -- as the evolution of a system progresses, the future becomes easier to guess or, in other words, more predictable.

In complete analogy, we may define a game in which, instead of guessing -- or predicting -- the future of a state $\v p$, one has to guess its past. For such a game, it is easy to show that 
\begin{equation}
    \frac
    {\text{Pr}\left[\v q, \epsilon\,|\,\v q\in\mathcal{T}^{\, \beta}_-(\v p_2)\right]}
    {\text{Pr}\left[\v q, \epsilon\,|\,\v q\in\mathcal{T}^{\, \beta}_-(\v p_1)\right]} = \frac{\mathcal{V}^{\, \beta}_-(\v{p}_1)}{\mathcal{V}^{\, \beta}_-(\v{p}_2)},
\end{equation}
so that the ratio of volumes yields the relative probability of guessing the past. Given $\v p_1\in \mathcal{T}^{\, \beta}_-(\v p_2)$ we have ${\mathcal{V}^{\, \beta}_-(\v{p}_1)}/{\mathcal{V}^{\, \beta}_-(\v{p}_2)}<1$, which simply means that as the evolution towards equilibrium progresses, one finds the past of a given state harder and harder to guess correctly.



\subsection{Properties}

We now show that the volumes of the thermal cones are thermodynamic monotones, i.e., functions of a state that decrease under thermal operations.
\begin{thm}\label{thm_monotne}
    The relative volumes $\mathcal V^{\, \beta}_{+}$ and $1-\mathcal V^{\, \beta}_{-} = \mathcal V^{\, \beta}_{+} + \mathcal V^{\, \beta}_{\emptyset}$ are thermodynamic monotones. Moreover, both monotones are faithful, taking the value 0 only when applied to the Gibbs state $\v \gamma$.
\end{thm}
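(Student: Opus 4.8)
The plan is to establish the two required properties—monotonicity and faithfulness—separately, since they rely on different structural facts. For monotonicity, the key observation is that the thermal cones are nested under thermal evolution: if $\v q \in \mathcal{T}^{\,\beta}_+(\v p)$, meaning $\v p \succ_\beta \v q$, then by transitivity of thermomajorisation we have $\mathcal{T}^{\,\beta}_+(\v q) \subseteq \mathcal{T}^{\,\beta}_+(\v p)$. First I would make this precise: any state reachable from $\v q$ is also reachable from $\v p$ by composing the Gibbs-preserving maps (Theorem~\ref{thm_HLPgeneralisation}). This set inclusion immediately gives $V[\mathcal{T}^{\,\beta}_+(\v q)] \leq V[\mathcal{T}^{\,\beta}_+(\v p)]$ because volume is monotone under inclusion, and dividing by $V(\Delta_d)$ yields $\mathcal{V}^{\,\beta}_+(\v q) \leq \mathcal{V}^{\,\beta}_+(\v p)$. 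Thus $\mathcal{V}^{\,\beta}_+$ decreases under thermal operations, which is exactly the monotone property.

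For the second monotone, $1 - \mathcal{V}^{\,\beta}_-$, I would use the disjoint decomposition of the simplex into past, future, and incomparable regions, which gives the identity $\mathcal{V}^{\,\beta}_+ + \mathcal{V}^{\,\beta}_\emptyset + \mathcal{V}^{\,\beta}_- = 1$, justifying the stated equality $1 - \mathcal{V}^{\,\beta}_- = \mathcal{V}^{\,\beta}_+ + \mathcal{V}^{\,\beta}_\emptyset$. The natural approach is to show directly that the past cone \emph{grows} under thermal evolution: if $\v p \succ_\beta \v q$, then any state $\v r$ with $\v r \succ_\beta \v p$ also satisfies $\v r \succ_\beta \v q$, so $\mathcal{T}^{\,\beta}_-(\v p) \subseteq \mathcal{T}^{\,\beta}_-(\v q)$, whence $\mathcal{V}^{\,\beta}_-(\v q) \geq \mathcal{V}^{\,\beta}_-(\v p)$ and therefore $1 - \mathcal{V}^{\,\beta}_-$ decreases. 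Both inclusions follow from the same transitivity argument applied in opposite directions, so this part should be routine once the set-inclusion lemma is stated cleanly.

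The faithfulness claim is where I expect the real work to lie. I must show that each monotone vanishes \emph{only} at the Gibbs state $\v\gamma$. One direction is immediate: since $\v\gamma$ is the unique fixed point and is thermomajorised by every state, its future thermal cone is the single point $\{\v\gamma\}$, so $\mathcal{V}^{\,\beta}_+(\v\gamma) = 0$; and since every state thermomajorises $\v\gamma$, its past cone is the whole simplex, giving $\mathcal{V}^{\,\beta}_-(\v\gamma) = 1$ and hence $1 - \mathcal{V}^{\,\beta}_-(\v\gamma) = 0$. The converse—that a vanishing volume forces $\v p = \v\gamma$—is the main obstacle. The strategy is to argue that for any $\v p \neq \v\gamma$, the future thermal cone has nonempty interior and thus strictly positive Euclidean volume. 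Concretely, I would invoke Corollary~\ref{thm_futurefinite}: the extreme points $\v p^{\v\pi}$ for $\v\pi \in \mathcal{S}_d$ are obtained from distinct $\beta$-orderings, and as long as $\v p$ is not the flat-in-thermomajorisation Gibbs state, at least $d$ of these extreme points are affinely independent, so their convex hull is full-dimensional in $\Delta_d$ and has positive volume. Equivalently, $\v p \neq \v\gamma$ means the thermomajorisation curve of $\v p$ strictly dominates that of $\v\gamma$ somewhere, leaving room for a nondegenerate future cone.

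The hardest step is verifying this full-dimensionality rigorously: one must rule out degenerate configurations where all permuted extreme points collapse onto a lower-dimensional face despite $\v p \neq \v\gamma$. I would handle this by showing that the action of the symmetric group on the thermomajorisation construction produces points spanning all $d-1$ independent directions of $\Delta_d$ whenever the strict inequality $f^{\,\beta}_{\v p}(x) > f^{\,\beta}_{\v\gamma}(x)$ holds on a set of positive measure, using that the slopes $p^{\,\beta}_i / \gamma^{\,\beta}_i$ cannot all be equal unless $\v p = \v\gamma$. The faithfulness of $1 - \mathcal{V}^{\,\beta}_-$ then follows symmetrically, since $1 - \mathcal{V}^{\,\beta}_-(\v p) = 0$ forces both $\mathcal{V}^{\,\beta}_+(\v p) = 0$ and $\mathcal{V}^{\,\beta}_\emptyset(\v p) = 0$, and the vanishing of the future volume alone already pins down $\v p = \v\gamma$ by the argument above.
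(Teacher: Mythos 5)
Your monotonicity argument is exactly the paper's: nestedness of the cones under thermomajorisation plus monotonicity of volume under inclusion, and the identity $\mathcal V^{\,\beta}_+ + \mathcal V^{\,\beta}_\emptyset + \mathcal V^{\,\beta}_- = 1$. The easy direction of faithfulness ($\mathcal V^{\,\beta}_+(\v\gamma)=0$ and $1-\mathcal V^{\,\beta}_-(\v\gamma)=0$) also coincides, as does the reduction of both faithfulness claims to the single statement $\mathcal V^{\,\beta}_+(\v p)>0$ for $\v p\neq\v\gamma$.

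Where you diverge is in how that last statement is established, and this is also where your proposal is not yet a proof. You propose to show that among the extreme points $\v p^{\v\pi}$ of Corollary~\ref{thm_futurefinite} there are $d$ affinely independent ones. The natural way to generate independent displacements from $\v p$ is via adjacent transpositions of the $\beta$-order, but the displacement associated with transposing positions $k$ and $k+1$ vanishes exactly when $p^{\,\beta}_k/\gamma^{\,\beta}_k = p^{\,\beta}_{k+1}/\gamma^{\,\beta}_{k+1}$; a state can have many such coincidences (e.g.\ a state thermalised on all levels but one) while still being far from $\v\gamma$. Your observation that \emph{not all} slopes can be equal unless $\v p=\v\gamma$ gives you only one or two independent directions for free, not $d-1$, so the degenerate configurations you flag are genuinely unresolved in your sketch. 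The paper sidesteps this by turning the degenerate case into the workhorse: it takes $\v p$ thermalised on a $(d-1)$-dimensional subspace (the worst case for your transposition count) and applies the $d-1$ explicit $\beta$-swaps between level $1$ and each level $k$, producing displacements supported on coordinates $\{1,k\}$ whose linear independence is immediate; full generality then follows because every $\v q\neq\v\gamma$ has such a non-Gibbs partially thermalised state, and hence its entire full-dimensional future cone, inside $\T^{\,\beta}_+(\v q)$. If you want to keep your route through the $\v p^{\v\pi}$, you would need an analogous argument — for instance, cycling the level with the distinguished slope through all positions of the $\beta$-order — which amounts to rediscovering the paper's construction. As written, the hardest step of your proof is deferred rather than carried out.
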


\begin{proof}
One can straightforwardly show that both quantities decrease monotonically under thermodynamic operations. This is a simple consequence of the fact that for $\v p$ and $\v q$ connected via a thermal operation, we have \mbox{$\mathcal{T}^{\, \beta}_{+}(\v{q}) \subset \mathcal{T}^{\, \beta}_{+}(\v{p})$} and \mbox{$\mathcal{T}^{\, \beta}_{-}(\v{p}) \subset \mathcal{T}^{\, \beta}_{-}(\v{q})$} which automatically implies $\mathcal V^{\, \beta}_{+}(\v q)\leq \mathcal V^{\, \beta}_{+}(\v p)$ and $\mathcal V^{\, \beta}_{-}(\v q)\geq \mathcal V^{\, \beta}_{-}(\v p)$. 

In order to demonstrate its faithfulness, first note that every state that is not thermal can be mapped to a thermal state, thus showing that $1 - \mathcal{V}_-(\v \gamma) = 0$. Similarly, the Gibbs state cannot be mapped via thermal operations to anything else than itself, thus $\mathcal{V}_+(\v \gamma) = 0$. Now, in order to show that both monotones are non-zero for any state different from the Gibbs state, it suffices to demonstrate that $\mathcal{V}_+(\v p) > 0$ for any $\v p\neq \v \gamma$. It is enough to consider, without loss of generality, a state $\v p$ thermalised within a $(d-1)$-dimensional subspace, i.e., \mbox{${p_i}/{p_j} = {\gamma_i}/{\gamma_j}$} for all $i\neq j\neq1$. For $k\in\{2,\hdots,d\}$, using $\beta$-swaps between levels $1$ and $k$ defined as~\cite{Lostaglio2018elementarythermal}
$$
    \left\{p_1,\,p_k\right\} \longmapsto \left\{\frac{\gamma_1 - \gamma_k}{\gamma_1}p_1  + p_k,\,\frac{\gamma_k}{\gamma_1}p_1\right\}, 
$$
we generate $d-1$ new points shifted from the original state by displacements $(\v \delta_k)_i = \delta_k \epsilon_{ik}$ defined by the Levi-Civita symbol $\epsilon_{ik}$ and $\delta_k \neq 0$. The entire set $\{\v \delta_k\}_{k=2}^d$ of displacements is linearly independent; therefore, they define a $(d-1)$-dimensional simplex of non-zero volume. Conversely, if we assume that $\delta_k = 0$ for any $k$, we are led to a conclusion that the system is thermalised between levels $1$ and $k$ and, by transitivity, it must be equal to the Gibbs state $\gamma$, which leads to a contradiction with the initial assumption. Finally, we restore the full generality by noticing that any state $\v q$ contains in its future cone states thermalised in any of the $(d-1)$-dimensional subspaces and, consequently, their entire future cone. Therefore, the non-zero volume of the latter implies the non-zero volume of the former. 
\end{proof}

\begin{figure}[t]
    \centering
    \includegraphics{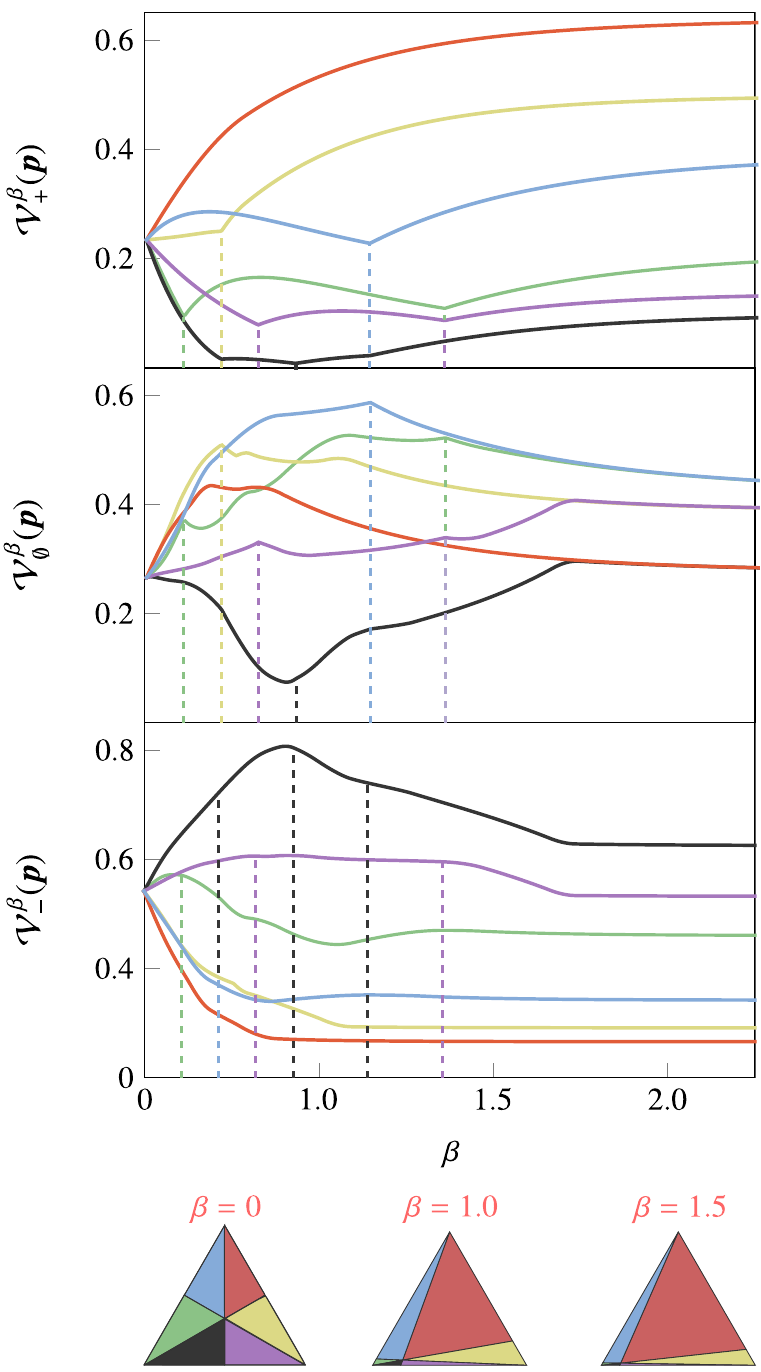}
    \caption{\label{fig-volumeasfunctionofbeta} \emph{Volume of the thermal cones as a function of $\beta$}. For all permutations of the state $\v p = (0.52, 0.12, 0.36)$, we plot the volume of the future thermal cone $\mathcal{V}^{\, \beta}_{+}$ (top), the incomparable thermal region $\mathcal{V}^{\, \beta}_{\emptyset}$ (centre) and the past thermal cone $\mathcal{V}^{\, \beta}_{-}$ (bottom). Each colour corresponds to a permutation associated with a given chamber of the probability simplex. Among all states, two are distinct: the maximally active (red curve) $\v p_{\text{max}} = (0.12, 0.36, 0.52)$ and the passive (black curve) $\v p_p = (0.52, 0.36, 0.12)$ states. Any other permutation of the initial state characterises a different active state. \rd{The kinks in $\mathcal{V}^{\, \beta}_{+}$ match with the inverse temperatures at which $\v p$ changes its $\beta$-ordering (vertical lines of matching colors).} The three different simplices at the bottom show how the Weyl chambers change with~$\beta$.}
\end{figure}

The behaviour of $\mathcal V_+$ and $\mathcal V_-$ as a function of $\beta$ strongly depends on the $\beta$-ordering of the state under consideration. Among all $\beta$-orderings, there are two extreme cases, namely the one where the population and energies are arranged in non-increasing order ($p_i \leq p_j$ for $E_i \leq E_j$), and the other in which the populations are `anti-ordered' with respect to the energies ($E_i \leq E_j$ implies $p_i \geq p_j$). These two distinct $\beta$-orderings characterise \emph{passive} and \emph{maximally active} states, respectively~\cite{Pusz1978,Lenard1978}. At the bottom left of Fig.~\ref{fig-volumeasfunctionofbeta}, for a three-level system, we depict each Weyl chamber with different colours: passive states lie in the black chamber, whereas maximally active states lie in the red one.

Let us first focus on the volume of the future thermal cone and analyse how it changes by increasing the inverse temperature $\beta$ from $\beta = 0$ to $\beta \to \infty$. In the present analysis, the initial state is kept fixed, while the thermal state is taken to be a function of temperature $\v \gamma = \v \gamma(\beta)$, and it follows a trajectory from the centre of the probability simplex to the ground state (see Fig.~\ref{fig-thermal-cones-examples} for an example considering a three-level system). \rd{If at $\beta = 0$ the initial state $\v p$ is passive}, the volume of its future thermal cone first decreases with $\beta$, and then starts to increase when $\v \gamma$ passes $\v p$ (\rd{i.e., when $\v{p}$ changes its $\beta$-ordering}), tending asymptotically to a constant value (see black curves in Fig.~\ref{fig-volumeasfunctionofbeta}). However, \rd{if at $\beta = 0$ the initial state $\v{p}$ is maximally active}, the behaviour of the volume of $\T_+^\beta(\v{p})$ differs from the previous case. As the thermal state approaches the ground state with increasing $\beta$, the distance between maximally active states and $\v \gamma$ increases with $\beta$, because the ground state and $\v{p}$ are located in opposite chambers. Consequently, the volume increases asymptotically to a constant value (see red curves in Fig.~\ref{fig-volumeasfunctionofbeta}). \rd{For general states, one can provide a qualitative explanation of the behaviour of their volumes based on their $\beta$-orderings. The inverse temperatures $\beta$ for which one finds kinks in the future volume $\mathcal{V}_+^\beta(\v{p})$ (vertical lines in Fig.~\ref{fig-volumeasfunctionofbeta}) match with the transitions from a given $\beta$-ordering to another one. This should be compared with the isovolumetric level sets in Fig.~\ref{fig-equivolume}, where a matching non-smooth behaviour is found. Observe that similar kinks are not encountered for the past volume $\mathcal{V}^\beta_-(\v{p})$, related to the smooth behaviour of the corresponding level sets. However, by} considering a passive state $\v p$ and all its permutations, we can demonstrate that maximally active and passive states have maximum and minimum future volumes, respectively: 
\begin{cor}
\label{obs_passive_active_obs}
For a $d$-dimensional energy-incoherent state $\v p$ with Hamiltonian $H$, and all states defined by permuting its population, the future thermal cone of the permutation resulting in the maximally active and passive states achieve maximum and minimum volumes, respectively.
\end{cor}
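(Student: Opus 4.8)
The plan is to prove the statement entirely at the level of set inclusions between future thermal cones and then invoke monotonicity of the relative volume under inclusion (the same fact used in the proof of Theorem~\ref{thm_monotne}). Writing $\v p_{\max}$ and $\v p_p$ for the maximally active and passive arrangements of the fixed population, and $\Pi\v p$ for an arbitrary permutation, I would establish
\begin{equation*}
\mathcal{T}^{\,\beta}_+(\v p_p)\subseteq\mathcal{T}^{\,\beta}_+(\Pi\v p)\subseteq\mathcal{T}^{\,\beta}_+(\v p_{\max}),
\end{equation*}
whence $\mathcal V^{\,\beta}_+(\v p_p)\le\mathcal V^{\,\beta}_+(\Pi\v p)\le\mathcal V^{\,\beta}_+(\v p_{\max})$ is immediate. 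By Theorem~\ref{thm_HLPgeneralisation} together with the fact that Gibbs-preserving maps compose, these inclusions are equivalent to the thermomajorisation statements $\v p_{\max}\succ_\beta\Pi\v p$ and $\Pi\v p\succ_\beta\v p_p$ for every $\Pi\in\mathcal{S}_d$; that is, the maximally active arrangement thermomajorises every permutation of $\v p$, while the passive one is thermomajorised by every permutation.

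The engine of the argument is a single-transposition lemma. Order the levels so that $\gamma_1\ge\dots\ge\gamma_d$, and call a pair $(a,b)$ with $\gamma_a>\gamma_b$ an \emph{inversion} if the larger population sits on the higher-$\gamma$ (lower-energy) level, $p_a>p_b$. I would show that swapping the populations on such a pair yields a state $\v p'$ with $\v p'\succ_\beta\v p$. This is the only nontrivial step, and the cleanest route uses the fractional (greedy) characterisation of the thermomajorisation curve,
\begin{equation*}
f^{\,\beta}_{\v p}(x)=\max\Big\{\textstyle\sum_i r_i p_i:\ \sum_i r_i\gamma_i=x,\ 0\le r_i\le 1\Big\},
\end{equation*}
which makes the curve additive over disjoint blocks of levels: fixing the occupations on all levels other than $a,b$ gives $f^{\,\beta}_{\v p}(x)=\max_{C}\big[R(x-C)+V_{ab}(C)\big]$, where $R$ is the contribution of the untouched levels and $V_{ab}$ is the two-level curve built from $\{p_a,p_b\}$ and $\{\gamma_a,\gamma_b\}$. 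Since the swap leaves both $R$ and the multiset $\{\gamma_a,\gamma_b\}$ unchanged, it suffices to compare the two two-level curves. Here the computation is elementary: both start at $(0,0)$, end at $(\gamma_a+\gamma_b,\,p_a+p_b)$, and are concave with a single elbow; the swapped arrangement has initial slope $p_a/\gamma_b$, which exceeds both $p_a/\gamma_a$ and $p_b/\gamma_b$ whenever $p_a>p_b$ and $\gamma_a>\gamma_b$, so the swapped two-level curve lies pointwise above the original. Hence $V^{\text{swap}}_{ab}(C)\ge V_{ab}(C)$ for all $C$, and the block decomposition gives $f^{\,\beta}_{\v p'}(x)\ge f^{\,\beta}_{\v p}(x)$ for all $x$, i.e. $\v p'\succ_\beta\v p$.

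With the exchange lemma in hand I would finish by a sorting argument. Starting from an arbitrary permutation $\Pi\v p$ and repeatedly swapping inversions, the number of inversions strictly decreases while each swap produces a state thermomajorising the previous one; the process terminates at the unique inversion-free arrangement, which is precisely the maximally active state (largest population on smallest $\gamma$). Transitivity of $\succ_\beta$ then yields $\v p_{\max}\succ_\beta\Pi\v p$. Running the same argument in the opposite direction --- now moving the larger population \emph{onto} the higher-$\gamma$ level, which by the identical two-level comparison \emph{lowers} the curve --- sorts any $\Pi\v p$ down to the passive arrangement and gives $\Pi\v p\succ_\beta\v p_p$. Together with the inclusion/volume step this proves the corollary.

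I expect the only genuine obstacle to be the exchange lemma, and within it the bookkeeping that a transposition which is local in the energy levels nonetheless reshuffles the global $\beta$-ordering. The block-additive form of the fractional characterisation is exactly what tames this, reducing the global curve comparison to the one-elbow two-level inequality, which is a direct calculation.
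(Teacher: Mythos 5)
Your argument is correct and follows essentially the same route as the paper: both reduce the volume statement to the cone inclusions $\mathcal{T}^{\,\beta}_+(\v p_p)\subseteq\mathcal{T}^{\,\beta}_+(\Pi\v p)\subseteq\mathcal{T}^{\,\beta}_+(\v p_{\max})$, which follow from the thermomajorisation orderings $\v p_{\max}\succ_\beta\Pi\v p\succ_\beta\v p_p$ together with Theorem~\ref{thm_HLPgeneralisation}. The only difference is that the paper merely asserts these orderings while you prove them via the exchange-and-sort argument; that proof is sound, though the two-level comparison needs one more line — a larger initial slope alone does not force one concave one-elbow curve above another with the same endpoints, so you should verify that the swapped curve dominates at the original curve's elbow, which indeed holds in both cases $p_a/\gamma_a \geq p_b/\gamma_b$ and $p_a/\gamma_a < p_b/\gamma_b$.
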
 
\begin{proof}
The Corollary is proven by noting that all permutations of $\v p$ are thermomajorised by the one corresponding to the maximally active state $\v p_{\text{max}}$, while the associated passive state $\v p_{p}$ is thermomajorised by all the other permutations. Consequently, $ \mathcal{T}_{+}(\v p_p) \subset \mathcal{T}_{+}(\Pi \v p)$, while $\mathcal{T}_{+}(\Pi \v p) \subset \mathcal{T}_{+}(
\v p_{\text{max}})$ for any permutation matrix $\Pi$.
\end{proof}
Corollary~\ref{obs_passive_active_obs}, also implies that the volume of the past thermal cone is minimum and maximum for the one corresponding to the maximally active and passive states, respectively (see~Fig.~\ref{fig-volumeasfunctionofbeta}). To provide further characterisation of the volumes, we apply Lemma~\ref{lemma_incomparablefiniteT} to a non-full rank state, which allows us to derive the following result: 
\begin{cor}
\label{cor_thermalvolumepast}
The past thermal cone of a non-full rank state has volume zero despite being non-empty. 
\end{cor}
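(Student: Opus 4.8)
The plan is to translate the geometric claim into a statement about thermomajorisation curves. By Theorems~\ref{thm_HLPgeneralisation} and~\ref{thm_Thermaloperations}, we have $\v q \in \mathcal{T}^{\,\beta}_-(\v p)$ if and only if $\v q \succ_\beta \v p$, i.e.\ $f^{\,\beta}_{\v q}(x) \geq f^{\,\beta}_{\v p}(x)$ for all $x \in [0,1]$. The target reduction is to show that thermomajorising a non-full-rank $\v p$ forces $\v q$ itself to be non-full-rank. This would place the entire past cone inside the boundary $\partial \v{\Delta}_d$, which is a union of $(d-2)$-dimensional faces and hence has vanishing $(d-1)$-dimensional Euclidean volume, while non-emptiness is immediate since the identity is Gibbs-preserving and thus $\v p \in \mathcal{T}^{\,\beta}_-(\v p)$ (equivalently, any sharp state thermomajorises $\v p$).

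First I would record the structure a vanishing entry imposes on the thermomajorisation curve. Let $k = \v \pi^{-1}_{\v p}(d)$ be the level occupying the last slot in the $\beta$-ordering of $\v p$. Since the $\beta$-ordering sorts the ratios $p_i/\gamma_i$ non-increasingly and $\v p$ has a zero entry, the minimal ratio equals $0$; as $\gamma_k > 0$, this forces $p_k = 0$. Consequently the final linear segment of $f^{\,\beta}_{\v p}$, over $x \in [1-\gamma_k,\,1]$, has slope $p^{\,\beta}_d/\gamma^{\,\beta}_d = 0$, so the curve is pinned at height $1$ throughout that interval.

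Next I would propagate this flatness to $\v q$. The inequality $f^{\,\beta}_{\v q} \geq f^{\,\beta}_{\v p}$ gives $f^{\,\beta}_{\v q}(x) \geq 1$ on $[1-\gamma_k,\,1]$; since every thermomajorisation curve is non-decreasing and ends at $(1,1)$, it is bounded above by $1$, whence $f^{\,\beta}_{\v q}(x) = 1$ there. Here I would invoke concavity of $f^{\,\beta}_{\v q}$ (slopes non-increasing by $\beta$-ordering): a concave piecewise-linear curve that is flat on a sub-interval abutting the endpoint $x=1$ must have its last linear piece flat as well, so $q^{\,\beta}_d/\gamma^{\,\beta}_d = 0$ and $\v q$ has a vanishing entry. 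Combining this with the reduction above completes the argument.

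The main obstacle I anticipate is the bookkeeping when $\v p$ or $\v q$ carries several zero entries, or when the breakpoints of $f^{\,\beta}_{\v q}$ fail to align with $1-\gamma_k$; the concavity argument sidesteps both by reasoning directly about flatness of the tail rather than matching individual segments. As a consistency check I would verify the $\beta \to 0$ limit, where thermomajorisation collapses to ordinary majorisation and the whole argument reduces to the elementary observation that $\sum_{i=1}^{d-1} q^{\downarrow}_i \geq \sum_{i=1}^{d-1} p^{\downarrow}_i = 1$ forces $q^{\downarrow}_d = 0$, recovering the same conclusion.
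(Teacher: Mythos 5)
Your argument is correct, and it takes a genuinely different route from the paper. The paper's proof is a one-line corollary of its Lemma~\ref{lemma_incomparablefiniteT}: substituting $p^{\,\beta}_d=0$ into Eq.~\eqref{eq_thermaltangentvectors} collapses the tangent vector $\v{t}^{(d,\v \pi)}$ to a sharp state, whence the incomparable region swallows the whole interior of the simplex outside $\T^{\,\beta}_+(\v p)$ and the past is pushed onto the boundary. You instead work directly with thermomajorisation curves: a zero entry pins $f^{\,\beta}_{\v p}$ at height $1$ on a terminal interval of positive length, domination forces $f^{\,\beta}_{\v q}$ to be flat there too, and the last linear segment of $f^{\,\beta}_{\v q}$ (whose domain necessarily overlaps that interval) must then have slope $q^{\,\beta}_d/\gamma^{\,\beta}_d=0$. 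This is self-contained -- it bypasses the tangent-vector machinery entirely -- and it actually proves the sharper, reusable statement that thermomajorising a rank-deficient state forces rank deficiency, i.e.\ $\T^{\,\beta}_-(\v p)\subseteq\partial\v{\Delta}_d$; the paper's route buys brevity given that Lemma~\ref{lemma_incomparablefiniteT} has already been established. One small correction: your parenthetical ``any sharp state thermomajorises $\v p$'' is false for finite $\beta$ -- the curve of $\v s_i$ is $\min(x/\gamma_i,1)$, so only the sharp state on the level with the smallest Gibbs weight dominates all others. This does not affect your proof, since reflexivity ($\v p\in\T^{\,\beta}_-(\v p)$ via the identity map) already gives non-emptiness.
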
 
\begin{proof}
Without loss of generality, consider a state of non-full rank $\v{p} = (p_1, ..., p_{d-1},0)$. Applying Eq.~\eqref{eq_thermaltangentvectors} yields $\v{t}^{(d,\v \pi)} = (1,0,...,0)$, for all $\v \pi$. Consequently, the incomparable region is given by all points in the interior of the probability simplex, except those that are in the future of $\v{p}$. Then, according to Theorem \eqref{eq_incomparableconeFT}, all the points of the past will be located at the edge, and therefore the volume of $\T^{\, \beta}_{-}(\v{p})$ is zero.
\end{proof}

Understanding the behaviour of the thermal incomparable region is not directly straightforward. However, Corollary~\ref{cor_thermalvolumepast} helps us to find the state of non-full rank with the largest incomparable region:
\begin{cor}
\label{cor_full_rank_largest}
The non-full rank state with the largest thermal incomparable region is given by
\begin{equation}
\label{eq_nonfullrankstatethermalinc}
\v g = \frac{1}{Z}\left(e^{-\beta E_1}, ...,e^{-\beta E_{d-1}},0 \right) \quad \text{where} \quad Z = \sum_{i=1}^{d-1}e^{-\beta E_i}. 
\end{equation}
\end{cor}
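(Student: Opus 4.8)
The plan is to characterize the non-full rank state $\v{g}$ with the largest thermal incomparable region by combining Corollary~\ref{cor_thermalvolumepast} with a volume-balance argument. Since the full simplex decomposes as $\v{\Delta}_d = \mathcal{T}^{\,\beta}_+(\v p)\cup \mathcal{T}^{\,\beta}_\emptyset(\v p)\cup \mathcal{T}^{\,\beta}_-(\v p)$ with disjoint interiors, the relative volumes satisfy $\mathcal{V}^{\,\beta}_+ + \mathcal{V}^{\,\beta}_\emptyset + \mathcal{V}^{\,\beta}_- = 1$. For any non-full rank state, Corollary~\ref{cor_thermalvolumepast} tells us $\mathcal{V}^{\,\beta}_- = 0$, so maximising the incomparable region $\mathcal{V}^{\,\beta}_\emptyset$ over non-full rank states is equivalent to \emph{minimising} the future volume $\mathcal{V}^{\,\beta}_+$ over that same set. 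First I would reduce the problem to this minimisation statement, which is cleaner to handle because the future thermal cone has the explicit convex-hull description from Corollary~\ref{thm_futurefinite}.

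Next I would restrict attention to states supported on the first $d-1$ levels, writing $\v{p} = (p_1,\dots,p_{d-1},0)$, and argue that the future volume is governed by the thermomajorisation ordering among such states. The key observation is that within the class of non-full rank states (with the last level empty), the future cone $\mathcal{T}^{\,\beta}_+(\v p)$ shrinks precisely when $\v p$ is thermomajorised by fewer states; the smallest future cone is obtained by the state that is thermomajorised by all others in its support class. I would claim that $\v g$ of Eq.~\eqref{eq_nonfullrankstatethermalinc}, namely the Gibbs state conditioned on (restricted to) the subspace spanned by the first $d-1$ levels, is exactly this minimal element: it is the ``most thermal'' distribution compatible with having zero population on the top level. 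Concretely, $\v g$ is the unique fixed point of the Gibbs-preserving dynamics restricted to the $(d-1)$-dimensional face, so its thermomajorisation curve lies below that of every other state on that face, giving $\mathcal{T}^{\,\beta}_+(\v g)\subset \mathcal{T}^{\,\beta}_+(\v p)$ for every non-full rank $\v p$ with the same vanishing coordinate.

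The heart of the argument is therefore to establish that $\v g$ thermomajorises no non-full rank state other than itself, equivalently that every non-full rank state thermomajorises $\v g$ within its face. I would verify this by comparing thermomajorisation curves: for a state thermalised on the $(d-1)$-dimensional subspace, the curve is a straight line (up to the final segment forced by the empty level), and a straight line is the pointwise-lowest concave thermomajorisation curve among all curves sharing the same endpoints dictated by the restricted normalisation. This is the same mechanism used in the faithfulness proof of Theorem~\ref{thm_monotne}, where states thermalised within a subspace play the extremal role. To close the argument I would then invoke that the minimal future cone yields the maximal incomparable region among non-full rank states, and symmetrise over the choice of which level is depleted, observing that depleting the \emph{highest-energy} level (so that the remaining weight sits on the levels with largest Gibbs factors) is what produces $\v g$ as written.

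The main obstacle I anticipate is making rigorous the claim that $\v g$ is genuinely thermomajorised by \emph{all} non-full rank states, and not merely by those sharing its $\beta$-order. Because thermomajorisation does not form a global lattice (as emphasised in Sec.~\ref{Sec:Thermal-cones}), a state $\v p$ in a different $\beta$-order need not have a curve directly comparable to that of $\v g$. I would address this by passing to the embedding lattice $\mathfrak{M}$ of Sec.~\ref{sec_embeddingLattice}, where $\succ_\beta$ becomes the genuine partial order $\succ_\mathfrak{M}$: in the embedded picture $\mathfrak{M}(\v g)$ should be the bottom element among embeddings of non-full rank states with the top level empty, and the inclusion of future cones then follows from the monotonicity properties of $\mathfrak{M}$ and $\mathfrak{P}_{\v\pi}$ established there. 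A secondary subtlety is confirming that no \emph{other} non-full rank state (with a different level depleted) could yield a larger incomparable region; here I would use the asymmetry of the Gibbs weights to show that concentrating the surviving population on the energetically favoured levels, as in Eq.~\eqref{eq_nonfullrankstatethermalinc}, minimises the future volume most strongly.
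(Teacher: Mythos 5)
Your proposal is correct and follows essentially the same route as the paper: reduce to minimising $\mathcal{V}^{\,\beta}_+$ over non-full rank states via Corollary~\ref{cor_thermalvolumepast}, then show that every non-full rank $\v p$ satisfies $\v p \succ_\beta \v g$ so that $\mathcal{T}^{\,\beta}_+(\v g)\subset\mathcal{T}^{\,\beta}_+(\v p)$ — the paper simply asserts this thermomajorisation, where you supply the straight-line/concavity argument. One simplification: the embedding-lattice detour is unnecessary, because thermomajorisation curves of states with different $\beta$-orders can always be compared pointwise (only joins and meets require the lattice structure, and none are needed here); concavity of $f^{\,\beta}_{\v p}$ together with the fact that its curve reaches $1$ at $x\leq 1-\gamma_d$ already places it above the straight line defining $f^{\,\beta}_{\v g}$ for \emph{every} non-full rank $\v p$, regardless of which level is depleted.
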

\begin{proof}
Consider an arbitrary non-full rank state $\v{p}$. According to Corollary~\ref{cor_thermalvolumepast}, the volume of the thermal incomparable region can be written as $\mathcal V^{\, \beta}_{\emptyset}(\v p) = 1-\mathcal V^{\, \beta}_+(\v p)$. Now note that, $\v p \succ_{\beta} \v g$, and $\mathcal T_{+}(\v g) \subset \mathcal T_{+}(\v p)$. This implies that $\v g$ is the non-full rank state with the smallest future
thermal cone and, therefore, with the largest incomparable region. 
\end{proof}

\begin{figure}[t]
    \centering
    \includegraphics{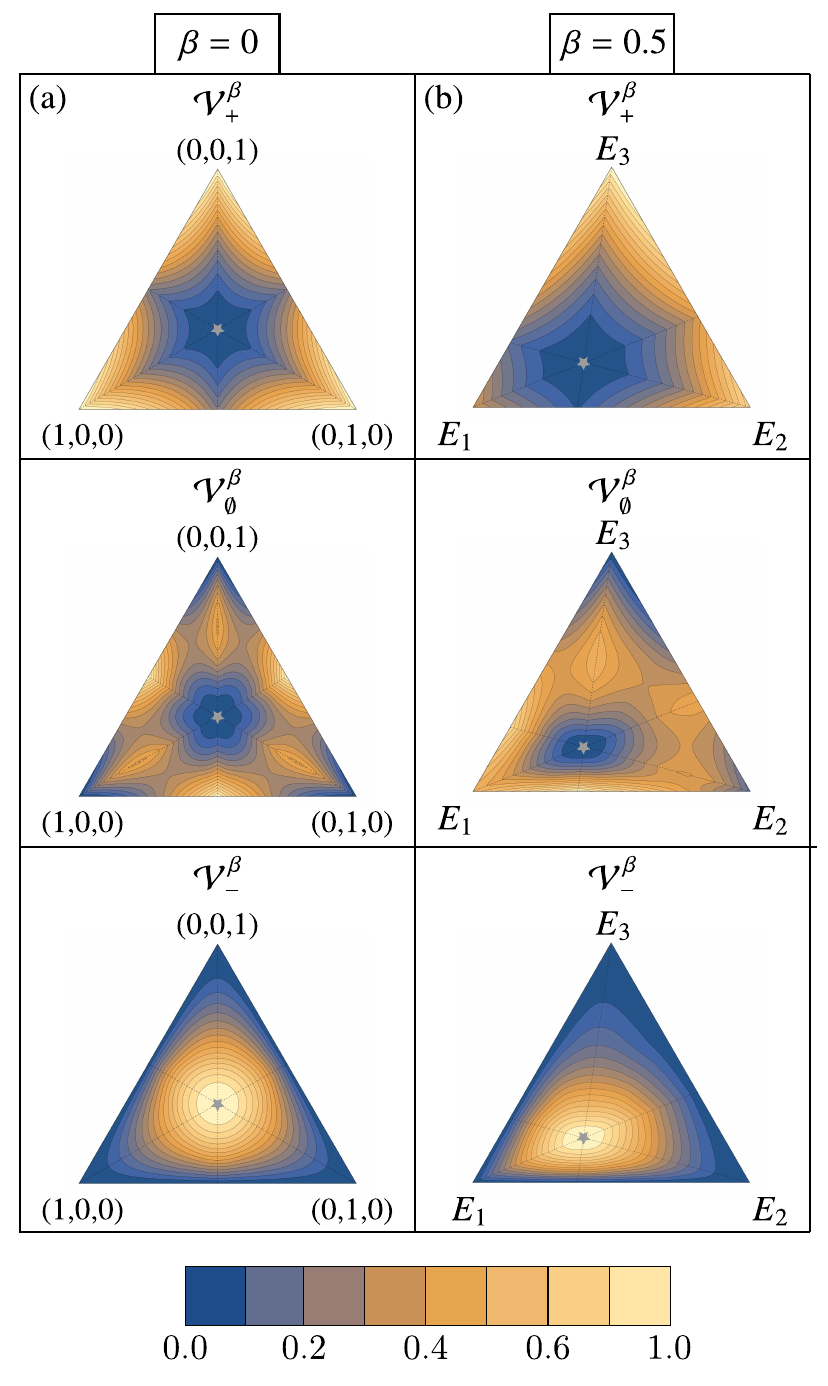}
    \caption{\label{fig-equivolume} \rd{\emph{Isovolumetric curves}. The volume of the future $\mathcal V^{\, \beta}_{+}$, incomparable $\mathcal V^{\, \beta}_{\emptyset}$, and past thermal regions $\mathcal V^{\, \beta}_{-}$, in the space of three-dimensional probability distributions for an equidistant energy spectrum $E_1 =0$, $E_2=1$ and $E_3 = 2$ and inverse temperature (a) $\beta =0$ and (b) $\beta = 0.5$. The thermal state is depicted by a grey star {$\color{gray}\bigstar$}.} 
    }
\end{figure}

So far, the behaviour and properties of the volumes have been analysed and discussed without explicitly calculating them. There are several known algorithms for computing volumes of convex polytopes, such as triangulation, signed decomposition methods, or even direct integration~\cite{iwata1962, buller1998}. These algorithms can be employed to obtain the volumes of past and future cones; the volume of the incomparable region can be calculated using the fact that the total volume of the probability simplex $\Delta_d$ is equal to one. For a three-dimensional energy-incoherent state $\v p$, expressions for the volumes can be easily derived. The starting point is to consider the Gauss area formula~\cite{braden1986}, which allows us to determine the area of any polygon with vertices described by Cartesian coordinates. Taking into account a polygon $P$ whose vertices, assumed to be arranged along the boundary in a clockwise manner, are denoted by $P_i = (x_i, y_i)$, with $i= \{1, ..., n\}$, the Euclidean volume can be expressed as 
\begin{equation}
V = \frac{1}{2} \left| \sum_{i=1}^n \operatorname{det} \begin{pmatrix} x_i & x_{i+1} \\ y_i & y_{i+1} \end{pmatrix} \right| ,   \end{equation}
where $x_{n+1} = x_1$ and $y_{n+i} = y_1$. For $\beta = 0$, deriving the volume of the thermal cones is straightforward, since the vertices, or extreme points, are permutation of $\v p$. In this case, we arrive at the following closed-form expressions:
\begin{align}
    \mathcal V^{\, 0}_{+}(\v p) =& \: (3p^{\downarrow}_1-1)^2-3(p^{\downarrow}_2-p^{\downarrow}_1)^2 , \label{eq_futurevolume} \\ 
    \mathcal V^{\, 0}_{\emptyset}(\v p) =& \: 1-3(1-p^\downarrow_1)^2+ (1-3p^\downarrow_3)^2 - 2 \mathcal V^{\, 0}_{+}(\v p) \\
    &\:+ 3\theta(1/2 - p^\downarrow_1)(1-2p^\downarrow_1)^2 \nonumber \\ 
    \mathcal{V}^{\, 0}_{-}(\v p) =& \: 12p^\downarrow_2p^\downarrow_3 - 3\theta(1/2 - p^\downarrow_1)(1-2p^\downarrow_1)^2 , \label{eq_pastvolume} \nonumber
\end{align}
where $\theta$ is the Heaviside step function. The situation involving a finite $\beta$ is not as simple as before. Now, the extreme points are obtained by applying Lemma~\ref{lem_extreme}, and, although computationally it is an easy task to calculate them, a neat and concise closed-form expression cannot be derived. 
\begin{figure*}[t]
    \centering
    \includegraphics{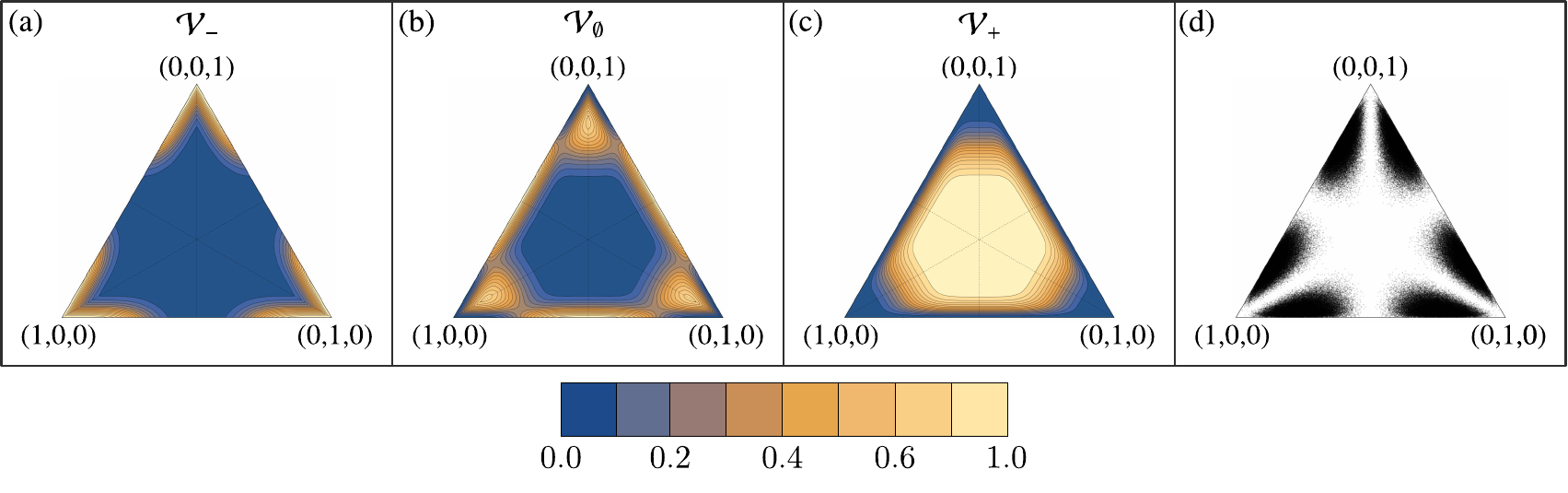}
    \caption{\label{fig-level-sets-entanglement}\emph{Isovolumetric sets for entanglement cones}. \rd{In order to calculate the isovolumetric lines for (a) the past cone, (b) the incomparable region and (c) the future cone for entanglement resource theory of states in $\mathcal{H}^{3\times3}$ one considers the density of the Schmidt coefficients $\v{p}$ induced by the Haar measure in the space $\mathcal{H}_9$ of pure bipartite states, as depicted in panel (d) by showing a sample of $5\cdot10^4$ points in $\Delta_3$. Observe the scarcity of points in the central region, resulting in characteristic concentration of states with large future in the centre of the simplex and large past in the vicinity of the vertices.}} 
\end{figure*}

Finally, let us look at iso-volumetric curves for different values of $\beta$. In Fig.~\ref{fig-equivolume}, we plot these curves for a three-level system and four different temperatures. As expected, the symmetry is broken for any $\beta > 0$ as $E_3 > E_2 > E_1$. A simple fact worth mentioning is that the volume of the future thermal cone of the highest excited state $\v s_d = (0,\hdots,0,1)$ is always independent of $\beta$, maximum, and equal to unity. Conversely, the past thermal cone volume is maximum for a Gibbs state and equal to unity. Moreover, these curves give insight into how resourceful states are distributed within the space of states. 

\subsection{\rd{The volumes of entanglement cones}}

\rd{Finally, we will briefly discuss the general qualitative aspects of the volumes of entanglement cones based on the numerical considerations. Detailed formal methods used in order to obtain them are discussed in Appendix \ref{app:entanglement_vols}.}

\rd{Naturally, depending on the context, the probability distribution $\v{p}$ may pertain to the Schmidt coefficients of a pure entangled state or the coefficients resulting from decomposing the state in a distinguished basis in the context of coherence~\cite{Streltsov2016,Streltsov2017}. 
Despite the close connection between the resource theory of thermodynamics (at $\beta = 0$) with the resource theories of entanglement and coherence, crucial differences appear already at the level of a single state as the order is reversed~\cite{zyczkowski2002}, hence, interchanging the future $\mathcal{T}_+$ with the past $\mathcal{T}_-$. The difference is even more pronounced within the context of volumes for the entanglement of pure states under LOCC operations.

The distribution of Schmidt coefficients induced by the uniform Haar measure in the space of pure bipartite states is significantly different from the flat distribution in the probability simplex $\Delta_d$ \cite{zyczkowski2001induced}. In particular, one observes a repulsion from the centre and, in the case of entangled systems of unequal dimension, from the facets of the simplex. Consequently, this implies a significant difference in the qualitative features of the isovolumetric curves. Fig.~\ref{fig-level-sets-entanglement} shows the isovolumetric curves for $d=3$ with equal-sized systems. Observe that states with large future volumes $\mathcal{V}_+$ are concentrated around the centre of the simplex, which is explained by the repelling property. Inverse effects can be seen for the states with large past volumes $\mathcal{V}_-$, which concentrate at the boundaries of the simplex. The differences become even more pronounced for systems of unequal dimensions, as we demonstrate in qualitative figures in Appendix~\ref{app:entanglement_vols}~(see Fig.~\ref{fig-entanglementvolume-2}).}


\section{Outlook} 
\label{Sec:Outlook}

In this paper, we have investigated the structure of the thermodynamic arrow of time by dividing the space of energy-incoherent states into the past, the future, and the incomparable region, in analogy with the future, past, and spacelike regions of Minkowski spacetime, respectively. These regions of the probability simplex, called thermal cones, encode in a natural way the achievability of state transformations under thermal operations. In particular, we considered the energy-incoherent states in the presence of a thermal bath at a finite temperature and in the limit of temperature going to infinity. The incomparable and past thermal cones were fully characterised and carefully analysed in both regimes. Furthermore, we identified the volumes of the thermal cones as thermodynamic monotones and performed detailed analysis of their behaviour under different conditions. Our results can also be applied directly to the study of entanglement, as the order defined on the set of bipartite pure entangled states by local operations and classical communication is the opposite of the thermodynamic order in the limit of infinite temperature. In this context, the future thermal cone becomes the past for entanglement, and the past becomes the future. Moreover, a similar extension to coherence resource theory can also be drawn.

There are quite a few research directions that one may want to take to generalise and extend the results presented in this paper. \rd{First, one may generalise our analysis beyond the energy-incoherent states to the full space of quantum states, where the available tools are comparatively scarce \cite{Korzekwa2017, LostaglioKorzekwaCoherencePRX, Gour_2018}}.
\rd{As explained in Appendix~\ref{app:coherent_thermal}, for qubit systems one can employ different known techniques to construct coherent thermal cones under both Gibbs-preserving and thermal operations, as shown in Fig.\ref{fig-coherent-thermal-cone}.} 
\begin{figure}[t]
    \centering
    \includegraphics{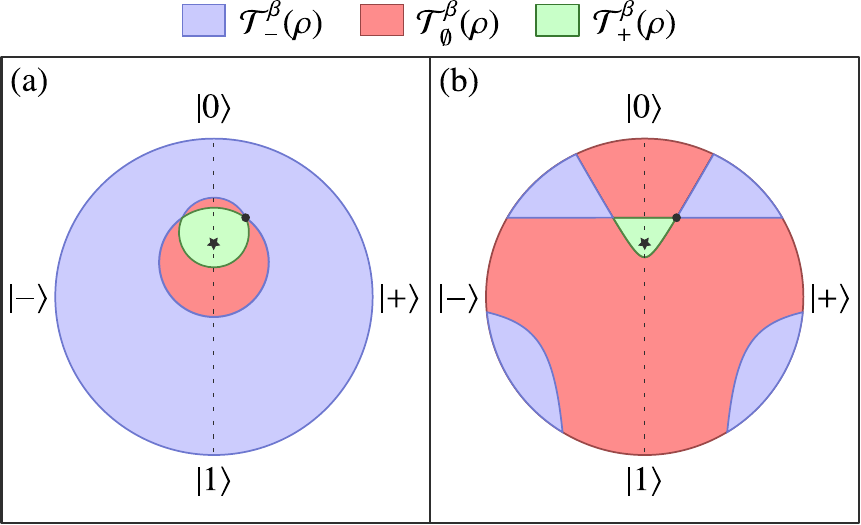}
    \caption{\label{fig-coherent-thermal-cone} \rd{\emph{Coherent thermal cones.} For a two-level system with initial state $\rho$, represented by a black dot $\bullet$, and thermal state represented by a black star $\bigstar$  with Bloch vectors \mbox{$\v r_{\rho} = (0.2,0,0.5)$} and \mbox{$\v r_{\gamma} = (0,0,1/3)$}, respectively, we depict in the real cross-section of the Bloch ball, the coherent thermal cone (a) under Gibbs-preserving operations (b) under thermal operations}.
    }
\end{figure}
\rd{Beyond the qubit case, the problem is still open. \footnote{\rd{Note that this problem was formally solved in Ref.~\cite{Gour_2018}, in which a complete set of necessary and sufficient conditions was found for arbitrary quantum state transformations under thermal operations. However, this set of conditions is convoluted and hard to characterise, thus beyond implicit formulation, the problem remains open.}}}

Second, a newly introduced notion of continuous thermomajorisation was presented in Ref.~\cite{lostaglio2021continuous,Korzekwa2022}, where the authors obtained the necessary and sufficient conditions for the existence of a Markovian thermal process underlining a given state transformation. A natural task then is to provide an equivalent construction of the past and incomparable regions for that memoryless thermodynamic ordering. Finally, we point out a possible technical extension of our results. Our investigation was performed in the spirit of single subsystems, but a possible extension comprising many non-interacting subsystems (possibly independent and identically distributed) could be done by defining a proper function to analyse the behaviour of the thermal cones.

\begin{acknowledgments}
    AOJ thanks Moisés Bermejo Morán for helpful discussions. We acknowledge financial support from the Foundation for Polish Science through the TEAM-NET project (POIR.04.04.00-00-17C1/18-00), and the NCN projects:
    DEC-2019/35/O/ST2/01049 and 2021/03/Y/ST2/00193.
\end{acknowledgments}


\appendix


\section{Derivation of the incomparable region -- Proof of Lemmas \ref{lem_incomparablecone} and \ref{lemma_incomparablefiniteT}} 
\label{sec_appendix}

In this Appendix, we consider lemmas defining the boundaries of the incomparable region -- an essential step in learning the full form of the past thermal cone. First, we consider the case of $\beta = 0$, covered by Lemma \ref{lem_incomparablecone}, and develop the notion of the so-called tangent vectors $\v t^{(n)}$ comprising the boundary of the incomparable region. The notion of tangent vectors is generalised and used in the proof of Lemma \ref{lemma_incomparablefiniteT} concerning the incomparable region for $\beta > 0$.

\subsection{Infinite temperature}

\subsubsection{Tangent vector}

The first notion we will need in the proof is that of a \textbf{tangent probability vector}, called the tangent vector for short, which will prove to be an essential ingredient.

Let $\Delta_d$ be the set of probability vectors of dimension $d$ with real entries, $\Delta_d= \left\{(p_1, ..., p_d) \in \mathbb{R}^d : \sum_i p_i = 1\right\}$,  and let us restrict ourselves to vectors ordered in a non-increasing order, i.e., \mbox{$p_i \geq p_{i+1}$}. In order to avoid complication we assume, without loss generality, that all $p_i \neq p_j$. In what follows, we will denote probability vectors by bold lowercase letters $\v p \in \Delta_d$ and their corresponding cumulative counterparts by bold uppercase vectors $\v P: P_i = \sum_{j=1}^i p_i$ with $i\in\{0,\hdots,d\}$. For any vector $\v p$ we introduce a tangent vector $\v{t}(\v{p}) \equiv \v{t}\in\Delta_d$ by imposing that all its components except the first and the last are equal, $t_i = t_j$ for all $1 < i < j < d$; furthermore, we require that the cumulative vector $\v{T}$ agrees with the vector $\v{P}$ in at most two consecutive points, i.e., $T_i = P_i$ and $T_j > P_j$ for all $j\in\left\{1,\hdots,d-1\right\}\backslash\{i\}$ or $T_j = P_j$ for $j \in\left\{i, i+1\right\}$ and $T_j > P_j$ elsewhere. The two imposed conditions follow the intuition of tangency and, by construction, satisfy the majorisation relation $\v t \prec \v p$. Furthermore, note that $T_0 = P_0 = 0$ and $T_d = P_d = 1$ and thus are naturally excluded from the considerations.

Indeed, assuming equality between $\v T$ and $\v P$ at exactly two consecutive points restricts the tangent vectors to a set of $d$ unique probability vectors $\v{t}^{(n)}$ defined as 
\begin{equation}
\label{A:tvector}
    \v{t}^{(n)}(\v{p})\equiv \v{t}^{(n)}= \left(t^{(n)}_1, p_n, ..., p_n, t^{(n)}_d\right) , 
\end{equation}
for $1\leq n \leq d$ with the first and last components given by 
\begin{equation}
    t_1^{(n)}  = \sum_i^{n-1} p_i - (n-2) p_n\quad, \quad
    t_d^{(n)}  = 1 - t^{(n)}_1 - (d-2)p_n.
\end{equation}

Observe that the tangent vectors $\v{t}^{(n)}$ that agree with the Lorenz curve of $\v p$ at two successive points can be used to construct all possible tangent vectors $\v t$ that satisfy the condition of agreement at at least a single point, $T_i = P_i$. Indeed, consider a vector $\v t (\lambda, i) \equiv \v t= (1 - \lambda)\v{t}^{(i)} + \lambda \v{t}^{(i+1)}$. Direct calculation shows that it is tangent at just one point, $T_i = \lambda T_i^{(i)} + (1-\lambda)T_i^{(i+1)} = P_i$. Similarly we arrive at $T_j > P_j$ for $j \neq i$ and $0<\lambda<1$. 

Thus, starting with a discrete set of $d$ tangent vectors $\v{t}^{(n)}$ we recover the entire continuous family of tangent vectors $\v{t}(\lambda, i)$. Indeed, this argument can be further formalised by considering the Lorenz curves $f_{\v p}(x)$ and $f_{\v t}(x)$. Considering the left and right derivatives of the former, $\lim_{x \rightarrow \frac{i}{d}_-} f'_{\v p}(x) = d\cdot p_i$ and $\lim_{x \rightarrow \frac{i}{d}_+} f'_{\v p}(x) = d\cdot p_{i+1}$, we obtain the extremal slope values for the tangent lines at the $i$-th elbow. Now, considering the second Lorenz curve by construction we have that $f_{\v t}(i/d) = f_{\v p}(i/d)$ and its derivative at this point, $f'_{\v t}(i/d) = d\left[(1-\lambda)p_i + \lambda p_{i+1}\right] $, span all values between the extremal slope values $d\cdot p_i$ and $d\cdot p_{i+1}$, therefore exhausting the family of possible tangent lines at the $i$-th elbow. 

\begin{figure*}
    \centering
    \includegraphics{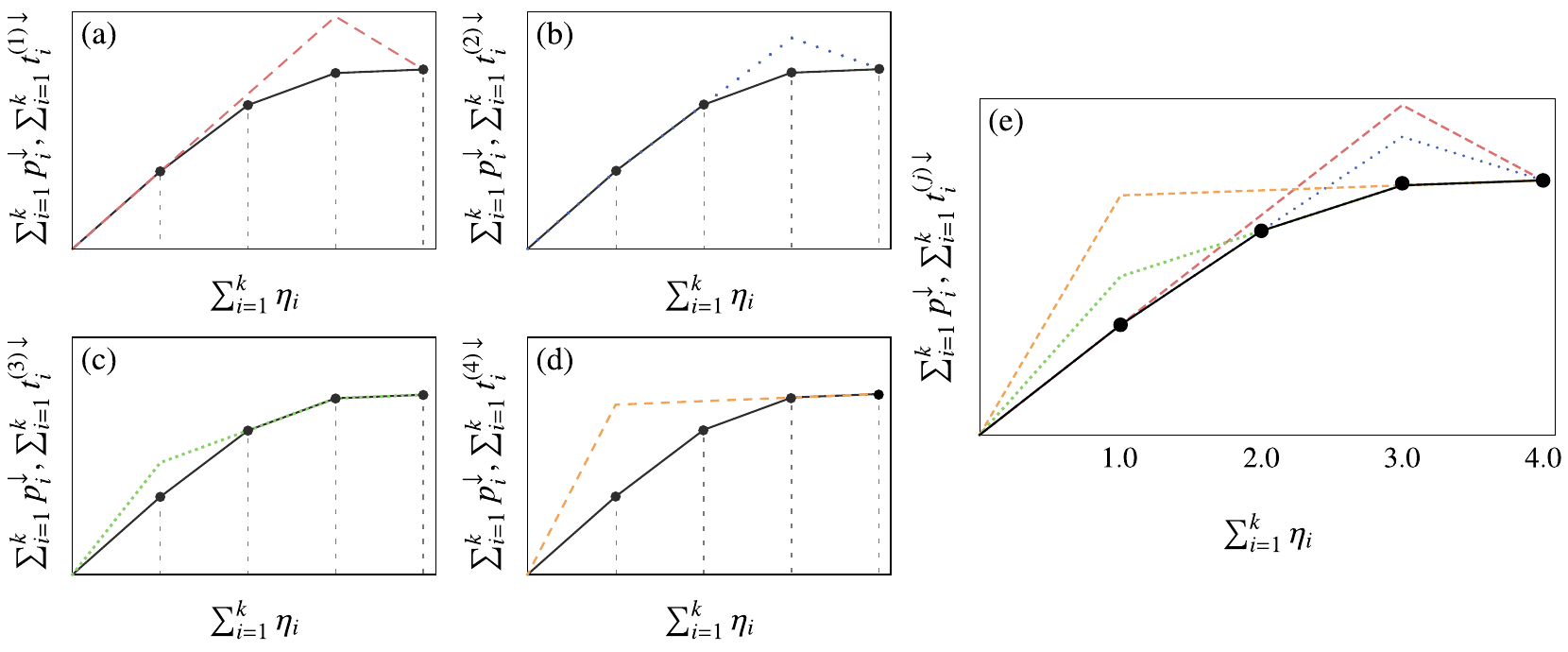}
    \caption{\label{fig-majorisationd4-examples2} \emph{Lorenz curves of the tangent vectors $\v t^{(n)}$}. Majorisation curves of the state $\v p = (0.43, 0.37, 0.18, 0.02)$ (black) and (a) $\v{t}^{(1)}$ (b)  $\v{t}^{(2)}$ (c) $\v{t}^{(3)}$ (d) $\v{t}^{(4)}$ and (e) all tangent vectors $j \in \{1,2,3,4\}$, respectively.} 
\end{figure*}

\subsubsection{Lattice}

Lattices provide a setting within which it is natural to represent the precedence or succession of elements within a given set. In particular, they can be used to equip a given set with a time-like structure, as provided by the definition of a lattice (see Definition~\ref{def_Lattice}).

It is well known that the partially ordered set $(\Delta_d, \succ)$ of $d$-dimensional probability vectors with real entries in non-increasing order under majorisation forms a lattice~\cite{cicalese2002,Korzekwa2017}. In particular, within this setting, the join $\v p \vee \v q$ can be interpreted as the last common past point of $\v p$ and $\v q$. Similarly, the meet $\v p \wedge \v q$ can be seen as the first common future point of the pair $\v p, \v q$. The procedure to obtain the join and meet has been illustrated in Ref.~\cite{Korzekwa2017}, and since part of our proof relies on the existence of the join, we will now review the algorithm used to construct it.  

To construct the join of $\v p$ and $\v q$, we start with a probability vector $\v r^{(0)}$ with elements defined by
\begin{equation}
    r_i^{(0)} = \max\left\{\v P_i, \v Q_i\right\} - \max\left\{\v P_{i-1}, \v Q_{i-1}\right\} .
\end{equation}
At this stage, it may occur that the entries of $\v r^{(0)}$ are not ordered non-increasingly. However, it is possible to arrive at a properly ordered probability vector $\v r = \v p \vee \v q$ defining the actual join in no more than $d - 1$ steps. In each step $k\geq 0$ we define $N \geq 2$ as the smallest point of increase between two consecutive components of the probability vector $r^{(k)}$, that is, $r^{(k)}_N > r^{(k)}_{N-1}$. Next, $M \leq N - 1$ is defined in such a way that by introducing constant probabilities for the entries with $i\in\{M,\hdots,N\}$ the growth is eliminated. It is done by requiring that
\begin{equation} \label{eq:joinConstruct_aCoeffs}
    r^{(k)}_M \geq \frac{\sum_{i = M}^N r_i^{(k)}}{N - M + 1} =: a_k.
\end{equation}

Thus, the next iterative step $\v r^{(k+1)}$ is defined by setting its components as

\begin{equation}\label{eq:joinConstruct_flatFragments}
    r_i^{(k+1)} = 
    \begin{cases}
        a_k & \text{for } i\in\{M,\hdots,N\} \\
        r_i^{(k)} & \text{otherwise}
    \end{cases}.
\end{equation}
This construction is repeated until for some $k'$ the probability vector $\v r^{(k')} \equiv \v r$ is ordered non-increasingly; in this way we get the proper join.

\subsubsection{Incomparable region and the boundaries}

As a final piece of information needed to understand the proofs, we introduce the definition of the boundary of the past cone.

\begin{defn}[Boundary of the past thermal cone] \label{def_pastBoundary}
Consider a $d$-dimensional energy incoherent state $\v p\in\mathcal{P}_d$ with $d \geq 3$. We define the boundary of the past thermal cone as the set of probability vectors $\v q \succ \v p$ for which the cumulative vector $\v Q$ is equal to the cumulative vector $\v P$ at least one point. In other words, $\v P_j < \v Q_j$ for some proper subset of the indices $j$ and $\v P_i = \v Q_i$ for all other indices.
\end{defn}
One may define the boundary of the future cone in a similar way by changing the direction of the inequalities. By considering the common part of the boundary between future and past, one comes to a simple observation:

\begin{obs}[Common point of future and past cones]
    A point that lies simultaneously at the common part of the boundary between the future and the past must fulfil $\forall_i P_i = Q_i$. Therefore, for $\beta = 0$ we have the equality of $\v p$ and $\v q$ up to a permutation, giving a total of $d!$ common points between the future and the past of any vector $\v p$.
\end{obs}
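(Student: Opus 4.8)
The plan is to reduce the geometric statement about the common boundary to the algebraic majorisation inequalities and then read off the conclusion directly. First I would invoke Definition~\ref{def_MajorisationCones} together with Theorem~\ref{thm_HLP} to identify the future cone $\mathcal{T}_+(\v p)$ with the set of $\v q$ satisfying $\v p \succ \v q$, and the past cone $\mathcal{T}_-(\v p)$ with the set of $\v q$ satisfying $\v q \succ \v p$. Writing the cumulative (Lorenz) coordinates $P_k = \sum_{i=1}^k p_i^\downarrow$ and $Q_k = \sum_{i=1}^k q_i^\downarrow$, these two memberships become, respectively, $P_k \geq Q_k$ for all $k \in \{1,\dots,d\}$ and $Q_k \geq P_k$ for all $k$, with the endpoint normalisations $P_0 = Q_0 = 0$ and $P_d = Q_d = 1$ holding automatically.

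Next, I would note that a point lying on the common part of the two boundaries belongs to the closure of both cones, so both families of inequalities must hold simultaneously. Since $P_k \geq Q_k$ and $Q_k \geq P_k$ together force $P_k = Q_k$ for every $k$, this establishes the first assertion $\forall_i P_i = Q_i$. Differencing consecutive cumulative entries then gives $p_k^\downarrow = P_k - P_{k-1} = Q_k - Q_{k-1} = q_k^\downarrow$ for all $k$, so the non-increasing rearrangements of $\v p$ and $\v q$ coincide; equivalently, $\v q$ is a permutation of $\v p$. Under the standing genericity assumption of Appendix~\ref{sec_appendix} that the entries of $\v p$ are pairwise distinct, the $d!$ elements of $\mathcal{S}_d$ yield $d!$ distinct vectors, which is the claimed count.

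The one place I would take care is to justify the phrase \emph{common part of the boundary} rather than merely \emph{common point}. The permutations of $\v p$ are exactly the vertices of $\mathcal{T}_+(\v p)$ by Corollary~\ref{thm_futureinfinite}, and by the permutation symmetry of majorisation they are likewise extreme points of $\mathcal{T}_-(\v p)$; hence each of them genuinely sits on the boundary of both cones and not in any relative interior, so the set-theoretic intersection of the two closed cones coincides with their common boundary. I expect this identification, together with the bookkeeping of the distinctness hypothesis (if it is dropped, the count degenerates to the number of distinct rearrangements of $\v p$, namely $d!/\prod_j m_j!$ with $m_j$ the multiplicities), to be the only genuine subtlety; the majorisation algebra itself is immediate.
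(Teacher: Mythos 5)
Your proposal is correct and follows essentially the same route as the paper: the common boundary point must satisfy both $P_k\geq Q_k$ and $Q_k\geq P_k$ for all $k$, forcing $P_k=Q_k$ everywhere, whence $\v q$ is a permutation of $\v p$ and the $d!$ count follows from the standing assumption in Appendix~\ref{sec_appendix} that the entries of $\v p$ are pairwise distinct. Your additional remarks on the degenerate count for repeated entries and on the permutations being extreme points of both cones are consistent with, though not required by, the paper's argument.
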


Equipped with the notion of tangent vectors $\v t^{(n)}$, the join $\v p \vee \v q$ and the boundary of the past cone, we are now prepared to tackle the Lemma~\ref{lem_incomparablecone} concerning the incomparable region, and this is done by proving the following result:
    \begin{lem}
    \label{lem-incomp}
      Consider $\v{p}, \v{q} \in \Delta_d$ and assume that $\v p \nsucc \v q$. Then, $\v{q}$ belongs to the incomparable region of $\v{p}$, $\v q \in \mathcal{T}_\emptyset(\v p)$, if and only if it belongs to the future majorisation cone of some vector $\v t$ tangent to $\v p$, $\v{q}\in\T_+(\v{t})$, with
        $$
            \v{t} \equiv \v{t}(\v{p};\lambda, n)=  \lambda \v{t}^{(n)}(\v{p}) + (1-\lambda) \v{t}^{(n+1)}(\v{p}) ,
        $$
        for some $n\in \{1,d-1\}$ and $\lambda \in[0,1]$.
    \end{lem}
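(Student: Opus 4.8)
The plan is to characterize membership in $\mathcal{T}_\emptyset(\v p)$ by ruling out the past cone and identifying the boundary that separates past from incomparable. Since we assume $\v p \nsucc \v q$ (so $\v q \notin \T_+(\v p)$), by the trichotomy $\Delta_d = \T_+(\v p) \sqcup \T_\emptyset(\v p) \sqcup \T_-(\v p)$ we have $\v q \in \mathcal{T}_\emptyset(\v p)$ if and only if $\v q \notin \T_-(\v p)$, i.e. $\v q \nsucc \v p$. Thus the lemma reduces to showing that, among vectors not majorised by $\v p$, the condition $\v q \nsucc \v p$ is equivalent to $\v q \in \T_+(\v t)$ for some tangent vector $\v t = \v t(\v p;\lambda,n)$. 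The natural tool for this equivalence is the join $\v p \vee \v q$, whose construction was recalled above: $\v q \succ \v p$ fails precisely when the Lorenz curves of $\v p$ and $\v q$ cross, and the join records the ``last common past'' by taking the pointwise maximum of the cumulative curves.

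First I would treat the forward direction. Suppose $\v q \in \T_+(\v t)$ for a tangent vector $\v t$ that agrees with the Lorenz curve of $\v p$ at the single elbow $x = n/d$ (this is exactly the one-point tangency established for $\v t(\lambda,n)$ in the tangent-vector subsection, where $T_n = P_n$ while $T_j > P_j$ for $j \neq n$). Since $\v t \succ \v q$, the cumulative curve of $\v q$ lies below that of $\v t$, hence $Q_n \leq T_n = P_n$. Combined with $\v p \nsucc \v q$, this shows the Lorenz curves of $\v p$ and $\v q$ must cross, so neither majorises the other and $\v q \in \mathcal{T}_\emptyset(\v p)$.

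For the converse, assume $\v q \in \mathcal{T}_\emptyset(\v p)$, so both $\v p \nsucc \v q$ and $\v q \nsucc \v p$ hold. I would form the join $\v r = \v p \vee \v q$ and argue that the crossing of the two Lorenz curves forces the existence of an index $n$ (an elbow of $f_{\v p}$) at which $Q_n \leq P_n$ while the curve of $\v q$ rises above that of $\v p$ on an adjacent segment; at this elbow one constructs the tangent line to $f_{\v p}$ whose slope matches the local behaviour of $\v q$, which corresponds to a specific convex combination $\v t = \lambda \v t^{(n)} + (1-\lambda)\v t^{(n+1)}$. By the tangency/convexity property (any convex function lies below its tangent, with the tangent vectors realizing the extremal admissible slopes at the $n$-th elbow), this $\v t$ satisfies $\v t \succ \v q$, i.e. $\v q \in \T_+(\v t)$. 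The main obstacle I anticipate is this converse: pinning down the correct elbow index $n$ and the weight $\lambda$ from the geometry of the crossing, and verifying that $\v t$ genuinely majorises $\v q$ rather than merely touching its curve — this requires careful bookkeeping of left/right slopes at the elbow and the guarantee that the tangent line stays above $f_{\v q}$ globally, not just locally. The join construction and the explicit slope computation $f'_{\v t}(n/d) = d[(1-\lambda)p_n + \lambda p_{n+1}]$ from the tangent-vector subsection are the key quantitative inputs that should close this gap.
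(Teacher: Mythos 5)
Your reduction of the lemma to ``given $\v p \nsucc \v q$, show $\v q \nsucc \v p \Leftrightarrow \v q \in \T_+(\v t)$ for some tangent $\v t$'' is sound, and you correctly identify the join as the right tool, which is exactly the route the paper takes. However, your converse is not a proof but a description of what would need to be proved, and the step you flag yourself as the ``main obstacle'' is precisely the step the paper resolves with an idea you do not supply. The missing idea is this: do not try to read off $n$ and $\lambda$ from the geometry of the crossing of $f_{\v p}$ and $f_{\v q}$, and do not take a tangent ``whose slope matches the local behaviour of $\v q$.'' Instead, form the join $\v r = \v p \vee \v q$ and take a tangent vector of $\v r$ itself at any elbow $i$ where $R_i = P_i$ (the join construction guarantees the set of such indices is non-empty). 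Then $\v t \succ \v q$ comes for free from transitivity, $\v t \succ \v r \succ \v q$, since a tangent vector of $\v r$ majorises $\v r$ and the join majorises $\v q$ by definition — this is what closes the global-domination worry you raise, with no bookkeeping of where the curves cross. What remains is to check that this tangent of $\v r$ is also a legitimate tangent of $\v p$ at elbow $i$: from $R_{i-1}\geq P_{i-1}$, $R_i=P_i$, $R_{i+1}\geq P_{i+1}$ one gets $p_{i+1}\leq r_{i+1}\leq r_i\leq p_i$, so the middle slope of the tangent to $\v r$ lies in the admissible slope interval $[d\,p_{i+1},\, d\,p_i]$ for tangents of $\v p$ at that elbow, i.e.\ it equals $\lambda\v t^{(i)}(\v p)+(1-\lambda)\v t^{(i+1)}(\v p)$ for some $\lambda\in[0,1]$. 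Without the detour through the join's tangent, your sketch does not establish $\v t \succ \v q$.

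There is also a smaller gap in your forward direction. From $\v t \succ \v q$ you only get $Q_n \leq T_n = P_n$, and the non-strict inequality together with $\v p \nsucc \v q$ does \emph{not} force the Lorenz curves to cross: if $Q_n = P_n$ and $Q_k \geq P_k$ elsewhere, then $\v q \succ \v p$ and $\v q$ lies in the past cone, not the incomparable region. The paper avoids this by restricting to $\v q \in \operatorname{int}\left[\T_+(\v t)\right]$, which makes the inequality strict; this is also why the final formula for $\T_\emptyset(\v p)$ involves $\operatorname{int}(\mathbb{T})$ rather than $\mathbb{T}$. Your argument needs the same interior qualification (or an explicit treatment of the boundary cases) to be correct.
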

    \begin{proof}
    
To prove the ``if" direction, we take a probability vector lying in the interior of the future of the tangent vector, $\v{q} \in \operatorname{int} [\T_{+}(\v{t})]$. Then, to prove that $\v q \in \T_{\emptyset}(\v p)$, one needs to show that $\v{q} \nsucc \v{p}$. By construction, we have $T_k \geq  P_k$, for every $k\neq n$, with equality when $k = n$, so $P_n = T_n > Q_n$ with the equality excluded by putting $\v q$ in the interior of the future.  Consequently, $P_n > Q_n$, and thus $\v{q} \nsucc \v{p}$ and by the initial assumption, $\v p \nsucc \v q$. Therefore, $\v{q} \in \T_{\emptyset}(\v{p})$.

In order to demonstrate the ``only if" direction, let us take an arbitrary $\v q \in \T_{\emptyset}(\v p)$ and recall that there always exists the last common past point for $\v p$ and $\v q$ called join, $\v r = \v p \vee \v q$. From the construction of the join $\v r$, it is found that the entries of the cumulative distribution $\v R$ will be divided into three subsets, namely points common with $\v P$, common with $\v Q$ and the ones lying above either, that is $I_p := \left\{i:0<i<d,\,R_i = P_i\right\}$, $I_q := \left\{i:0<i<d,\,R_i = Q_i\right\}$ and $J := \left\{0<j<d,\,:R_j > \max(P_j,Q_j)\right\}$, respectively. In particular, looking at equation \eqref{eq:joinConstruct_flatFragments} one can see that it is either the case that $M-1\in I_p$ and $N+1\in I_q$, or $M-1\in I_p$ and $N+1\in I_q$ or by invoking geometric intuition, endpoints of the flat fragments of $\v R$ will join $\v P$ and $\v Q$. Thus, at each step, the sets $I_p$ and $I_q$ will be non-empty. 
By this argument, we may choose any index $i\in I_p$ and construct a tangent vector $\v t'$ for the join $\v{r}$ at the $i$-th elbow.
\begin{equation}
    \v t' \equiv \v t(\v{r};\mu,i) = \mu \v t^{(i)}(\v r) + (1 - \mu) \v t^{(i+1)}(\v r)
\end{equation}
for any $\mu\in[0,1]$. This vector obeys, by construction, the majorisation relation $\v t' \succ \v r \succ \v q$. Furthermore, due to the choice $i\in I_p$ it is also a tangent vector for the $\v p$,
\begin{equation}
    \v t' = \v t \equiv \v t(\v{p};\lambda(\mu),i) =  \lambda(\mu) \v t^{(i)}(\v p) + (1 - \lambda(\mu)) \v t^{(i+1)}(\v p)
\end{equation}
for $\lambda(\mu) \in [0,1]$. Therefore, by the properties of the tangent vector $\v t$ it follows that $\v q \in \T_{+}(\v t)$.  
\end{proof}

The last step necessary in order to demonstrate the Lemma~\ref{lem_incomparablecone} is to notice that the tangent vectors $\v{t}(\v{p};\lambda(\mu),i)$ are convex combinations of $\v t^{(i)}(\v p)$ and $\v t^{(i+1)}(\v p)$  and their future majorisation cones are convex, therefore, union of their future cones corresponds to the convex hull of the future cones of the extreme points
\begin{equation}
     \bigcup_{\lambda\in[0,1]}\mathcal{T}_+(\v{t}(\v{p};\lambda,i)) = \text{conv}\left[\mathcal{T}_+\left(\v t^{(i)}\right)\cup\mathcal{T}_+\left(\v t^{(i+1)}\right)\right]
\end{equation} 
which completes the statement of Lemma~\ref{lem_incomparablecone}.\qed

\subsection{Finite temperatures}
\label{app:finite_temp_derivs}

The proof of Lemma~\ref{lemma_incomparablefiniteT} for $\beta > 0$ is developed in the simplest way by considering the embedding $\mathfrak{M}$ introduced in Section~\ref{sec_embeddingLattice}, which takes $d$-dimensional probability distributions $\v p\in\Delta_d$ to its higher-dimensional image $\mathfrak{M}(\v p) \in \Delta^\mathfrak{M}_{d}$
which allows us to closely follow the steps of the proof for $\beta = 0$.

After shifting our focus to the embedded space, we construct the corresponding tangent vectors $\v{t}^{\mathfrak{M}}(\v{p};\lambda,n) = \lambda \v t^{\mathfrak{M}(n)}(\v p) + (1 - \lambda) \v t^{\mathfrak{M}(n+1)}(\v p)$ that respect the rules for constructing the majorisation curve within the embedded space. In particular, in full analogy to the $\beta = 0$ case, the full family can be given in terms of vectors tangent to the $n$-th linear fragment of the embedded majorisation curve,

\begin{align}
    \v{t}^{\mathfrak{M}(n)}(\v{p}) = \left(t_1^{\mathfrak{M}(n)}, \frac{p_n}{\gamma_n}\gamma^{\mathfrak{M}}_2, \hdots, \frac{p_n}{\gamma_n}\gamma^{\mathfrak{M}}_{2^d-1},t_{2^d}^{\mathfrak{M}(n)}\right) ,
\end{align}
where the first entry is defined in such a way that the majorisation curves, defined as the piecewise-linear functions given by their elbows $\left\{\left(\Gamma^{\mathfrak{M}}_i, P^{\mathfrak{M}}_i\right)\right\}_{i=0}^{2d-1}$, agree in at least one point, $f^{\mathfrak{M}}_{\v{t}^{(n)}}(\Gamma^{\mathfrak{M}}_n) = f_{\v{p}}^{\mathfrak{M}}(\Gamma^{\mathfrak{M}}_n)$ and the last one guarantees that $\sum_i t_i^{\mathfrak{M}(n)} = 1$. Observe that vectors $\v{t}^{\mathfrak{M}(n)}$ constructed in this way are tangent with respect to the embedded Lorenz curve, i.e., $t^{\mathfrak{M}(n)}_i/\gamma^{\mathfrak{M}}_i = t^{\mathfrak{M}(n)}_j/\gamma^{\mathfrak{M}}_j$, thus taking into account the varying intervals on the horizontal axis. Equipped with these, we pose a technical lemma similar to Lemma~\ref{lem-incomp},

As a preliminary step, we give the algorithm for the construction of the join in the embedding space by modifying the crucial steps \eqref{eq:joinConstruct_aCoeffs} and \eqref{eq:joinConstruct_flatFragments} to take into account the varying widths and redefine the point $N$ of increase by requiring $r_N^{\mathfrak{M}(k)}/\gamma_N^{\mathfrak{M}} > r_{N-1}^{\mathfrak{M}(k)}/\gamma_{N-1}^{\mathfrak{M}}$. Consequently, we redefine $M$ by a condition similar to \eqref{eq:joinConstruct_aCoeffs} that incorporates the scaling,
\begin{align} \label{finTemp_join_akSlope}
    \frac{r^{\mathfrak{M}(k)}_M}{\gamma^{\mathfrak{M}}_M} \geq \frac{\sum_{i = M}^N r_i^{\mathfrak{M}(k)}}{\sum_{i = M}^N \gamma_i^{\mathfrak{M}}} =: a^{\mathfrak{M}}_k.
\end{align}
Finally, we define the join candidate in the $k$-th step in full analogy to \eqref{eq:joinConstruct_flatFragments} as

\begin{equation} \label{finTemp_join_kStep}
    r_i^{\mathfrak{M}(k+1)} = 
    \begin{cases}
        a_k^{\mathfrak{M}}\gamma_i^{\mathfrak{M}} & \text{for } i\in\{M,\hdots,N\} \\
        r_i^{\mathfrak{M}(k)} & \text{otherwise}
    \end{cases}.
\end{equation}
The algorithm defined in this way follows precisely the same logic as the one given in Ref. \cite{Korzekwa2017} and thus it always terminates, in this in no more than $ d^\mathfrak{M}(\v p, \v q) - 1 = \overline{\left\{\sum_{i=1}^j \gamma_{\v \pi^{-1}_{\v{p}}(i)}\right\}_{j=1}^{d}\cup\left\{\sum_{i=1}^j \gamma_{\v \pi^{-1}_{\v{q}}(i)}\right\}_{j=1}^{d}} - 1$ steps.

As a final remark, one has to note that the family of tangent vectors $\v{t}^{\mathfrak{M}(n)}(\v{r}^\mathfrak{M})$ should be indexed by $n\in\{1,\hdots,d'\}$, where the number $d'$ of constant-slope fragments of the join, even though bounded, $d^\mathfrak{M}(\v p, \v q) \geq d' \geq d$, is \textit{a priori} not well defined due to many possible ways of disagreement between the $\beta$-orders of $\v{p}$ and $\v{q}$. With these tools, we are ready to present the technical lemma needed for constructing the incomparable region for $\beta > 0$.

\begin{lem}
\label{lem-incomp_nonZeroBeta}
$\mathfrak{M}(\v{q})\equiv\v{q}^\mathfrak{M}$ belongs to the incomparable region of $\mathfrak{M}(\v{p})\equiv \v{p}^\mathfrak{M}$, $\v{q}^\mathfrak{M} \nsucc_\mathfrak{M} \v{p}^\mathfrak{M}$, if and only if it belongs to the future thermal cone of some vector $\v{t}^\mathfrak{M}$ tangent to $\v{p}^\mathfrak{M}$, $\v{q}^\mathfrak{M}\succ\v{t}^\mathfrak{M}$, with
    $$
        \v{t}^\mathfrak{M} \equiv \v{t}^\mathfrak{M}(\v{p};\lambda, n)=  \lambda \v{t}^{\mathfrak{M}(n)}(\v{p}) + (1-\lambda) \v{t}^{\mathfrak{M}(n+1)}(\v{p}) ,
    $$
    for some $n\in \{1,d-1\}$ and $\lambda \in[0,1]$.
\end{lem}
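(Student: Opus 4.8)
The plan is to reproduce the proof of Lemma~\ref{lem-incomp} step by step, but carried out inside the embedding space, where the corollary establishing that the embedded majorisation $\succ_\mathfrak{M}$ forms a lattice supplies precisely the ingredient---the existence of a join---on which the $\beta=0$ argument rests and which ordinary thermomajorisation lacks globally. By Eq.~\eqref{eq_embedSpace_majorisationDefinition} the relation $\succ_\mathfrak{M}$ is ordinary majorisation of the cumulants $\sum_{i=1}^{j} p^\mathfrak{M}_i$, and since every embedded vector has its Lorenz-curve elbows at the same fixed abscissae $\Gamma^\mathfrak{M}_i$, the notion of ``tangency at the $i$-th elbow'' is unambiguous and $\beta$-order independent. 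I would open by recording the one geometric fact that drives both directions: if the embedded curve of a vector dominates $f^\mathfrak{M}_{\v p}$ everywhere and touches it at the $i$-th elbow, then that vector is automatically tangent to $\v p^\mathfrak{M}$ there, of the form $\lambda\,\v t^{\mathfrak{M}(i)}(\v p)+(1-\lambda)\,\v t^{\mathfrak{M}(i+1)}(\v p)$.

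For the ``if'' direction I would carry over the standing hypothesis $\v p^\mathfrak{M}\nsucc_\mathfrak{M}\v q^\mathfrak{M}$ and take $\v q^\mathfrak{M}$ in the interior of $\mathcal{T}^\beta_+(\v t^\mathfrak{M})$, where $\v t^\mathfrak{M}=\lambda\,\v t^{\mathfrak{M}(n)}+(1-\lambda)\,\v t^{\mathfrak{M}(n+1)}$. Tangency makes the embedded curves of $\v t^\mathfrak{M}$ and $\v p^\mathfrak{M}$ coincide at $\Gamma^\mathfrak{M}_n$, so they share the $n$-th cumulant; strict interiority then gives $\sum_{i=1}^{n} q^\mathfrak{M}_i<\sum_{i=1}^{n} p^\mathfrak{M}_i$, hence $\v q^\mathfrak{M}\nsucc_\mathfrak{M}\v p^\mathfrak{M}$, and together with the hypothesis this places $\v q^\mathfrak{M}$ in the incomparable region.

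For the ``only if'' direction I would take an incomparable $\v q^\mathfrak{M}$ and form the join $\v r^\mathfrak{M}=\v p^\mathfrak{M}\vee\v q^\mathfrak{M}$ guaranteed by the lattice, using the explicit embedded algorithm of Eqs.~\eqref{finTemp_join_akSlope}--\eqref{finTemp_join_kStep}. As in the $\beta=0$ case, the plateaux built by Eq.~\eqref{finTemp_join_kStep} force the endpoints of every flat fragment to rejoin either $\v p^\mathfrak{M}$ or $\v q^\mathfrak{M}$, so the index sets $I_p$ and $I_q$ of elbows at which $\v r^\mathfrak{M}$ meets $\v p^\mathfrak{M}$, respectively $\v q^\mathfrak{M}$, are non-empty. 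Choosing any $i\in I_p$ and building a tangent vector to $\v r^\mathfrak{M}$ at the $i$-th elbow, the domination $\v r^\mathfrak{M}\succ_\mathfrak{M}\v p^\mathfrak{M}$ together with the coincidence at elbow $i$ makes it tangent to $\v p^\mathfrak{M}$ as well; the chain $\v t^\mathfrak{M}\succ_\mathfrak{M}\v r^\mathfrak{M}\succ_\mathfrak{M}\v q^\mathfrak{M}$ then yields $\v q^\mathfrak{M}\in\mathcal{T}^\beta_+(\v t^\mathfrak{M})$.

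The genuinely new work, and the step I expect to be the main obstacle, is that the embedded geometry is governed by the rescaled slopes $p^\mathfrak{M}_i/\gamma^\mathfrak{M}_i$ of Eq.~\eqref{eq_embedSpace_orderingRule} rather than by the entries themselves, the Lorenz segments have non-constant widths, and the number $d'$ of constant-slope fragments of the join is not fixed a priori. I would therefore need to verify that the modified join algorithm remains within the admissible set defined by Eq.~\eqref{eq_embedSpace_orderingRule} and terminates, that the tangent vectors keep the rescaled-tangency property $t^\mathfrak{M}_i/\gamma^\mathfrak{M}_i=t^\mathfrak{M}_j/\gamma^\mathfrak{M}_j$ on interior indices, and that the chord-below-a-convex-curve estimate underlying $\v t^\mathfrak{M}\succ_\mathfrak{M}\v r^\mathfrak{M}$ survives the variable widths---these being exactly the places where the embedded picture departs from the uniform $\beta=0$ one. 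Once Lemma~\ref{lem-incomp_nonZeroBeta} is in place, projecting through $\mathfrak{P}_{\v\pi}$ sends each $\v t^{\mathfrak{M}(n)}$ to the thermal tangent vector $\v t^{(n,\v\pi)}$ of Eq.~\eqref{eq_thermaltangentvectors}, and taking the union over all $\v\pi\in\mathcal{S}_d$ of the convex hulls of consecutive pairs reconstructs the set $\mathbb{T}^\beta$ of Eq.~\eqref{eq_incomparableconebeta}, upgrading this technical lemma to Lemma~\ref{lemma_incomparablefiniteT}.
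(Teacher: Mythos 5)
Your proposal is correct and follows essentially the same route as the paper: the paper's own proof of this lemma is literally a one-line statement that the argument of Lemma~\ref{lem-incomp} carries over verbatim once $\succ$ is replaced by $\succ_\mathfrak{M}$ and the join construction is replaced by the adjusted version of Eqs.~\eqref{finTemp_join_akSlope}--\eqref{finTemp_join_kStep}, which is exactly the transfer you carry out. The technical caveats you flag (rescaled slopes, variable segment widths, the a priori undetermined number $d'$ of constant-slope fragments of the join) are the same ones the paper acknowledges in the surrounding discussion rather than inside the proof itself.
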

\begin{proof}
    The proof follows in complete analogy with the standard majorisation case as presented in the proof for Lemma \ref{lem-incomp} by replacing the standard majorisation $\succ$ in every statement with the majorisation variant $\succ_\mathfrak{M}$ given for the embedding space and employing the adjusted join construction, summarised in equations \eqref{finTemp_join_akSlope} and \eqref{finTemp_join_kStep}. 
\end{proof}

In order to go back from the embedded space $\Delta_{d}^{\mathfrak{M}}$ to the formulation of Lemma \ref{lemma_incomparablefiniteT} in the original space $\Delta_d$ we combine two observations following from embedding and projection operations. First, note that majorisation in embedded space implies majorisation between projections, thus $\v{t}^\mathfrak{M} \succ_\mathfrak{M} \v{q} \Rightarrow \mathfrak{P}_{\v \pi}(\v{t}^\mathfrak{M}) \succ_\beta \mathfrak{P}_{\v \pi}(\v{q})$ for every order $\v \pi$. Second, observe that embedding preserves the majorisation relations between the vectors, therefore $\v{q}\in\mathcal{T}_\emptyset(\v p) \Leftrightarrow \mathfrak{M}(\v q) \in \mathcal{T}_\emptyset(\mathfrak{M}(\v p))$. These two statements show that it is enough to consider vectors $\v{t}^{(n,\v \pi)} = \mathfrak{P}_{\v \pi}(\v{t}^{\mathfrak{M}(n)})$ and convex combinations of their future thermal cones, thus proving Lemma~\ref{lemma_incomparablefiniteT}. \qed

\section{Construction of probabilistic majorisation cones}\label{app:probal_deriv}

\rd{
For the convenience of the reader, we restate the theorem concerning probabilistic transformations:

\vidalentanglement*

To have a direct connection to majorisation, we reformulate the above theorem as follows:
\begin{equation}
    \forall_{1\leq k\leq d}:\mathcal{P}(\v p, \v q) \leq \frac{\sum_{j = k}^d p^{\downarrow}_j}{\sum_{j = k}^d q^{\downarrow}_j} = \frac{1 - \sum_{j = 1}^{k-1} p^\downarrow_j}{1 - \sum_{j = 1}^{k-1} p^\downarrow_j} = \frac{1 - P_k}{1 - Q_k}.
\end{equation}
By setting $\mathcal{P}(\v p , \v q) = 1$, we recover the standard majorisation condition on deterministic convertibility,
\begin{equation}
    \forall_{1\leq k\leq d}: 1 \leq \frac{\sum_{j = k}^d p^{\downarrow}_j}{\sum_{j = k}^d q^{\downarrow}_j} \Leftrightarrow \v p \prec \v q.
\end{equation}
To determine the probabilistic past cone $\mathcal{T}_-(\v p, \mathcal{P})$ at probability $\mathcal{P}$, we consider 
\begin{equation}
    \begin{aligned}
       \forall_{1\leq k\leq d}: \mathcal{P} \leq \frac{1 - Q_k}{1 - P_k} 
        & \Rightarrow \mathcal{P} - \mathcal{P} P_k \leq 1 - Q_k \\
        & \Rightarrow Q_k \leq \mathcal{P} P_k + (1 - \mathcal{P}) \\
        & \Rightarrow \v{q}\prec\tilde{\v{p}},
    \end{aligned}
\end{equation}
with an auxiliary distribution
\begin{equation}
	\tilde{p}_i = \begin{cases}
		\mathcal{P} p^{\downarrow}_1 + (1 - \mathcal{P}) & \text{for } i = 1, \\
		\mathcal{P} p^{\downarrow}_i & \text{otherwise},
	\end{cases}
\end{equation}
which is always a proper probability distribution ordered non-increasingly, $\tilde{\v p}  = \tilde{\v p}^\downarrow$, therefore providing a proper Lorenz curve. 

Following a similar procedure for the future cone $\mathcal{T}_+(\v p, \mathcal{P})$ leads to
\begin{equation}
    \begin{aligned}
        \mathcal{P} \leq \frac{1 - P_k}{1 - Q_k} 
        & \Rightarrow \mathcal{P}^{-1} - \mathcal{P}^{-1} P_k \geq 1 - Q_k \\
        & \Rightarrow Q_k \geq \mathcal{P}^{-1} P_k + (1 - \mathcal{P}^{-1}) \\
        & \Rightarrow \v{q}\succ\hat{\v{p}},
    \end{aligned}
\end{equation}
with the second auxiliary distribution
\begin{equation}
	\hat{p}_i = \begin{cases}
		\mathcal{P}^{-1} p^{\downarrow}_1 + (1 - \mathcal{P}^{-1}) & \text{for } i = 1, \\
		\mathcal{P}^{-1} p^{\downarrow}_i & \text{otherwise}.
	\end{cases}
\end{equation}
In contrast to the case of the past cone, the distribution $\hat{\v p}$ in this formulation is not ordered non-increasingly beyond a certain value of $\mathcal{P}$. At first glance, one might think that reordering should solve the problem; however, it would be equivalent to a decrease in the probabilistic future with decreasing $\mathcal{P}$, which creates a contradiction. The solution is provided by noting that Vidal's criterion deals with rescaled entries of the Lorenz curve rather than the probabilities \textit{per se}. Therefore, the Lorenz curve for $\hat{\v p}$ should remain convex for all values of $\mathcal{P}$ without the need for reordering.

Consider the following critical values of $\mathcal{P}$, namely,
\begin{equation}
    \mathcal{P}_n = (n-1)p^{\downarrow}_n - \sum_{i=1}^{n-1} p^{\downarrow}_i + 1,
\end{equation}
for which the first $n$ entries of the distribution $\hat{\v{p}}$ will not be ordered non-increasingly, resulting in an improper Lorenz curve. The resulting non-convexity is controlled by replacing
\begin{equation}
    \left\{\hat{p}_1,\hdots,\hat{p}_n\right\} \rightarrow \frac{1}{n}\sum_{i=1}^n\hat{p}_i \left\{1,\hdots1\right\},
\end{equation}
which ensures that $\hat{\v p}  = \hat{\v p}^\downarrow$.

This way, the auxiliary ordered distributions $\tilde{\v{p}}$ and $\hat{\v{p}}$, together with the construction for the deterministic majorisation cones presented in Appendix \ref{sec_appendix}, provide the full construction of the probabilistic cones, with the additional cautionary note that the role of future and past majorisation cones is reversed when we consider entanglement and coherence theories.
}

\section{Volumes of entanglement majorisation cones} \label{app:entanglement_vols}

\rd{
    Consider a uniform Haar distribution of pure states in a composed space \mbox{$\ket{\psi}\in\mathcal{H}^{N}\otimes\mathcal{H}^M$} with $N \leq M$. The partial tracing induces a measure in the space of reduced states, $\rho = \operatorname{Tr}_2 \ket{\psi}\bra{\psi}$, characterised by the distribution of eigenvalues \mbox{$\Lambda = \left\{\lambda_1,\hdots,\lambda_N\right\}$} of the reduced state~\cite{zyczkowski2001induced}:
    \begin{equation}\label{eq_haardistribution}
        P_{N,M}(\Lambda) = C_{N,M} \delta\left(1 - \sum_i \lambda_i\right) \prod_i \lambda_i^{M-N}\theta(\lambda_i) \prod_{i<j} (\lambda_i - \lambda_j)^2,
    \end{equation}
where $\delta$ and $\theta$ are the Dirac delta and Heavyside step functions, respectively. The normalisation constant is given by
\begin{equation}
      C_{N,M} = \frac{\Gamma(NM)}{\prod_{j=0}^{N-1}\Gamma(M-j) \Gamma(N-j+1)},
\end{equation}
where $\Gamma$ is the Gamma function. Before we continue with our discussion, let us first understand the role played by all factors in Eq.~\eqref{eq_haardistribution}. As previously mentioned, $C_{N,M}$ is the normalisation. The delta function ensures that the spectrum sums to one (is normalised), whereas the step function will guarantee that it is positive. Now, notice that the first product (capital pi notation) essentially does not influence the behaviour of the distribution for \mbox{$N=M$} and introduces the repelling of the faces of the probability simplex otherwise, as it goes to zero whenever \mbox{$\lambda_i = 0$} for any~$i$. The second product is responsible for the repelling from distributions with any two entries equal, since it goes to zero whenever \mbox{$\lambda_i = \lambda_j$} for any~$i\neq j$.

\begin{figure*}
    \centering
    \includegraphics{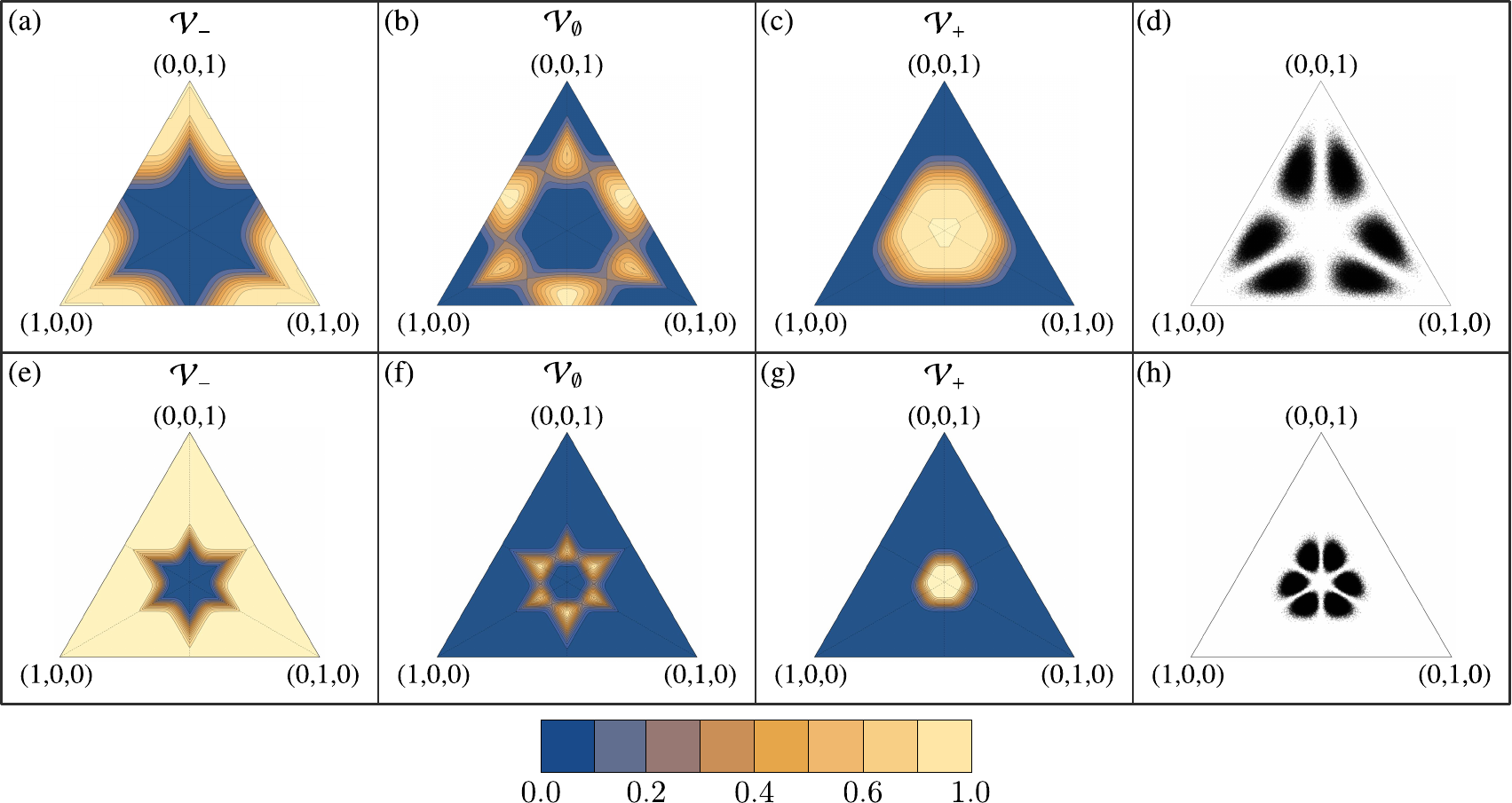}
    \rd{\caption{\label{fig-entanglementvolume-2} \emph{Isovolumetric sets for entanglement $3 \times M$ bipartite systems}. The density $P_{3,M}(\Lambda)$ of Schmidt coefficients of pure states for qutrit-quMit systems depends heavily on the dimension $M$ of the second system. Panels (a-d) and (e-h) present the isovolumetric lines for past, incomparable and future regions and the density of the states for $M = 6$ and $M = 30$, respectively. Note that for larger $M$ the density $P_{3,M}$ is more and more concentrated around the regions close to the centre [compare (d and (h)]. This affects the subset of states with large future volume, making it smaller [(c) and (g)] as well as the set of states with large past volume, enlarging it [(a) with (e)].}}
\end{figure*} 

Sampling from the $P_{N,M}$ distribution, ordinarily done by generating state vectors $\ket{\psi}\in\mathcal{H}_{NM}$ which would be computationally prohibitive for large $M$, can be achieved using only $O(N)$ random numbers for any dimension of the secondary system. It has been demonstrated in Ref.~\cite{Cunden_2020} that the distribution $P_{N,M}$ is precisely the Laguerre unitary ensemble generated by Wishart matrices of size $N$ and parameter $M$~ \cite{RandomMatrixBook} and, in turn, generated using a tridiagonal method containing only $O(N)$ random real numbers~ \cite{DE02}, which indeed allows one to study the $P_{N,M}$ distributions for arbitrary high-dimensional ancillary systems. 

The procedure for generating the isovolumetric lines for the majorisation cones with a given distribution, $P_{N,M}$, proceeds as follows:
    \begin{enumerate}
        \item Generate a sample of $n$ sets of eigenvalues $\{\Lambda_1,\hdots,\Lambda_n\}$ taken from the distribution $P_{N,M}$ using the tridiagonal method.
        \item Consider regularly spaced grid of points $S$ in a single chamber of the full probability simplex $\Delta_N$ (e.g., \mbox{$p_1 \geq p_2, ..., p_d)$} in order to avoid repeated counting (achieving $N!$ decrease in operations)
        \item For each $\v{p} \in S$ consider its majorisation cones $\mathcal{T}_i(\v p)$ and divide $S$ into $S_i \equiv \left\{\Lambda_i \in \mathcal{T}_i(\v{p})\right\}$.
        \item This way we arrive at the approximations of the volumes of the three regions, $$\mathcal{V}_i \approx \frac{|S_i|}{n},$$ where $|X|$ denotes the number of elements in a set $X$.
    \end{enumerate}

    We applied this method for $N=3$ with $M = 3$, displayed in the main text of Fig.~\ref{fig-level-sets-entanglement}, and additionally with $M = 6$ and $30$, as shown in Fig.~\ref{fig-entanglementvolume-2}. These two cases show the significant dependence of the isovolumetric lines on the size of the environment.
    }

\section{Coherent thermal cones for a two-level system}\label{app:coherent_thermal}

\rd{This appendix explains how to use the results of Refs.~\cite{LostaglioKorzekwaCoherencePRX}~and~\cite{Korzekwa2017} to construct the future and past thermal cones for a two-level system under thermal and Gibbs-preserving operations.}

\newpage

\subsubsection{Coherent thermal cones for thermal operations}

\rd{Consider initial and target states of a two level system, $\rho$ and $\sigma$, with both written in the energy eigenbasis as
\begin{equation}
\rho = \begin{pmatrix}
p & c \\ 
c & 1-p 
\end{pmatrix} \quad , \quad \sigma = \begin{pmatrix}
q & d \\ 
d & 1-q 
\end{pmatrix},
\end{equation}
where $c$ and $d$ are assumed to be real without loss of generality, which amounts to considering a cross-section of the Bloch ball in the $XZ$ plane. Moreover, the thermal ground state occupation of the considered two-level system will be denoted by $\gamma$. It has been shown that for thermal operations, the coherences of the initial and target states have to satisfy the following inequality \cite{LostaglioKorzekwaCoherencePRX},
\begin{equation}\label{eq_condition_coherence}
    d \leq c \frac{\sqrt{[q(1-\gamma)-\gamma(1-p)][p(1-\gamma) - \gamma(1-q)]}}{|p-\gamma|}.
\end{equation}
Thus, we find the boundary of the future thermal cone by saturating Eq.~\eqref{eq_condition_coherence}, and solving it for $q$ we obtain the achievable ground state occupation as a function of target coherence $d$, 
\begin{align}\label{eq_solution_coherence_fut}
q_1(d) &= \frac{(\gamma -p) \sqrt{c^2 (1-2 \gamma )^2+4\gamma d^2 (1-\gamma) }}{2\gamma  c (\gamma -1) } \nonumber \\ & \quad+\frac{(p-\gamma)-2\gamma  p(1-\gamma)}{2\gamma c (\gamma -1)}.
\end{align}
Therefore, the coherent future thermal cone is given by the region delimited by Eq.~\eqref{eq_solution_coherence_fut} from one side and a line segment connecting $(-c,p)$ and $(c,p)$. To characterise the coherent past thermal cone, it will be convenient to introduce a number $d_{\text{cross}} \geq 0$ defined by the relation \mbox{$d_{\text{cross}}^2 + q(d_{\text{cross}})^2 = 1$}.

The coherent past thermal cone is generically composed of two disjoint regions. The first region is contained between a line segment connecting the points $(c,p)$ and $(d_{\text{cross}},p)$, the curve $q_1(d)$ for $d\in[c, d_{\text{cross}}]$ and the boundary of the Bloch ball, together with its reflection with respect to the $Z$-axis.
The second one is obtained in a similar manner by focussing on the past state rather than the target, and thus by solving Eq.~\eqref{eq_condition_coherence} with interchanges $p\leftrightarrow q$ and $c\leftrightarrow d$. This results in
\begin{align}\label{eq_solution_coherence_pas}
q_2(d) &= \frac{2 \gamma c^2+\sqrt{c^2 (p-\gamma )^2 \left[(1-2 \gamma )^2 d^2-4 c^2 (\gamma -1) \gamma \right]}}{2 \left[d^2+(\gamma -1) \gamma  c^2\right]} \nonumber \\ & \:\:\:\:+ \frac{d^2 [p-\gamma -2\gamma  p (1-\gamma)]}{2 \left[d^2-(1-\gamma) \gamma  c^2\right]},
\end{align}
with $d \in [d_{\textrm{min}},d_{\textrm{max}}]$, where $d_{\textrm{min}}$ and $d_{\textrm{max}}$ are real positive solutions of equation $q(d)^2+d^2 = 1$, such that \mbox{$d_{\text{cross}} \leq d_{\text{min}} \leq d_{\text{max}}$}. However, for large values of coherence $c$, we note that this second region may not appear at all. Finally, the incomparable region $\mathcal{T}_{\emptyset}(\rho)$ is obtained by subtracting the past and future cones from the entire Bloch ball.

}

\subsubsection{Coherent thermal cones for Gibbs-preserving operations}

\rd{
Consider a parametrisation of qubit states $\rho$ in the Bloch sphere representation,
\begin{equation}
\label{eq:bloch_state}
\rho=\frac{\iden+\v{r}_\rho\cdot\v{\sigma}}{2},
\end{equation}
where \mbox{$\v{\sigma}=(\sigma_x,\sigma_y,\sigma_z)$} denotes the vector of Pauli matrices. The Bloch vectors of the starting state $\rho$, target state $\rho'$ and the Gibbs state $\gamma$ are given by:
\begin{equation}
\label{eq:bloch_param}
\v{r}_\rho=(x,y,z),\quad\v{r}_{\rho'}=(x',y',z'),\quad\v{r}_\gamma=(0,0,\zeta), 
\end{equation}
where the $z$ coordinate of the Gibbs state can be related to the partition function $Z$ by $\zeta=2Z^{-1}-1\geq 0$.

According to Ref.~\cite{Korzekwa2017}, there exists a GP quantum channel $\E$ such that $\E(\rho)=\rho'$ if and only if \mbox{$R_{\pm}(\rho)\geq R_{\pm}(\rho')$} for both signs, where \mbox{$R_{\pm}(\rho)=\delta(\rho)\pm\zeta z$} and 
\begin{equation}
	\label{eq:delta_lattice}
	\delta(\rho):=\sqrt{(z-\zeta)^2+(x^2+y^2)(1-\zeta^2)}.
	\end{equation}
Consequently, the future thermal cone $\T_+(\rho)$ of any qubit state $\rho$ under GP operations can be directly constructed from the above result. For a generic qubit state $\rho$, we first orient the Bloch sphere so that its $XZ$ plane coincides with the plane containing $\rho$ and a thermal state $\gamma$, i.e., \mbox{$\v{r}_\rho=(x,0,z)$}. Then, define two disks, $D_1(\rho)$ and $D_2(\rho)$ with corresponding circles $C_1(\rho)$ and $C_2(\rho)$, of radii 	
	\begin{equation}
	\label{eq:radii}
	R_1(\rho)=\frac{R_-(\rho)+\zeta^2}{1-\zeta^2},\quad R_2(\rho)=\frac{R_+(\rho)-\zeta^2}{1-\zeta^2},
	\end{equation}
centred at 
	\begin{equation}
	\label{eq:centres}
	\begin{array}{ccc}
		\v{z}_1(\rho)&=&[0,0,\zeta(1+R_1(\rho))],\\ \v{z}_2(\rho)&=&[0,0,\zeta(1-R_2(\rho))].
	\end{array}
	\end{equation}
Therefore, the future thermal cone under GP quantum channels is given by the intersection of two disks of radii $R_1(\rho)$ and $R_2(\rho)$ centred at $\v{z}_1(\rho)$ and $\v{z}_2(\rho)$, $\mathcal{T}_+(\rho) = D_1(\rho)\cap D_2(\rho)$.

The incomparable region is given by mixed conditions, i.e., $\rho'\in\mathcal{T}_\emptyset(\rho)$ if and only if $R_{\pm}(\rho)\geq R_{\pm}(\rho')$ and $R_{\mp}(\rho)< R_{\mp}(\rho')$, or in terms of the disks given beforehand, $\mathcal{T}_\emptyset(\rho) = D_1(\rho)\cap D_2(\rho) \backslash \mathcal{T}_+(\rho)$. Finally, the past cone $\mathcal{T}_-(\rho)$ can be easily given by subtracting the future cone and the incomparable region from the entire Bloch ball.

}

\bibliographystyle{apsrev4-2}
\bibliography{references}

\begin{thebibliography}{58}%
\makeatletter
\providecommand \@ifxundefined [1]{%
 \@ifx{#1\undefined}
}%
\providecommand \@ifnum [1]{%
 \ifnum #1\expandafter \@firstoftwo
 \else \expandafter \@secondoftwo
 \fi
}%
\providecommand \@ifx [1]{%
 \ifx #1\expandafter \@firstoftwo
 \else \expandafter \@secondoftwo
 \fi
}%
\providecommand \natexlab [1]{#1}%
\providecommand \enquote  [1]{``#1''}%
\providecommand \bibnamefont  [1]{#1}%
\providecommand \bibfnamefont [1]{#1}%
\providecommand \citenamefont [1]{#1}%
\providecommand \href@noop [0]{\@secondoftwo}%
\providecommand \href [0]{\begingroup \@sanitize@url \@href}%
\providecommand \@href[1]{\@@startlink{#1}\@@href}%
\providecommand \@@href[1]{\endgroup#1\@@endlink}%
\providecommand \@sanitize@url [0]{\catcode `\\12\catcode `\$12\catcode
  `\&12\catcode `\#12\catcode `\^12\catcode `\_12\catcode `\%12\relax}%
\providecommand \@@startlink[1]{}%
\providecommand \@@endlink[0]{}%
\providecommand \url  [0]{\begingroup\@sanitize@url \@url }%
\providecommand \@url [1]{\endgroup\@href {#1}{\urlprefix }}%
\providecommand \urlprefix  [0]{URL }%
\providecommand \Eprint [0]{\href }%
\providecommand \doibase [0]{https://doi.org/}%
\providecommand \selectlanguage [0]{\@gobble}%
\providecommand \bibinfo  [0]{\@secondoftwo}%
\providecommand \bibfield  [0]{\@secondoftwo}%
\providecommand \translation [1]{[#1]}%
\providecommand \BibitemOpen [0]{}%
\providecommand \bibitemStop [0]{}%
\providecommand \bibitemNoStop [0]{.\EOS\space}%
\providecommand \EOS [0]{\spacefactor3000\relax}%
\providecommand \BibitemShut  [1]{\csname bibitem#1\endcsname}%
\let\auto@bib@innerbib\@empty
\bibitem [{\citenamefont {Eddington}(1928)}]{eddington1928nature}%
  \BibitemOpen
  \bibfield  {author} {\bibinfo {author} {\bibfnamefont {A.}~\bibnamefont
  {Eddington}},\ }\href {https://books.google.com.br/books?id=5LIEAAAAMAAJ}
  {\emph {\bibinfo {title} {The Nature of the Physical World}}},\ Gifford
  lectures\ (\bibinfo  {publisher} {Macmillan},\ \bibinfo {year}
  {1928})\BibitemShut {NoStop}%
\bibitem [{\citenamefont {Fermi}(1956)}]{fermi1956thermodynamics}%
  \BibitemOpen
  \bibfield  {author} {\bibinfo {author} {\bibfnamefont {E.}~\bibnamefont
  {Fermi}},\ }\href {https://books.google.pl/books?id=VEZ1ljsT3IwC} {\emph
  {\bibinfo {title} {Thermodynamics}}},\ Dover books in physics and
  mathematical physics\ (\bibinfo  {publisher} {Dover Publications},\ \bibinfo
  {year} {1956})\BibitemShut {NoStop}%
\bibitem [{\citenamefont {Callen}(1985)}]{callen1985thermodynamics}%
  \BibitemOpen
  \bibfield  {author} {\bibinfo {author} {\bibfnamefont {H.}~\bibnamefont
  {Callen}},\ }\href {https://books.google.com.br/books?id=XJ0RAQAAIAAJ} {\emph
  {\bibinfo {title} {Thermodynamics and an Introduction to Thermostatistics}}}\
  (\bibinfo  {publisher} {Wiley},\ \bibinfo {year} {1985})\BibitemShut
  {NoStop}%
\bibitem [{\citenamefont {Boltzmann}(1895)}]{boltzmann1895certain}%
  \BibitemOpen
  \bibfield  {author} {\bibinfo {author} {\bibfnamefont {L.}~\bibnamefont
  {Boltzmann}},\ }\href {https://doi.org/10.1038/051413b0} {\bibfield
  {journal} {\bibinfo  {journal} {Nature}\ }\textbf {\bibinfo {volume} {51}},\
  \bibinfo {pages} {413} (\bibinfo {year} {1895})}\BibitemShut {NoStop}%
\bibitem [{\citenamefont {Zermelo}(1896)}]{zermelo1896satz}%
  \BibitemOpen
  \bibfield  {author} {\bibinfo {author} {\bibfnamefont {E.}~\bibnamefont
  {Zermelo}},\ }\href
  {https://onlinelibrary.wiley.com/doi/abs/10.1002/andp.18962930314} {\bibfield
   {journal} {\bibinfo  {journal} {Ann. Phys.}\ }\textbf {\bibinfo {volume}
  {293}},\ \bibinfo {pages} {485} (\bibinfo {year} {1896})}\BibitemShut
  {NoStop}%
\bibitem [{\citenamefont {Prigogine}(2000)}]{prigogine2000arrow}%
  \BibitemOpen
  \bibfield  {author} {\bibinfo {author} {\bibfnamefont {I.}~\bibnamefont
  {Prigogine}},\ }\bibinfo {title} {The arrow of time},\ in\ \href
  {https://doi.org/10.1142/9789812793621_0001} {\emph {\bibinfo {booktitle}
  {The Chaotic Universe}}},\ \bibinfo {editor} {edited by\ \bibinfo {editor}
  {\bibfnamefont {G.}~\bibnamefont {Kalai}}\ and\ \bibinfo {editor}
  {\bibfnamefont {G.~M.}\ \bibnamefont {Ziegler}}}\ (\bibinfo  {publisher}
  {World Scientific},\ \bibinfo {year} {2000})\ pp.\ \bibinfo {pages}
  {1--15}\BibitemShut {NoStop}%
\bibitem [{\citenamefont {Price}(2004)}]{price2004origins}%
  \BibitemOpen
  \bibfield  {author} {\bibinfo {author} {\bibfnamefont {H.}~\bibnamefont
  {Price}},\ }\bibinfo {title} {On the origins of the arrow of time: why there
  is still a puzzle about the low entropy past},\ in\ \href
  {https://www.wiley.com/en-us/Contemporary+Debates+in+Philosophy+of+Science-p-9781405101516}
  {\emph {\bibinfo {booktitle} {Contemporary debates in philosophy of
  science}}}\ (\bibinfo  {publisher} {Wiley-Blackwell},\ \bibinfo {year}
  {2004})\ pp.\ \bibinfo {pages} {219--239}\BibitemShut {NoStop}%
\bibitem [{\citenamefont {Goold}\ \emph {et~al.}(2016)\citenamefont {Goold},
  \citenamefont {Huber}, \citenamefont {Riera}, \citenamefont {del Rio},\ and\
  \citenamefont {Skrzypczyk}}]{Goold2016}%
  \BibitemOpen
  \bibfield  {author} {\bibinfo {author} {\bibfnamefont {J.}~\bibnamefont
  {Goold}}, \bibinfo {author} {\bibfnamefont {M.}~\bibnamefont {Huber}},
  \bibinfo {author} {\bibfnamefont {A.}~\bibnamefont {Riera}}, \bibinfo
  {author} {\bibfnamefont {L.}~\bibnamefont {del Rio}},\ and\ \bibinfo {author}
  {\bibfnamefont {P.}~\bibnamefont {Skrzypczyk}},\ }\href
  {https://doi.org/10.1088/1751-8113/49/14/143001} {\bibfield  {journal}
  {\bibinfo  {journal} {J. Phys. A Math. Theor.}\ }\textbf {\bibinfo {volume}
  {49}},\ \bibinfo {pages} {143001} (\bibinfo {year} {2016})}\BibitemShut
  {NoStop}%
\bibitem [{\citenamefont {Deffner}\ and\ \citenamefont
  {Campbell}(2019)}]{Deffner2019}%
  \BibitemOpen
  \bibfield  {author} {\bibinfo {author} {\bibfnamefont {S.}~\bibnamefont
  {Deffner}}\ and\ \bibinfo {author} {\bibfnamefont {S.}~\bibnamefont
  {Campbell}},\ }\href {https://doi.org/10.1088/2053-2571/ab21c6} {\emph
  {\bibinfo {title} {Quantum Thermodynamics. An introduction to the
  thermodynamics of quantum information}}},\ 2053-2571\ (\bibinfo  {publisher}
  {Morgan Claypool Publishers},\ \bibinfo {year} {2019})\BibitemShut {NoStop}%
\bibitem [{\citenamefont {Landi}\ and\ \citenamefont
  {Paternostro}(2021)}]{Landi2021}%
  \BibitemOpen
  \bibfield  {author} {\bibinfo {author} {\bibfnamefont {G.~T.}\ \bibnamefont
  {Landi}}\ and\ \bibinfo {author} {\bibfnamefont {M.}~\bibnamefont
  {Paternostro}},\ }\href {https://doi.org/10.1103/RevModPhys.93.035008}
  {\bibfield  {journal} {\bibinfo  {journal} {Rev. Mod. Phys.}\ }\textbf
  {\bibinfo {volume} {93}},\ \bibinfo {pages} {035008} (\bibinfo {year}
  {2021})}\BibitemShut {NoStop}%
\bibitem [{\citenamefont {{vom Ende}}\ and\ \citenamefont
  {Dirr}(2022)}]{vom2019d}%
  \BibitemOpen
  \bibfield  {author} {\bibinfo {author} {\bibfnamefont {F.}~\bibnamefont {{vom
  Ende}}}\ and\ \bibinfo {author} {\bibfnamefont {G.}~\bibnamefont {Dirr}},\
  }\href {https://doi.org/https://doi.org/10.1016/j.laa.2022.05.005} {\bibfield
   {journal} {\bibinfo  {journal} {Linear Algebra Appl.}\ }\textbf {\bibinfo
  {volume} {649}},\ \bibinfo {pages} {152} (\bibinfo {year}
  {2022})}\BibitemShut {NoStop}%
\bibitem [{\citenamefont {Janzing}\ \emph {et~al.}(2000)\citenamefont
  {Janzing}, \citenamefont {Wocjan}, \citenamefont {Zeier}, \citenamefont
  {Geiss},\ and\ \citenamefont {Beth}}]{Janzing2000}%
  \BibitemOpen
  \bibfield  {author} {\bibinfo {author} {\bibfnamefont {D.}~\bibnamefont
  {Janzing}}, \bibinfo {author} {\bibfnamefont {P.}~\bibnamefont {Wocjan}},
  \bibinfo {author} {\bibfnamefont {R.}~\bibnamefont {Zeier}}, \bibinfo
  {author} {\bibfnamefont {R.}~\bibnamefont {Geiss}},\ and\ \bibinfo {author}
  {\bibfnamefont {T.}~\bibnamefont {Beth}},\ }\href
  {https://doi.org/10.1023/A:1026422630734} {\bibfield  {journal} {\bibinfo
  {journal} {Int. J. Theor. Phys.}\ }\textbf {\bibinfo {volume} {39}},\
  \bibinfo {pages} {2717} (\bibinfo {year} {2000})}\BibitemShut {NoStop}%
\bibitem [{\citenamefont {Horodecki}\ and\ \citenamefont
  {Oppenheim}(2013)}]{horodecki2013quantumness}%
  \BibitemOpen
  \bibfield  {author} {\bibinfo {author} {\bibfnamefont {M.}~\bibnamefont
  {Horodecki}}\ and\ \bibinfo {author} {\bibfnamefont {J.}~\bibnamefont
  {Oppenheim}},\ }\href {https://doi.org/10.1142/S0217979213450197} {\bibfield
  {journal} {\bibinfo  {journal} {Int. J. Mod. Phys. B}\ }\textbf {\bibinfo
  {volume} {27}},\ \bibinfo {pages} {1345019} (\bibinfo {year}
  {2013})}\BibitemShut {NoStop}%
\bibitem [{\citenamefont {{Brand\~ao}}\ \emph {et~al.}(2015)\citenamefont
  {{Brand\~ao}}, \citenamefont {{Horodecki}}, \citenamefont {{Ng}},
  \citenamefont {{Oppenheim}},\ and\ \citenamefont
  {{Wehner}}}]{brandao2015second}%
  \BibitemOpen
  \bibfield  {author} {\bibinfo {author} {\bibfnamefont {F.~G.~S.~L.}\
  \bibnamefont {{Brand\~ao}}}, \bibinfo {author} {\bibfnamefont
  {M.}~\bibnamefont {{Horodecki}}}, \bibinfo {author} {\bibfnamefont
  {N.~H.~Y.}\ \bibnamefont {{Ng}}}, \bibinfo {author} {\bibfnamefont
  {J.}~\bibnamefont {{Oppenheim}}},\ and\ \bibinfo {author} {\bibfnamefont
  {S.}~\bibnamefont {{Wehner}}},\ }\href
  {https://doi.org/10.1073/pnas.1411728112} {\bibfield  {journal} {\bibinfo
  {journal} {Proc. Natl. Acad. Sci. U.S.A.}\ }\textbf {\bibinfo {volume}
  {112}},\ \bibinfo {pages} {3275} (\bibinfo {year} {2015})}\BibitemShut
  {NoStop}%
\bibitem [{\citenamefont {Lostaglio}(2019)}]{Lostaglio2019}%
  \BibitemOpen
  \bibfield  {author} {\bibinfo {author} {\bibfnamefont {M.}~\bibnamefont
  {Lostaglio}},\ }\href {https://doi.org/10.1088/1361-6633/ab46e5} {\bibfield
  {journal} {\bibinfo  {journal} {Rep. Prog. Phys.}\ }\textbf {\bibinfo
  {volume} {82}},\ \bibinfo {pages} {114001} (\bibinfo {year}
  {2019})}\BibitemShut {NoStop}%
\bibitem [{\citenamefont {Wilming}\ and\ \citenamefont
  {Gallego}(2017)}]{Wilming2017}%
  \BibitemOpen
  \bibfield  {author} {\bibinfo {author} {\bibfnamefont {H.}~\bibnamefont
  {Wilming}}\ and\ \bibinfo {author} {\bibfnamefont {R.}~\bibnamefont
  {Gallego}},\ }\href {https://doi.org/10.1103/PhysRevX.7.041033} {\bibfield
  {journal} {\bibinfo  {journal} {Phys. Rev. X}\ }\textbf {\bibinfo {volume}
  {7}},\ \bibinfo {pages} {041033} (\bibinfo {year} {2017})}\BibitemShut
  {NoStop}%
\bibitem [{\citenamefont {Lostaglio}\ \emph {et~al.}(2018)\citenamefont
  {Lostaglio}, \citenamefont {Alhambra},\ and\ \citenamefont
  {Perry}}]{Lostaglio2018elementarythermal}%
  \BibitemOpen
  \bibfield  {author} {\bibinfo {author} {\bibfnamefont {M.}~\bibnamefont
  {Lostaglio}}, \bibinfo {author} {\bibfnamefont {{\'{A}}.~M.}\ \bibnamefont
  {Alhambra}},\ and\ \bibinfo {author} {\bibfnamefont {C.}~\bibnamefont
  {Perry}},\ }\href {https://doi.org/10.22331/q-2018-02-08-52} {\bibfield
  {journal} {\bibinfo  {journal} {{Quantum}}\ }\textbf {\bibinfo {volume}
  {2}},\ \bibinfo {pages} {52} (\bibinfo {year} {2018})}\BibitemShut {NoStop}%
\bibitem [{\citenamefont {Sparaciari}\ \emph {et~al.}(2021)\citenamefont
  {Sparaciari}, \citenamefont {Goihl}, \citenamefont {Boes}, \citenamefont
  {Eisert},\ and\ \citenamefont {Ng}}]{Sparaciari2021}%
  \BibitemOpen
  \bibfield  {author} {\bibinfo {author} {\bibfnamefont {C.}~\bibnamefont
  {Sparaciari}}, \bibinfo {author} {\bibfnamefont {M.}~\bibnamefont {Goihl}},
  \bibinfo {author} {\bibfnamefont {P.}~\bibnamefont {Boes}}, \bibinfo {author}
  {\bibfnamefont {J.}~\bibnamefont {Eisert}},\ and\ \bibinfo {author}
  {\bibfnamefont {N.~H.~Y.}\ \bibnamefont {Ng}},\ }\href
  {https://doi.org/10.1038/s42005-020-00503-1} {\bibfield  {journal} {\bibinfo
  {journal} {Commun. Phys.}\ }\textbf {\bibinfo {volume} {4}},\ \bibinfo
  {pages} {3} (\bibinfo {year} {2021})}\BibitemShut {NoStop}%
\bibitem [{\citenamefont {Korzekwa}(2017)}]{Korzekwa2017}%
  \BibitemOpen
  \bibfield  {author} {\bibinfo {author} {\bibfnamefont {K.}~\bibnamefont
  {Korzekwa}},\ }\href {https://doi.org/10.1103/PhysRevA.95.052318} {\bibfield
  {journal} {\bibinfo  {journal} {Phys. Rev. A}\ }\textbf {\bibinfo {volume}
  {95}},\ \bibinfo {pages} {052318} (\bibinfo {year} {2017})}\BibitemShut
  {NoStop}%
\bibitem [{\citenamefont {Mazurek}\ and\ \citenamefont
  {Horodecki}(2018)}]{mazurek2018decomposability}%
  \BibitemOpen
  \bibfield  {author} {\bibinfo {author} {\bibfnamefont {P.}~\bibnamefont
  {Mazurek}}\ and\ \bibinfo {author} {\bibfnamefont {M.}~\bibnamefont
  {Horodecki}},\ }\href
  {https://iopscience.iop.org/article/10.1088/1367-2630/aac057} {\bibfield
  {journal} {\bibinfo  {journal} {New J. Phys.}\ }\textbf {\bibinfo {volume}
  {20}},\ \bibinfo {pages} {053040} (\bibinfo {year} {2018})}\BibitemShut
  {NoStop}%
\bibitem [{\citenamefont {Mazurek}(2019)}]{mazurek2019thermal}%
  \BibitemOpen
  \bibfield  {author} {\bibinfo {author} {\bibfnamefont {P.}~\bibnamefont
  {Mazurek}},\ }\href {https://doi.org/10.1103/PhysRevA.99.042110} {\bibfield
  {journal} {\bibinfo  {journal} {Phys. Rev. A}\ }\textbf {\bibinfo {volume}
  {99}},\ \bibinfo {pages} {042110} (\bibinfo {year} {2019})}\BibitemShut
  {NoStop}%
\bibitem [{\citenamefont {Vidal}(1999)}]{vidal1999entanglement}%
  \BibitemOpen
  \bibfield  {author} {\bibinfo {author} {\bibfnamefont {G.}~\bibnamefont
  {Vidal}},\ }\href {https://doi.org/10.1103/PhysRevLett.83.1046} {\bibfield
  {journal} {\bibinfo  {journal} {Phys. Rev. Lett.}\ }\textbf {\bibinfo
  {volume} {83}},\ \bibinfo {pages} {1046} (\bibinfo {year}
  {1999})}\BibitemShut {NoStop}%
\bibitem [{\citenamefont {Bhatia}(1996)}]{bhatia1996matrix}%
  \BibitemOpen
  \bibfield  {author} {\bibinfo {author} {\bibfnamefont {R.}~\bibnamefont
  {Bhatia}},\ }\href {https://books.google.pl/books?id=F4hRy1F1M6QC} {\emph
  {\bibinfo {title} {Matrix Analysis}}},\ Graduate Texts in Mathematics\
  (\bibinfo  {publisher} {Springer New York},\ \bibinfo {year}
  {1996})\BibitemShut {NoStop}%
\bibitem [{\citenamefont {Hardy}\ \emph {et~al.}(1952)\citenamefont {Hardy},
  \citenamefont {Littlewood},\ and\ \citenamefont
  {P{\'o}lya}}]{hardy1952inequalities}%
  \BibitemOpen
  \bibfield  {author} {\bibinfo {author} {\bibfnamefont {G.}~\bibnamefont
  {Hardy}}, \bibinfo {author} {\bibfnamefont {J.}~\bibnamefont {Littlewood}},\
  and\ \bibinfo {author} {\bibfnamefont {G.}~\bibnamefont {P{\'o}lya}},\ }\href
  {https://books.google.be/books?id=t1RCSP8YKt8C} {\emph {\bibinfo {title}
  {Inequalities}}},\ Cambridge Mathematical Library\ (\bibinfo  {publisher}
  {Cambridge University Press},\ \bibinfo {year} {1952})\BibitemShut {NoStop}%
\bibitem [{\citenamefont {Nielsen}(1999)}]{nielsen1999conditions}%
  \BibitemOpen
  \bibfield  {author} {\bibinfo {author} {\bibfnamefont {M.~A.}\ \bibnamefont
  {Nielsen}},\ }\href {https://doi.org/10.1103/PhysRevLett.83.436} {\bibfield
  {journal} {\bibinfo  {journal} {Phys. Rev. Lett.}\ }\textbf {\bibinfo
  {volume} {83}},\ \bibinfo {pages} {436} (\bibinfo {year} {1999})}\BibitemShut
  {NoStop}%
\bibitem [{\citenamefont {Baumgratz}\ \emph {et~al.}(2014)\citenamefont
  {Baumgratz}, \citenamefont {Cramer},\ and\ \citenamefont
  {Plenio}}]{Plenio2014}%
  \BibitemOpen
  \bibfield  {author} {\bibinfo {author} {\bibfnamefont {T.}~\bibnamefont
  {Baumgratz}}, \bibinfo {author} {\bibfnamefont {M.}~\bibnamefont {Cramer}},\
  and\ \bibinfo {author} {\bibfnamefont {M.~B.}\ \bibnamefont {Plenio}},\
  }\href {https://doi.org/10.1103/PhysRevLett.113.140401} {\bibfield  {journal}
  {\bibinfo  {journal} {Phys. Rev. Lett.}\ }\textbf {\bibinfo {volume} {113}},\
  \bibinfo {pages} {140401} (\bibinfo {year} {2014})}\BibitemShut {NoStop}%
\bibitem [{\citenamefont {Du}\ \emph {et~al.}(2015)\citenamefont {Du},
  \citenamefont {Bai},\ and\ \citenamefont {Guo}}]{Du2015}%
  \BibitemOpen
  \bibfield  {author} {\bibinfo {author} {\bibfnamefont {S.}~\bibnamefont
  {Du}}, \bibinfo {author} {\bibfnamefont {Z.}~\bibnamefont {Bai}},\ and\
  \bibinfo {author} {\bibfnamefont {Y.}~\bibnamefont {Guo}},\ }\href
  {https://doi.org/10.1103/PhysRevA.91.052120} {\bibfield  {journal} {\bibinfo
  {journal} {Phys. Rev. A}\ }\textbf {\bibinfo {volume} {91}},\ \bibinfo
  {pages} {052120} (\bibinfo {year} {2015})}\BibitemShut {NoStop}%
\bibitem [{\citenamefont {{\.Z}yczkowski}\ and\ \citenamefont
  {Bengtsson}(2002)}]{zyczkowski2002}%
  \BibitemOpen
  \bibfield  {author} {\bibinfo {author} {\bibfnamefont {K.}~\bibnamefont
  {{\.Z}yczkowski}}\ and\ \bibinfo {author} {\bibfnamefont {I.}~\bibnamefont
  {Bengtsson}},\ }\href
  {https://www.sciencedirect.com/science/article/pii/S0003491601962013}
  {\bibfield  {journal} {\bibinfo  {journal} {Ann. Phys. (N. Y.)}\ }\textbf
  {\bibinfo {volume} {295}},\ \bibinfo {pages} {115} (\bibinfo {year}
  {2002})}\BibitemShut {NoStop}%
\bibitem [{\citenamefont {{Horodecki}}\ and\ \citenamefont
  {{Oppenheim}}(2013)}]{horodecki2013fundamental}%
  \BibitemOpen
  \bibfield  {author} {\bibinfo {author} {\bibfnamefont {M.}~\bibnamefont
  {{Horodecki}}}\ and\ \bibinfo {author} {\bibfnamefont {J.}~\bibnamefont
  {{Oppenheim}}},\ }\href {https://www.nature.com/articles/ncomms3059}
  {\bibfield  {journal} {\bibinfo  {journal} {Nat. Commun.}\ }\textbf {\bibinfo
  {volume} {4}},\ \bibinfo {eid} {2059} (\bibinfo {year} {2013})}\BibitemShut
  {NoStop}%
\bibitem [{\citenamefont {Faist}\ \emph {et~al.}(2015)\citenamefont {Faist},
  \citenamefont {Oppenheim},\ and\ \citenamefont {Renner}}]{Faist_2015}%
  \BibitemOpen
  \bibfield  {author} {\bibinfo {author} {\bibfnamefont {P.}~\bibnamefont
  {Faist}}, \bibinfo {author} {\bibfnamefont {J.}~\bibnamefont {Oppenheim}},\
  and\ \bibinfo {author} {\bibfnamefont {R.}~\bibnamefont {Renner}},\ }\href
  {https://doi.org/10.1088/1367-2630/17/4/043003} {\bibfield  {journal}
  {\bibinfo  {journal} {New J. Phys.}\ }\textbf {\bibinfo {volume} {17}},\
  \bibinfo {pages} {043003} (\bibinfo {year} {2015})}\BibitemShut {NoStop}%
\bibitem [{\citenamefont {Korzekwa}(2016)}]{Korzekwathesis}%
  \BibitemOpen
  \bibfield  {author} {\bibinfo {author} {\bibfnamefont {K.}~\bibnamefont
  {Korzekwa}},\ }\href {http://hdl.handle.net/10044/1/43343} {\emph {\bibinfo
  {title} {Coherence, thermodynamics and uncertainty relations}}},\ PhD thesis\
  (\bibinfo  {publisher} {Imperial College London},\ \bibinfo {year}
  {2016})\BibitemShut {NoStop}%
\bibitem [{\citenamefont {Marshall}\ \emph {et~al.}(1979)\citenamefont
  {Marshall}, \citenamefont {Olkin},\ and\ \citenamefont
  {Arnold}}]{marshall1979inequalities}%
  \BibitemOpen
  \bibfield  {author} {\bibinfo {author} {\bibfnamefont {A.~W.}\ \bibnamefont
  {Marshall}}, \bibinfo {author} {\bibfnamefont {I.}~\bibnamefont {Olkin}},\
  and\ \bibinfo {author} {\bibfnamefont {B.~C.}\ \bibnamefont {Arnold}},\
  }\href {https://link.springer.com/book/10.1007/978-0-387-68276-1} {\emph
  {\bibinfo {title} {Inequalities: theory of majorization and its
  applications}}},\ Vol.\ \bibinfo {volume} {143}\ (\bibinfo  {publisher}
  {Springer},\ \bibinfo {year} {1979})\BibitemShut {NoStop}%
\bibitem [{\citenamefont {Ruch}\ \emph {et~al.}(1980)\citenamefont {Ruch},
  \citenamefont {Schranner},\ and\ \citenamefont {Seligman}}]{Rusch}%
  \BibitemOpen
  \bibfield  {author} {\bibinfo {author} {\bibfnamefont {E.}~\bibnamefont
  {Ruch}}, \bibinfo {author} {\bibfnamefont {R.}~\bibnamefont {Schranner}},\
  and\ \bibinfo {author} {\bibfnamefont {T.~H.}\ \bibnamefont {Seligman}},\
  }\href {https://doi.org/https://doi.org/10.1016/0022-247X(80)90075-X}
  {\bibfield  {journal} {\bibinfo  {journal} {J. Math. Anal. Appl.}\ }\textbf
  {\bibinfo {volume} {76}},\ \bibinfo {pages} {222 } (\bibinfo {year}
  {1980})}\BibitemShut {NoStop}%
\bibitem [{\citenamefont {Cicalese}\ and\ \citenamefont
  {Vaccaro}(2002)}]{cicalese2002}%
  \BibitemOpen
  \bibfield  {author} {\bibinfo {author} {\bibfnamefont {F.}~\bibnamefont
  {Cicalese}}\ and\ \bibinfo {author} {\bibfnamefont {U.}~\bibnamefont
  {Vaccaro}},\ }\href {https://ieeexplore.ieee.org/document/992785} {\bibfield
  {journal} {\bibinfo  {journal} {IEEE Trans. Inf. Theory}\ }\textbf {\bibinfo
  {volume} {48}},\ \bibinfo {pages} {933} (\bibinfo {year} {2002})}\BibitemShut
  {NoStop}%
\bibitem [{\citenamefont {Ruch}\ \emph {et~al.}(1978)\citenamefont {Ruch},
  \citenamefont {Schranner},\ and\ \citenamefont {Seligman}}]{Rusch1978}%
  \BibitemOpen
  \bibfield  {author} {\bibinfo {author} {\bibfnamefont {E.}~\bibnamefont
  {Ruch}}, \bibinfo {author} {\bibfnamefont {R.}~\bibnamefont {Schranner}},\
  and\ \bibinfo {author} {\bibfnamefont {T.~H.}\ \bibnamefont {Seligman}},\
  }\href {https://doi.org/10.1063/1.436364} {\bibfield  {journal} {\bibinfo
  {journal} {J. Chem. Phys}\ }\textbf {\bibinfo {volume} {69}},\ \bibinfo
  {pages} {386} (\bibinfo {year} {1978})}\BibitemShut {NoStop}%
\bibitem [{\citenamefont {Streltsov}\ \emph {et~al.}(2016)\citenamefont
  {Streltsov}, \citenamefont {Chitambar}, \citenamefont {Rana}, \citenamefont
  {Bera}, \citenamefont {Winter},\ and\ \citenamefont
  {Lewenstein}}]{Streltsov2016}%
  \BibitemOpen
  \bibfield  {author} {\bibinfo {author} {\bibfnamefont {A.}~\bibnamefont
  {Streltsov}}, \bibinfo {author} {\bibfnamefont {E.}~\bibnamefont
  {Chitambar}}, \bibinfo {author} {\bibfnamefont {S.}~\bibnamefont {Rana}},
  \bibinfo {author} {\bibfnamefont {M.~N.}\ \bibnamefont {Bera}}, \bibinfo
  {author} {\bibfnamefont {A.}~\bibnamefont {Winter}},\ and\ \bibinfo {author}
  {\bibfnamefont {M.}~\bibnamefont {Lewenstein}},\ }\href
  {https://doi.org/10.1103/PhysRevLett.116.240405} {\bibfield  {journal}
  {\bibinfo  {journal} {Phys. Rev. Lett.}\ }\textbf {\bibinfo {volume} {116}},\
  \bibinfo {pages} {240405} (\bibinfo {year} {2016})}\BibitemShut {NoStop}%
\bibitem [{\citenamefont {Streltsov}\ \emph {et~al.}(2017)\citenamefont
  {Streltsov}, \citenamefont {Rana}, \citenamefont {Boes},\ and\ \citenamefont
  {Eisert}}]{Streltsov2017}%
  \BibitemOpen
  \bibfield  {author} {\bibinfo {author} {\bibfnamefont {A.}~\bibnamefont
  {Streltsov}}, \bibinfo {author} {\bibfnamefont {S.}~\bibnamefont {Rana}},
  \bibinfo {author} {\bibfnamefont {P.}~\bibnamefont {Boes}},\ and\ \bibinfo
  {author} {\bibfnamefont {J.}~\bibnamefont {Eisert}},\ }\href
  {https://doi.org/10.1103/PhysRevLett.119.140402} {\bibfield  {journal}
  {\bibinfo  {journal} {Phys. Rev. Lett.}\ }\textbf {\bibinfo {volume} {119}},\
  \bibinfo {pages} {140402} (\bibinfo {year} {2017})}\BibitemShut {NoStop}%
\bibitem [{\citenamefont {Horodecki}\ \emph {et~al.}(2009)\citenamefont
  {Horodecki}, \citenamefont {Horodecki}, \citenamefont {Horodecki},\ and\
  \citenamefont {Horodecki}}]{Horodecki2009}%
  \BibitemOpen
  \bibfield  {author} {\bibinfo {author} {\bibfnamefont {R.}~\bibnamefont
  {Horodecki}}, \bibinfo {author} {\bibfnamefont {P.}~\bibnamefont
  {Horodecki}}, \bibinfo {author} {\bibfnamefont {M.}~\bibnamefont
  {Horodecki}},\ and\ \bibinfo {author} {\bibfnamefont {K.}~\bibnamefont
  {Horodecki}},\ }\href {https://doi.org/10.1103/RevModPhys.81.865} {\bibfield
  {journal} {\bibinfo  {journal} {Rev. Mod. Phys.}\ }\textbf {\bibinfo {volume}
  {81}},\ \bibinfo {pages} {865} (\bibinfo {year} {2009})}\BibitemShut
  {NoStop}%
\bibitem [{\citenamefont {Kollas}(2016)}]{Kollas_master}%
  \BibitemOpen
  \bibfield  {author} {\bibinfo {author} {\bibfnamefont {N.~K.}\ \bibnamefont
  {Kollas}},\ }\href {https://doi.org/10.48550/ARXIV.1603.09686} {\bibinfo
  {title} {Thermodynamics of bipartite entanglement}} (\bibinfo {year}
  {2016})\BibitemShut {NoStop}%
\bibitem [{\citenamefont {Hill}\ and\ \citenamefont {Wootters}(1997)}]{HW97}%
  \BibitemOpen
  \bibfield  {author} {\bibinfo {author} {\bibfnamefont {S.~A.}\ \bibnamefont
  {Hill}}\ and\ \bibinfo {author} {\bibfnamefont {W.~K.}\ \bibnamefont
  {Wootters}},\ }\href {https://doi.org/10.1103/PhysRevLett.78.5022} {\bibfield
   {journal} {\bibinfo  {journal} {Phys. Rev. Lett.}\ }\textbf {\bibinfo
  {volume} {78}},\ \bibinfo {pages} {5022} (\bibinfo {year}
  {1997})}\BibitemShut {NoStop}%
\bibitem [{\citenamefont {Vidal}\ \emph {et~al.}(2000)\citenamefont {Vidal},
  \citenamefont {Jonathan},\ and\ \citenamefont {Nielsen}}]{Vidal2000}%
  \BibitemOpen
  \bibfield  {author} {\bibinfo {author} {\bibfnamefont {G.}~\bibnamefont
  {Vidal}}, \bibinfo {author} {\bibfnamefont {D.}~\bibnamefont {Jonathan}},\
  and\ \bibinfo {author} {\bibfnamefont {M.~A.}\ \bibnamefont {Nielsen}},\
  }\href {https://doi.org/10.1103/PhysRevA.62.012304} {\bibfield  {journal}
  {\bibinfo  {journal} {Phys. Rev. A}\ }\textbf {\bibinfo {volume} {62}},\
  \bibinfo {pages} {012304} (\bibinfo {year} {2000})}\BibitemShut {NoStop}%
\bibitem [{\citenamefont {Zhu}\ \emph {et~al.}(2017)\citenamefont {Zhu},
  \citenamefont {Ma}, \citenamefont {Cao}, \citenamefont {Fei},\ and\
  \citenamefont {Vedral}}]{ZhuEtAl2017coherence}%
  \BibitemOpen
  \bibfield  {author} {\bibinfo {author} {\bibfnamefont {H.}~\bibnamefont
  {Zhu}}, \bibinfo {author} {\bibfnamefont {Z.}~\bibnamefont {Ma}}, \bibinfo
  {author} {\bibfnamefont {Z.}~\bibnamefont {Cao}}, \bibinfo {author}
  {\bibfnamefont {S.-M.}\ \bibnamefont {Fei}},\ and\ \bibinfo {author}
  {\bibfnamefont {V.}~\bibnamefont {Vedral}},\ }\href
  {https://doi.org/10.1103/PhysRevA.96.032316} {\bibfield  {journal} {\bibinfo
  {journal} {Phys. Rev. A}\ }\textbf {\bibinfo {volume} {96}},\ \bibinfo
  {pages} {032316} (\bibinfo {year} {2017})}\BibitemShut {NoStop}%
\bibitem [{\citenamefont {Cunden}\ \emph {et~al.}(2021)\citenamefont {Cunden},
  \citenamefont {Facchi}, \citenamefont {Florio},\ and\ \citenamefont
  {Gramegna}}]{Cunden2021}%
  \BibitemOpen
  \bibfield  {author} {\bibinfo {author} {\bibfnamefont {F.~D.}\ \bibnamefont
  {Cunden}}, \bibinfo {author} {\bibfnamefont {P.}~\bibnamefont {Facchi}},
  \bibinfo {author} {\bibfnamefont {G.}~\bibnamefont {Florio}},\ and\ \bibinfo
  {author} {\bibfnamefont {G.}~\bibnamefont {Gramegna}},\ }\href
  {https://doi.org/10.1103/PhysRevA.103.022401} {\bibfield  {journal} {\bibinfo
   {journal} {Phys. Rev. A}\ }\textbf {\bibinfo {volume} {103}},\ \bibinfo
  {pages} {022401} (\bibinfo {year} {2021})}\BibitemShut {NoStop}%
\bibitem [{\citenamefont {Alhambra}\ \emph {et~al.}(2016)\citenamefont
  {Alhambra}, \citenamefont {Oppenheim},\ and\ \citenamefont {Perry}}]{AOP16}%
  \BibitemOpen
  \bibfield  {author} {\bibinfo {author} {\bibfnamefont {A.~M.}\ \bibnamefont
  {Alhambra}}, \bibinfo {author} {\bibfnamefont {J.}~\bibnamefont
  {Oppenheim}},\ and\ \bibinfo {author} {\bibfnamefont {C.}~\bibnamefont
  {Perry}},\ }\href {https://doi.org/10.1103/PhysRevX.6.041016} {\bibfield
  {journal} {\bibinfo  {journal} {Phys. Rev. X}\ }\textbf {\bibinfo {volume}
  {6}},\ \bibinfo {pages} {041016} (\bibinfo {year} {2016})}\BibitemShut
  {NoStop}%
\bibitem [{\citenamefont {de~Oliveira~Junior}(2022)}]{adeoliveirajunior2022}%
  \BibitemOpen
  \bibfield  {author} {\bibinfo {author} {\bibfnamefont {A.}~\bibnamefont
  {de~Oliveira~Junior}},\ }\href {https://github.com/adeoliveirajunior} {\emph
  {\bibinfo {title} {Thermal cones}}},\ GitHub\ (\bibinfo {year}
  {2022})\BibitemShut {NoStop}%
\bibitem [{\citenamefont {Cunden}\ \emph {et~al.}(2020)\citenamefont {Cunden},
  \citenamefont {Facchi}, \citenamefont {Florio},\ and\ \citenamefont
  {Gramegna}}]{Cunden_2020}%
  \BibitemOpen
  \bibfield  {author} {\bibinfo {author} {\bibfnamefont {F.~D.}\ \bibnamefont
  {Cunden}}, \bibinfo {author} {\bibfnamefont {P.}~\bibnamefont {Facchi}},
  \bibinfo {author} {\bibfnamefont {G.}~\bibnamefont {Florio}},\ and\ \bibinfo
  {author} {\bibfnamefont {G.}~\bibnamefont {Gramegna}},\ }\href
  {https://doi.org/10.1088/1751-8121/ab7b21} {\bibfield  {journal} {\bibinfo
  {journal} {J. Phys. A Math. Theor.}\ }\textbf {\bibinfo {volume} {53}},\
  \bibinfo {pages} {175303} (\bibinfo {year} {2020})}\BibitemShut {NoStop}%
\bibitem [{\citenamefont {Pusz}\ and\ \citenamefont
  {Woronowicz}(1978)}]{Pusz1978}%
  \BibitemOpen
  \bibfield  {author} {\bibinfo {author} {\bibfnamefont {W.}~\bibnamefont
  {Pusz}}\ and\ \bibinfo {author} {\bibfnamefont {S.~L.}\ \bibnamefont
  {Woronowicz}},\ }\href {https://doi.org/10.1007/BF01614224} {\bibfield
  {journal} {\bibinfo  {journal} {Commun. Math. Phys.}\ }\textbf {\bibinfo
  {volume} {58}},\ \bibinfo {pages} {273} (\bibinfo {year} {1978})}\BibitemShut
  {NoStop}%
\bibitem [{\citenamefont {Lenard}(1978)}]{Lenard1978}%
  \BibitemOpen
  \bibfield  {author} {\bibinfo {author} {\bibfnamefont {A.}~\bibnamefont
  {Lenard}},\ }\href {https://doi.org/10.1007/BF01011769} {\bibfield  {journal}
  {\bibinfo  {journal} {J. Stat. Phys.}\ }\textbf {\bibinfo {volume} {19}},\
  \bibinfo {pages} {575} (\bibinfo {year} {1978})}\BibitemShut {NoStop}%
\bibitem [{\citenamefont {Iwata}(1962)}]{iwata1962}%
  \BibitemOpen
  \bibfield  {author} {\bibinfo {author} {\bibfnamefont {S.}~\bibnamefont
  {Iwata}},\ }\href {https://www.jstor.org/stable/2688198?seq=1} {\bibfield
  {journal} {\bibinfo  {journal} {Math. Mag.}\ }\textbf {\bibinfo {volume}
  {35}},\ \bibinfo {pages} {273} (\bibinfo {year} {1962})}\BibitemShut
  {NoStop}%
\bibitem [{\citenamefont {B{\"u}eler}\ \emph {et~al.}(2000)\citenamefont
  {B{\"u}eler}, \citenamefont {Enge},\ and\ \citenamefont
  {Fukuda}}]{buller1998}%
  \BibitemOpen
  \bibfield  {author} {\bibinfo {author} {\bibfnamefont {B.}~\bibnamefont
  {B{\"u}eler}}, \bibinfo {author} {\bibfnamefont {A.}~\bibnamefont {Enge}},\
  and\ \bibinfo {author} {\bibfnamefont {K.}~\bibnamefont {Fukuda}},\ }\bibinfo
  {title} {Exact volume computation for polytopes: A practical study},\ in\
  \href {https://doi.org/10.1007/978-3-0348-8438-9_6} {\emph {\bibinfo
  {booktitle} {Polytopes --- Combinatorics and Computation}}},\ \bibinfo
  {editor} {edited by\ \bibinfo {editor} {\bibfnamefont {G.}~\bibnamefont
  {Kalai}}\ and\ \bibinfo {editor} {\bibfnamefont {G.~M.}\ \bibnamefont
  {Ziegler}}}\ (\bibinfo  {publisher} {Birkh{\"a}user Basel},\ \bibinfo
  {address} {Basel},\ \bibinfo {year} {2000})\ pp.\ \bibinfo {pages}
  {131--154}\BibitemShut {NoStop}%
\bibitem [{\citenamefont {Braden}(1986)}]{braden1986}%
  \BibitemOpen
  \bibfield  {author} {\bibinfo {author} {\bibfnamefont {B.}~\bibnamefont
  {Braden}},\ }\href {https://doi.org/10.1080/07468342.1986.11972974}
  {\bibfield  {journal} {\bibinfo  {journal} {Coll. Math. J.}\ }\textbf
  {\bibinfo {volume} {17}},\ \bibinfo {pages} {326} (\bibinfo {year}
  {1986})}\BibitemShut {NoStop}%
\bibitem [{\citenamefont {{\.Z}yczkowski}\ and\ \citenamefont
  {Sommers}(2001)}]{zyczkowski2001induced}%
  \BibitemOpen
  \bibfield  {author} {\bibinfo {author} {\bibfnamefont {K.}~\bibnamefont
  {{\.Z}yczkowski}}\ and\ \bibinfo {author} {\bibfnamefont {H.-J.}\
  \bibnamefont {Sommers}},\ }\href
  {https://iopscience.iop.org/article/10.1088/0305-4470/34/35/335} {\bibfield
  {journal} {\bibinfo  {journal} {J. Phys. A: Math. Theor.}\ }\textbf {\bibinfo
  {volume} {34}},\ \bibinfo {pages} {7111} (\bibinfo {year}
  {2001})}\BibitemShut {NoStop}%
\bibitem [{\citenamefont {Lostaglio}\ \emph {et~al.}(2015)\citenamefont
  {Lostaglio}, \citenamefont {Korzekwa}, \citenamefont {Jennings},\ and\
  \citenamefont {Rudolph}}]{LostaglioKorzekwaCoherencePRX}%
  \BibitemOpen
  \bibfield  {author} {\bibinfo {author} {\bibfnamefont {M.}~\bibnamefont
  {Lostaglio}}, \bibinfo {author} {\bibfnamefont {K.}~\bibnamefont {Korzekwa}},
  \bibinfo {author} {\bibfnamefont {D.}~\bibnamefont {Jennings}},\ and\
  \bibinfo {author} {\bibfnamefont {T.}~\bibnamefont {Rudolph}},\ }\href
  {https://doi.org/10.1103/PhysRevX.5.021001} {\bibfield  {journal} {\bibinfo
  {journal} {Phys. Rev. X}\ }\textbf {\bibinfo {volume} {5}},\ \bibinfo {pages}
  {021001} (\bibinfo {year} {2015})}\BibitemShut {NoStop}%
\bibitem [{\citenamefont {Gour}\ \emph {et~al.}(2018)\citenamefont {Gour},
  \citenamefont {Jennings}, \citenamefont {Buscemi}, \citenamefont {Duan},\
  and\ \citenamefont {Marvian}}]{Gour_2018}%
  \BibitemOpen
  \bibfield  {author} {\bibinfo {author} {\bibfnamefont {G.}~\bibnamefont
  {Gour}}, \bibinfo {author} {\bibfnamefont {D.}~\bibnamefont {Jennings}},
  \bibinfo {author} {\bibfnamefont {F.}~\bibnamefont {Buscemi}}, \bibinfo
  {author} {\bibfnamefont {R.}~\bibnamefont {Duan}},\ and\ \bibinfo {author}
  {\bibfnamefont {I.}~\bibnamefont {Marvian}},\ }\href
  {https://www.nature.com/articles/s41467-018-06261-7} {\bibfield  {journal}
  {\bibinfo  {journal} {Nat. Commun.}\ }\textbf {\bibinfo {volume} {9}}
  (\bibinfo {year} {2018})}\BibitemShut {NoStop}%
\bibitem [{\citenamefont {Lostaglio}\ and\ \citenamefont
  {Korzekwa}(2022)}]{lostaglio2021continuous}%
  \BibitemOpen
  \bibfield  {author} {\bibinfo {author} {\bibfnamefont {M.}~\bibnamefont
  {Lostaglio}}\ and\ \bibinfo {author} {\bibfnamefont {K.}~\bibnamefont
  {Korzekwa}},\ }\href {https://doi.org/10.1103/PhysRevA.106.012426} {\bibfield
   {journal} {\bibinfo  {journal} {Phys. Rev. A}\ }\textbf {\bibinfo {volume}
  {106}},\ \bibinfo {pages} {012426} (\bibinfo {year} {2022})}\BibitemShut
  {NoStop}%
\bibitem [{\citenamefont {Korzekwa}\ and\ \citenamefont
  {Lostaglio}(2022)}]{Korzekwa2022}%
  \BibitemOpen
  \bibfield  {author} {\bibinfo {author} {\bibfnamefont {K.}~\bibnamefont
  {Korzekwa}}\ and\ \bibinfo {author} {\bibfnamefont {M.}~\bibnamefont
  {Lostaglio}},\ }\href {https://doi.org/10.1103/PhysRevLett.129.040602}
  {\bibfield  {journal} {\bibinfo  {journal} {Phys. Rev. Lett.}\ }\textbf
  {\bibinfo {volume} {129}},\ \bibinfo {pages} {040602} (\bibinfo {year}
  {2022})}\BibitemShut {NoStop}%
\bibitem [{\citenamefont {Giacomo~Livan}(2018)}]{RandomMatrixBook}%
  \BibitemOpen
  \bibfield  {author} {\bibinfo {author} {\bibfnamefont {P.~V.}\ \bibnamefont
  {Giacomo~Livan}, \bibfnamefont {Marcel~Novaes}},\ }\href
  {https://link.springer.com/book/10.1007/978-3-319-70885-0} {\emph {\bibinfo
  {title} {Introduction to Random Matrices: Theory and Practice}}}\ (\bibinfo
  {publisher} {Springer},\ \bibinfo {year} {2018})\BibitemShut {NoStop}%
\bibitem [{\citenamefont {Dumitriu}\ and\ \citenamefont
  {Edelman}(2002)}]{DE02}%
  \BibitemOpen
  \bibfield  {author} {\bibinfo {author} {\bibfnamefont {I.}~\bibnamefont
  {Dumitriu}}\ and\ \bibinfo {author} {\bibfnamefont {A.}~\bibnamefont
  {Edelman}},\ }\href {https://doi.org/10.1063/1.1507823} {\bibfield  {journal}
  {\bibinfo  {journal} {J Math Phys}\ }\textbf {\bibinfo {volume} {43}},\
  \bibinfo {pages} {5830} (\bibinfo {year} {2002})}\BibitemShut {NoStop}%
\end{thebibliography}%


%

\end{document}